\newtheorem{lemma}{\textbf{Lemma}}
\newcommand{\tabincell}[2]{\begin{tabular}{@{}#1@{}}#2\end{tabular}}
\definecolor{myGreen}{rgb}{0.945,0.972,0.980}
\begin{document}
\title{S2S-WTV: Seismic Data Noise Attenuation Using Weighted Total Variation Regularized Self-Supervised Learning
\thanks{(Zitai Xu and Yisi Luo contribute equally to this work.)}
\thanks{The authors are with the School of Mathematics and Statistics, Xi'an Jiaotong University, Xi'an, P.R.China.}}
\author{Zitai Xu,
Yisi Luo,
Bangyu Wu, \IEEEmembership{Member, IEEE},
Deyu Meng, \IEEEmembership{Member, IEEE}}
\maketitle
\begin{abstract}
Seismic data often undergoes severe noise due to environmental factors, which seriously affects subsequent applications. Traditional hand-crafted denoisers such as filters and regularizations utilize interpretable domain knowledge to design generalizable denoising techniques, while their representation capacities may be inferior to deep learning denoisers, which can learn complex and representative denoising mappings from abundant training pairs. However, due to the scarcity of high-quality training pairs, deep learning denoisers may sustain some generalization issues over various scenarios. In this work, we propose a self-supervised method that combines the capacities of deep denoiser and the generalization abilities of hand-crafted regularization for seismic data random noise attenuation. Specifically, we leverage the Self2Self (S2S) learning framework with a trace-wise masking strategy for seismic data denoising by solely using the observed noisy data. Parallelly, we suggest the weighted total variation (WTV) to further capture the horizontal local smooth structure of seismic data. Our method, dubbed as S2S-WTV, enjoys both high representation abilities brought from the self-supervised deep network and good generalization abilities of the hand-crafted WTV regularizer and the self-supervised nature. Therefore, our method can more effectively and stably remove the random noise and preserve the details and edges of the clean signal. To tackle the S2S-WTV optimization model, we introduce an alternating direction multiplier method (ADMM)-based algorithm. Extensive experiments on synthetic and field noisy seismic data demonstrate the effectiveness of our method as compared with state-of-the-art traditional and deep learning-based seismic data denoising methods. 
\end{abstract}
\begin{IEEEkeywords}
Seismic data,
noise attenuation,
self-supervised,
deep convolutional network,
ADMM.
\end{IEEEkeywords}
\IEEEpeerreviewmaketitle
\section{Introduction} \label{sec:Int}
\IEEEPARstart{S}{eismic} data has broad application prospects in many geophysical applications such as event detection\cite{TGRS_19_detection}, inversion\cite{TGRS_20_inversion}, seismic diffraction separation and imaging\cite{TGRS_22_diff}, etc. However, seismic data often suffer from noise degradation due to environmental factors like ocean waves and wind, or mechanical failures in receiver arrays, which seriously affect subsequent applications. Therefore, seismic data noise attenuation is an important pre-processing step that benefits many geophysical applications.\par 
In the era before deep learning, seismic data denoising methods were mainly based on hand-crafted denoisers such as transforms, filters, and regularizations. The transform-based methods utilized different transforms such as wavelet transform\cite{wavelet_1,wavelet_2}, Fourier transform\cite{FT}, and shearlet transform\cite{shearlet} to obtain sparse features and conduct noise attenuation on the transformed coefficients. The filter-based methods, such as grey filter\cite{filter_IGARSS}, time-frequency peak filter\cite{GRSL_TFPF_1,TGRS_TFPF_2}, and fission particle filter\cite{GRSL_15_filter} were also powerful tools for seismic data denoising. Recently, there have emerged many regularization-based methods for seismic noise attenuation. The most widely considered one is the low-rank regularization\cite{Geoph_NNM,TGRS_17_lowrank,GRSL_20_lowrank,TGRS_20_lowrank,TGRS_21_tensor}. Along this line, many techniques such as convex or nonconvex approximations of rank\cite{TGRS_NNM,JSTARS_19_lowrank,GRSL_WNNM}, Hankel low-rank approximation\cite{TGRS_Hankel_1,TGRS_Hankel_2}, low-rank factorization\cite{TGRS_17_fac}, and tensor singular value decomposition\cite{JSTARS_tSVD} were proposed to explore the hidden low-rank structures of seismic data for denoising. Besides low-rankness, other hand-crafted regularizers and techniques such as sparse coding\cite{Geo_SC,Geo_SC_2}, total variation\cite{GRSL_TV,TGRS_TV,JAG_TV,GRSL_TV_2,TGRS_DTGV}, non-local similarity\cite{JAG_nonlocal_0,JAG_nonlocal,TGRS_21_PCA}, and dictionary learning\cite{GP_DL,GRSL_DL,GJI_DL,JAG_DL,TGRS_19_DL} were also extensively studied for seismic data noise attenuation. These hand-crafted denoisers, which were based on interpretable domain knowledge, enjoy good generalization abilities for different datasets and noises.\par 
In the past few years, deep learning has emerged as a popular tool for seismic data noise attenuation. The pioneer works, e.g., \cite{TGRS_19_DNN,GRSL_19_ResNet,TGRS_20_3D}, mainly utilized deep neural networks (DNNs) with pairs of clean and noisy seismic data to supervisedly train the denoising network, which could learn a good deep denoising prior from the big data. Later works proposed to learn more realistic and representative denoising mappings via improved network structures, such as residual network\cite{TGRS_21_ResNet,AG_Res}, multiscale network\cite{TGRS_22_multi,TGRS_20_multi,TGRS_22_multi_2}, feature fusion network\cite{TGRS_fusion}, and generative adversarial network\cite{TGRS_21_GAN,TGRS_GAN_21}. Meanwhile, many modern learning strategies were exploited to enhance the noise attenuation ability of deep networks, such as loss balance\cite{TGRS_21_LB}, pre-trained model adaptation\cite{RS_Pretrained}, and diffusion model\cite{TGRS_DM}. Due to the powerful representation abilities of DNNs, these deep denoisers can learn complex and effective denoising mappings and thus obtain impressive results in the training domain. However, these methods rely on collecting a large number of training data to surpass hand-crafted denoisers, in which, however, high-quality seismic noisy-clean data pairs are always hard to collect due to the lack of data sources and complex field noisy scenarios, which inevitably limits the applicabilities of deep learning seismic denoising methods in out-of-distribution field datasets.\par
In general, DNNs have sufficient representation abilities, but may lack generalization guarantee, while hand-crafted denoisers have good generalization abilities, but may lack strong representation capacities as compared to DNNs. Thus, it is interesting and imperative to combine the expressiveness of DNNs and the generalization abilities of hand-crafted regularizations to more effectively attenuate random noise in seismic data. To meet this pressing challenge, we propose a new self-supervised seismic data noise attenuation method, which takes advantage of both the representation abilities of DNNs and the generalization abilities of hand-crafted regularizers. Specifically, we leverage the self-supervised dropout DNN (Self2Self, or simply S2S)\cite{S2S} for seismic data noise attenuation by solely using the noisy observation without other training data. Different from the classical S2S, we design a trace-wise Bernoulli masking strategy to more effectively remove field noise in seismic data. Meanwhile, we suggest the hand-crafted weighted total variation (WTV) regularization under the S2S framework to capture the local smooth structures of seismic data. Our method, dubbed as S2S-WTV, combines the self-supervised DNN and hand-crafted WTV, and thus enjoys both the expressiveness of DNNs and the generalizability and interpretability of domain knowledge. By virtue of such a combination, our method can well handle complex field noise and robustly preserve the fine details of the geological structure. Finally, we introduce an alternating direction multiplier method (ADMM)-based algorithm to address the resulting seismic data denoising model. In summary, this work has the following contributions:
\begin{itemize}
\item We propose S2S-WTV, a self-supervised deep learning method for seismic data noise attenuation by solely using the observed noisy data. Our method combines the powerful representation abilities of self-supervised DNN and generalization abilities of hand-crafted WTV regularization to faithfully remove irregular noise in seismic data and preserve the fine structures of the clean signal.
\item We elaborately design a trace-wise masking strategy to train the self-supervised DNN, which can better adapt to field noise in seismic data than the original element-wise masks. Meanwhile, we propose a fine-tuning procedure to efficiently handle high-dimensional seismic data. To minimize the self-supervised loss, we introduce an ADMM-based algorithm to optimize the DNN parameters.
\item Extensive experiments on synthetic and field noisy seismic data validate the effectiveness and superiority of our S2S-WTV over state-of-the-art hand-crafted and deep learning-based seismic data noise attenuation methods.  
\end{itemize}\par
The rest of this paper is organized as follows. Sec. \ref{Sec_rela} discusses some related work on self-supervised seismic data denoising. Sec. \ref{Sec_method} introduces the proposed S2S-WTV method and algorithm. Sec. \ref{Sec_exp} carries out extensive experiments to show the effectiveness of our method. Sec. \ref{Sec_dis} gives some discussions of our method. Sec. \ref{Sec_con} concludes this paper.
\section{Related Work}\label{Sec_rela}
Recently, there emerged quite a few unsupervised/self-supervised seismic noise attenuation methods. The pioneer work\cite{TGRS_sparse} used the unsupervised sparse penalty loss to learn sparse features for seismic denoising, which did not require data labels. Oliveira et al.\cite{TGRS_21_self} proposed a self-supervised method for seismic denoising by building noisy/clean pairs without supervised data. Li et al.\cite{GRSL_GAN} conducted unsupervised learning with unpaired data via cycle-generative adversarial networks for seismic denoising. The merits of these methods are that they do not require pairs of noisy/clean seismic training data, which significantly eases the burden of training data collection. However, these methods still need to collect a large number of unlabeled/unpaired seismic data to train the network, which is still an unwilling process.\par   
Another type of unsupervised methods were based on the deep image prior (DIP)\cite{DIP}, which only used a single noisy observation to train the DNN for noise attenuation\cite{BSNet,TGRS_DIP_Hu_2,GRSL_22_DIP_2,TGRS_3DDIP}. These methods were mainly based on the fact that an untrained convolutional DNN (mostly with a U-Net structure) can fit the signal part of the noisy observation before fitting the noisy part. Thus by early stopping one can achieve noise attenuation by using such an intrinsic prior of DNN in an unsupervised manner. Several techniques and enhancements were proposed based on DIP. For example, Saad et al.\cite{DIP_attention} introduced the attention module into the DIP network for seismic denoising. Liu et al.\cite{GRSL_22_DIP} employed adjacent traces of noisy seismic data as the inputs and labels, which could suppressed unpredictable random nosie with only the observed noisy data. Saad and Chen\cite{GP_PATCHUNET} incoperated patch division into the DIP network for more stable noise attenuation. Wang et al.\cite{TGRS_unfolding} utilized the deep unfolding technique of sparse coding model for unsupervised seismic denoising. Qian et al.\cite{TGRS_22_Qian} utilized the unsupervised Stein's unbiased risk estimate loss function in the transformed domain for seismic denoising. These methods all utilized a single noisy observation to train the DNN for seismic data denoising, which got rid of external training process.\par 
Our method also uses a single noisy observation to train the DNN in a self-supervised manner. However, to our best knowledge, the combination of self-supervised deep learning and hand-crafted regularizations has not been stuided in existing literatures on seismic data denoising. Our method simultaneously leverages the S2S learning strategy and the hand-crafted WTV regularizer, which brings the wisdom from both worlds to handle the challenging seismic data denoising problem. The proposed S2S-WTV combines the representation abilities of self-supervised DNN and the generalization abilities of hand-crafted regularization, which can more faithfully attenuate complex noise in seismic data and preserve the details of geological structure. Thus, our method is significantly different from the above DIP-based methods.\par 
\section{The Proposed Method}\label{Sec_method}
In this section, we introduce the proposed S2S-WTV for seismic noise attenuation. We first present the proposed trace-wise masked S2S learning paradigm. Then, we introduce the WTV regularization in the self-supervised learning model, followed by the ADMM algorithm and fine-tuning strategy to efficiently optimize the S2S-WTV model.
\subsection{S2S Learning with Trace-Wise Masking}
\subsubsection{Training Loss Design}
Suppose that we are given the noisy seismic data denoted by a matrix ${\bf Y}\in{\mathbb R}^{H\times W}$, where $H$ denotes the height (number of time samples) and $W$ denotes the width (number of traces). We follow the basic noisy model \cite{TGRS_19_DNN,TGRS_3DDIP}, which assumes that $\bf Y$ is an addition of the underlying clean seismic signal $\bf X$ and random noise $\bf N$:
\begin{equation}\label{noisy_model}
{\bf Y} = {\bf X} + {\bf N}.
\end{equation}
The seismic noise attenuation aims to estimate the underlying $\bf X$ from the observed noisy data $\bf Y$. Under the maximum a posterior (MAP) framework, the noise attenuation can be formulated as the following optimization model over $\bf X$:
\begin{equation}
\min_{\bf X}\left\lVert {\bf Y} - {\bf X}\right\rVert_F^2 + \phi({\bf X}),
\end{equation}
where $\left\lVert\cdot\right\rVert_{F}$ denotes the matrix Frobenius norm and $\phi(\cdot)$ is a prior term (regularization) that characterizes $\bf X$. In this work, we specify the prior term as a self-supervised ``deep prior'' conveyed by an untrained (randomly initialized) deep convolutional neural network (CNN), i.e., the self-supervised CNN generates the desired $\bf X$ and we only consider the fidelity term $\left\lVert {\bf Y} - {\bf X}\right\rVert_F^2$:
\begin{equation}\label{loss_1}
\min_{\theta}\left\lVert {\bf Y} - {\bf X}\right\rVert_F^2,\;{\rm where}\;{\bf X} = f_\theta({\bf Y}).
\end{equation}
 \begin{figure}[t]
	\scriptsize
	\setlength{\tabcolsep}{0.9pt}
	\begin{center}
		\begin{tabular}{cccc}
			\includegraphics[width=0.115\textwidth]{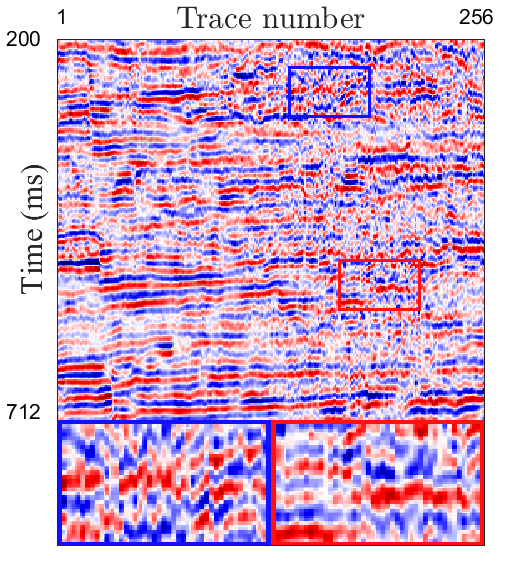}&
			\includegraphics[width=0.115\textwidth]{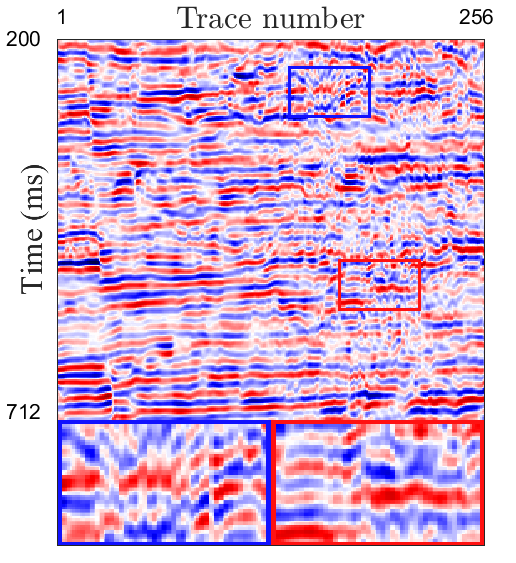}&
			\includegraphics[width=0.115\textwidth]{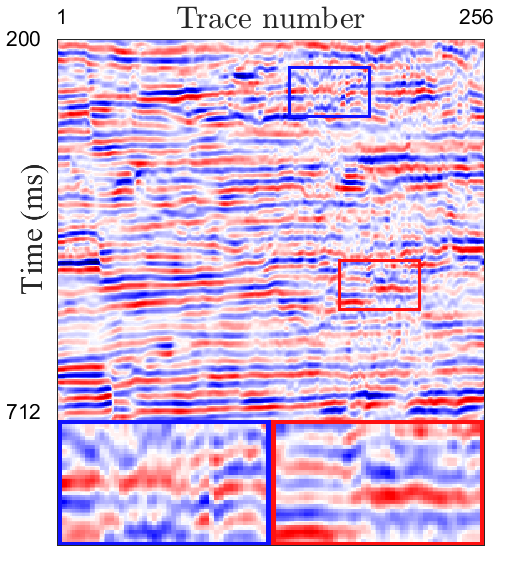}&
			\includegraphics[width=0.115\textwidth]{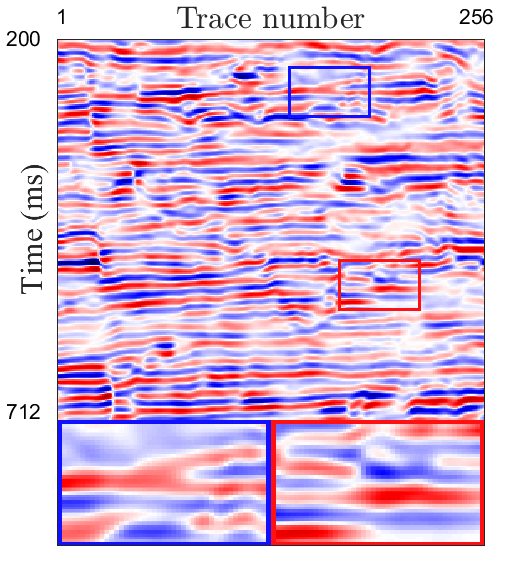}\\
			&
			\includegraphics[width=0.115\textwidth]{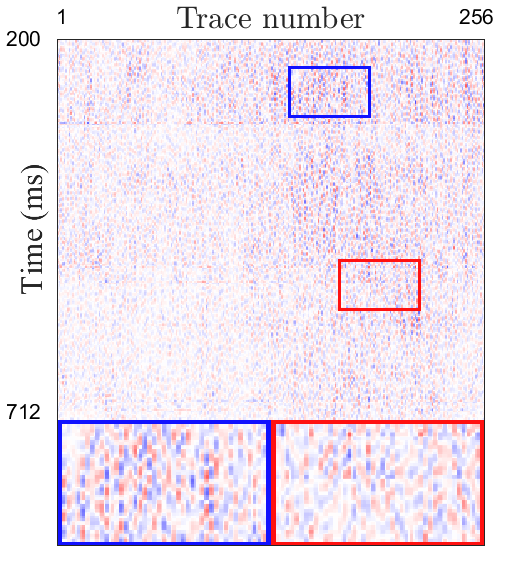}&
			\includegraphics[width=0.115\textwidth]{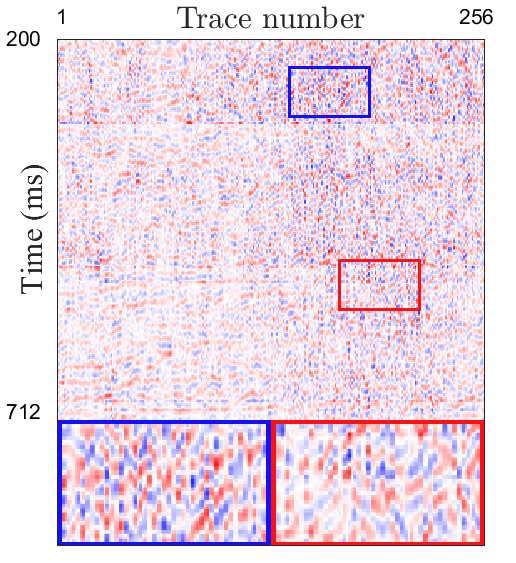}&
			\includegraphics[width=0.115\textwidth]{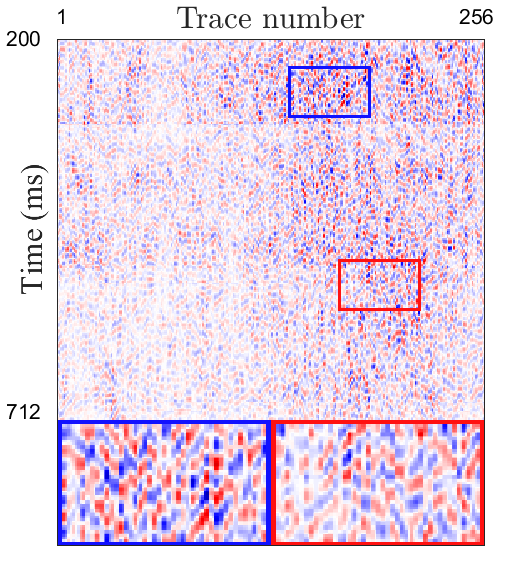}\\
			&LS 0.301&LS 0.547&LS 0.261\\
			Noisy& Element-wise mask&Row-wise mask&Trace-wise mask\\
			\vspace{-0.5cm}
		\end{tabular}
	\end{center}
	\caption{The noise attenuation results (the first row), the corresponding residual map between the noisy data and denoising results (the second row), and the LS (local similarity) values by using our S2S-WTV with different masking strategies on field noisy seismic data {\it X}. The trace-wise masking strategy is more effective since the correlations between adjacent traces of the seismic signal are stronger than those between adjacent rows. Thus, it is easier to use the unmasked traces to predict the masked clean traces.\label{fig_mask}}
	\vspace{-0.4cm}
\end{figure}Here, $f_\theta(\cdot)$ denotes the CNN with learnable parameters $\theta$. It was thoroughly demonstrated in literatures\cite{DIP,DIP_attention} that an untrained CNN with proper structures can itself reveal effective natural signal priors for noise attenuation. Thus the denoising task can be done by solely using the observed data $\bf Y$ to train the CNN via the self-supervised loss (\ref{loss_1}).\par
Although the self-supervised model is simple and concise, its performance is not stable since it is sensitive to the iteration number. The optimal iteration number is hard to determine as the CNN $f_\theta(\cdot)$ could eventually generate the noisy data and early stopping is needed to reconstruct a clean sginal. To cope with these challenges, we leverage the S2S\cite{S2S} learning paradigm for seismic data denoising, which foucses on the variance reduction to aviod overfitting to noisy data. The core concept of S2S is to reducing the variance (noise) of the output by masking some elements and predicting the other elements multiple times, and the average results could effectively reduce the variance of the predicted output.\par 
More specifically, we build some Bernoulli sampled instances
of the observed seismic data $\bf Y$ as training data:
\begin{equation}\label{eq_Y_n}
\widehat{\bf Y}_n = {\bf M}_n\odot{\bf Y},\;n=1,2,\cdots,N,
\end{equation}
where $\{\widehat{\bf Y}_n\}_{n=1}^{N}$ denote the generated samples, $\{{\bf M}_n\}_{n=1}^N$ (${\bf M}_n\in\{0,1\}^{H\times W}$ for all $n=1,2,\cdots,N$) denote the masks for generating these samples, and $\odot$ denotes the element-wise product. The masks $\{{\bf M}_n\}_{n=1}^N$ are generated by using a trace-wise Bernoulli sampling strategy; see details in Sec. \ref{trace_mask}.\par 
Using these sampled instances, we form the following self-supervised loss for an untrained deep CNN $f_\theta(\cdot)$:
\begin{equation}\label{loss_2}
\min_\theta \sum_{n=1}^N \left\lVert ({\bf Y}-{\widehat{\bf X}_n })\odot({\bf 1}-{\bf M}_n)\right\rVert_{F}^2,\;{\rm where}\;\widehat{\bf X}_n =f_\theta(\widehat{\bf Y}_n).
\end{equation}
The loss of each instance $\widehat{\bf Y}_n$ is computed only on elements masked by ${\bf M}_n$. Many such instances with different masks could ensure that all elements of $\bf Y$ are included during the training. Meanwhile, using the trace-wise Bernoulli sampled instances to train the CNN produces similar training losses to that of using pairs of trace-wise Bernoulli sampled instance $\{\widehat{\bf Y}_n\}_{n=1}^{N}$ and ground-truth data $\bf X$ in terms of expectation (see Lemma \ref{lemma}), which indicates that the self-supervised CNN can learn a meaningful denoising mapping as similar to supervised learning. Once the CNN is trained via the loss function (\ref{loss_2}), one can feed some newly masked instances into the network to generate multiple predictions and calculate their average as the noise attenuation result, which could effectively reduce the variance of the output; see inference details in Sec. \ref{Sec_inference}.
 \begin{figure*}[t]
	\scriptsize
	\setlength{\tabcolsep}{0.9pt}
	\begin{center}
		\begin{tabular}{c}
			\includegraphics[width=0.85\textwidth]{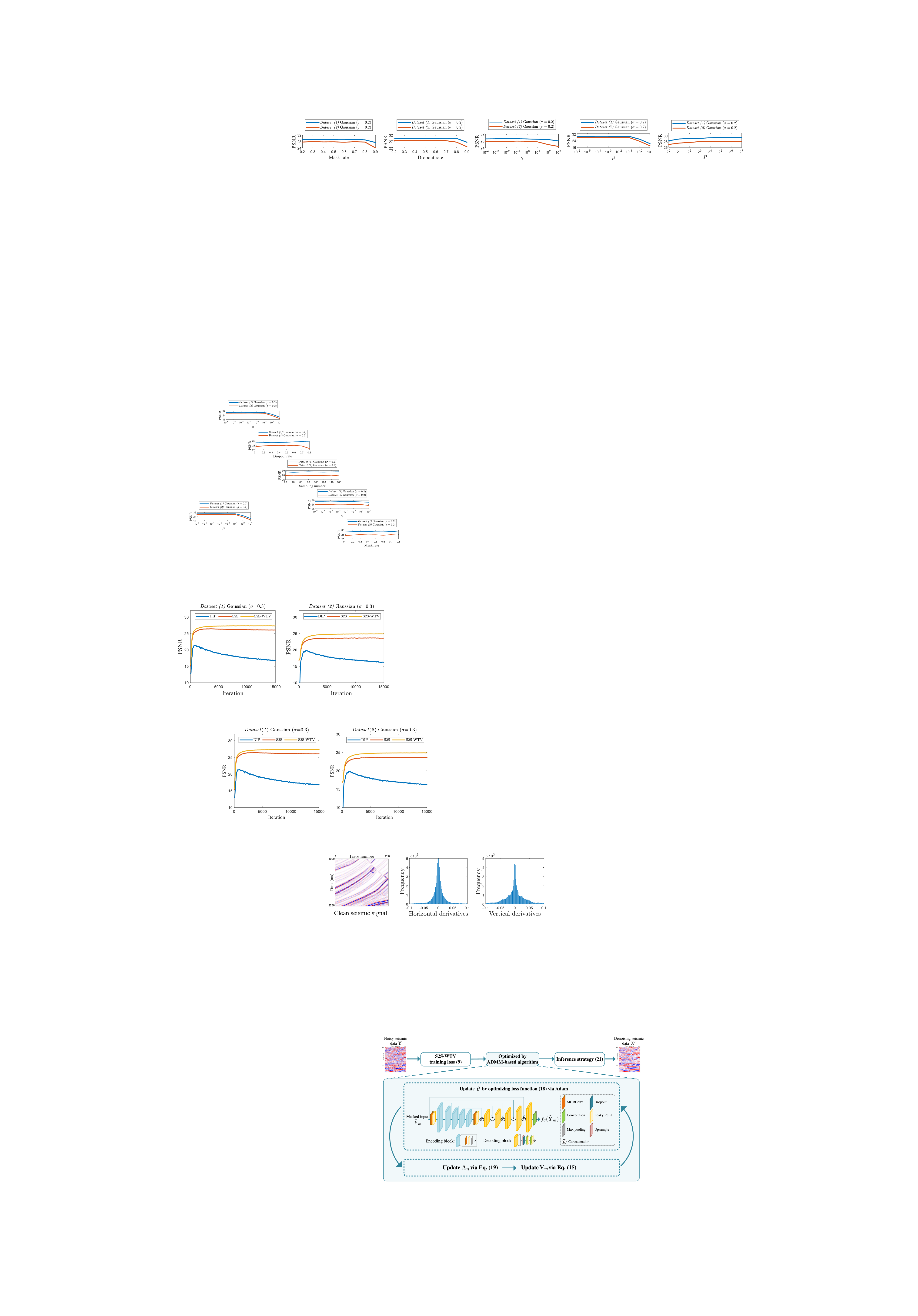}
			\vspace{-0.2cm}
		\end{tabular}
	\end{center}
	\caption{The overall flowchart of the proposed S2S-WTV for seismic data noise attenuation.\label{fig_flow}}
	\vspace{-0.3cm}
\end{figure*}
 \begin{figure}[t]
	\scriptsize
	\setlength{\tabcolsep}{0.9pt}
	\begin{center}
		\begin{tabular}{c}
			\includegraphics[width=0.47\textwidth]{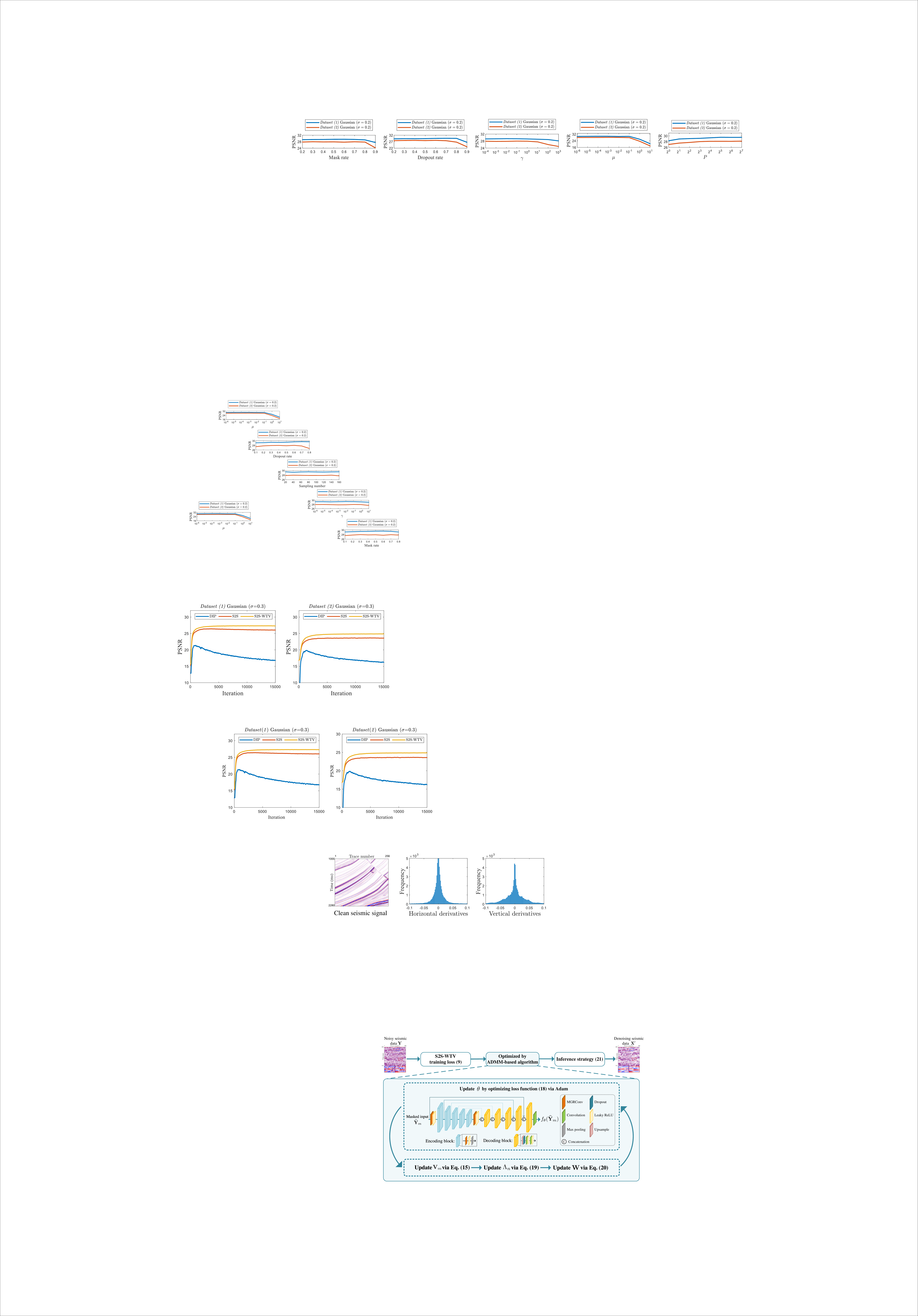}
			\vspace{-0.2cm}
		\end{tabular}
	\end{center}
	\caption{The frequency of horizontal/vertical derivative values of a clean seismic signal. The horizontal derivatives are more focused to zeros, which indicates that the signal is smoother along the horizontal direction. However, due to some non-smooth structures such as geological faults, there are still some horizontal derivative values that are not close to zero. The WTV regularization can distinguish between more or less smooth structures by assigning different weights to the elements of the derivative matrix.\label{WTV_explain}}
	\vspace{-0.2cm}
\end{figure}
\subsubsection{Trace-Wise Masking Strategy}\label{trace_mask}
It was shown in \cite{S2S} that the binary Bernoulli mask, whose elements are independently sampled from a Bernoulli distribution, is suitable for image denoising under the S2S framework. However, we discover that such an element-wise Bernoulli sampling strategy is not effective enough for seismic data denoising; see Fig. \ref{fig_mask}. Digging into the denoising process, we find the underlying reason is that the correlations between adjacent time samples (rows) of seismic data are weak, while the correlations between adjacent traces (columns) are strong due to the smooth geological structure. Thus, it is hard to use the information of adjacent rows to predict the masked clean row due to the irrelevance between them, but it is much easier to use the information of adjacent columns to predict the masked clean column due to their high relevance. Such an asymmetry phenomenon makes element-wise masks not optimal for learning the intrinsic correlations of seismic data.\par 
Motivated by this phenomenon, we design a trace-wise masking strategy in the S2S framework to help the model better adapt to the structures of seismic data. Specifically, we impose the mask to some Bernoulli sampled columns of the data, and then use the unmasked columns to predict the masked columns. Formally speaking, we first generate a binary Bernoulli vector ${\bf v}_n\in\{0,1\}^{W}$ for the $n$-th instance, where the elements of ${\bf v}_n$ are independently drawn from a Bernoulli distribution with probability $p\in(0,1)$, i.e., $P({\bf v}_{n_{(j)}}=1)=p$ and $P({\bf v}_{n_{(j)}}=0)=1-p$, in which ${\bf v}_{n_{(j)}}$ and ${\bf v}_{n_{(k)}}$ are independent for all $j,k$s. Here, ${\bf v}_{n_{(j)}}$ denotes the $j$-th element of ${\bf v}_n$. Using the binary vector, the trace-wise mask ${\bf M}_n$ is constructed by 
\begin{equation}\label{eq_M}
\begin{aligned}
\begin{split}
{\bf M}_{n_{(i,j)}}=\left \{
\begin{array}{lr}
    1,& {\bf v}_{n_{(j)}}=1,\\
    0,& {\bf v}_{n_{(j)}}=0,
\end{array}
\right.
\end{split}
\end{aligned}
\end{equation}
where ${\bf M}_{n_{(i,j)}}$ denotes the $i,j$-th element of ${\bf M}_n$. In Fig. \ref{fig_mask}, we show the noise attenuation results of our method with different masking strategies. We can see that the trace (column)-wise masks are more effective than element-wise and row-wise masks due to the high correlations of adjacent traces of seismic data, which validates the superiority of trace-wise masks for seismic data denoising. The trace-wise masking strategy together with the self-supervised learning can effectively reduce the variance of the output and learn an effective denoising mapping, which is highly related to training the CNN with supervised pairs $\{\widehat{\bf Y}_n,{\bf X}\}_{i=1}^N$; see Lemma \ref{lemma}.
\subsection{WTV Regularized Self-Supervised Training Loss}
Although the S2S learning paradigm is effective for noise attenuation, it does not take full use of the intrinsic domain knowledge of seismic data, which may limit its generalization performances over complex scenarios. Therefore, we introduce the WTV regularization into the training loss to capture the intrinsic local smooth structure of seismic data to further enhance the denoising capability. The WTV regularization can preserve the local smoothness of seismic data by minimizing the derivative values between adjacent elements, which increases the impedance to irregular noise. Specifically, the WTV of a matrix ${\bf X}\in{\mathbb R}^{H\times W}$ is computed by
\begin{equation}
\left\lVert{\bf X}\right\rVert_{\scriptscriptstyle{\rm TV},{\bf W}}:=\left\lVert{\bf W}\odot({\nabla_h{{\bf X}})}\right\rVert_{\ell_1},
\end{equation} 
where $\left\lVert\cdot\right\rVert_{\ell_1}$ denotes the matrix $\ell_1$-norm, $\nabla_h$ denotes the horizontal derivative operator defined as
\begin{equation}
(\nabla_h{\bf X})_{(i,j)}:={\bf X}_{(i,j+1)}-{\bf X}_{(i,j)}, 
\end{equation}
and ${\bf W}\in{\mathbb R}^{H\times(W-1)}$ is the non-negative weight matrix that assigns different weights to different elements of the derivative matrix $\nabla_h{\widehat{\bf X}}$. Here, we only consider the horizontal derivatives of the signal and neglects the vertical derivatives since the seismic data possesses stronger smoothness along the horizontal direction, i.e., the smoothness between adjacent traces are more distinct than those in the vertical direction; see Fig. \ref{WTV_explain}. This has also been emphasized in Sec. \ref{trace_mask}, i.e., the correlations between adjacent traces are stronger than the correlations between adjacent time samples.\par 
Meanwhile, we employ the weight matrix $\bf W$ to distinguish between smoother regions and less smooth edges and deails, such as the geological fault. Such regions appear to be not smooth and thus can be assigned with a lower weight in $\bf W$. The weight matrix is automatically updated in the denoising process; see details in the algorithm in Sec. \ref{Sec_ADMM}.\par 
We introduce the WTV regularization into the self-supervised training loss (\ref{loss_2}) and re-write the loss as
\begin{equation}\label{loss_WTV}
\begin{split}
\min_\theta\sum_{n=1}^N \big{(}&\left\lVert ({\bf Y}-f_\theta(\widehat{\bf Y}_n))\odot({\bf 1}-{\bf M}_n)\right\rVert_{F}^2\\
&\quad\quad\quad\quad\quad\quad\quad
+\gamma\left\lVert f_\theta(\widehat{\bf Y}_n)\right\rVert_{\scriptscriptstyle{\rm TV},{\bf W}}\big{)},
\end{split}
\end{equation}
where $\gamma$ is a trade-off parameter. The WTV-regularized self-supervised loss (\ref{loss_WTV}) only uses the observed noisy data $\bf Y$ as training data. Training the CNN $f_\theta(\cdot)$ using such a self-supervised loss is very related to training the CNN with pairs of supervised data $\{\widehat{\bf Y}_n,{\bf X}\}_{i=1}^N$ and WTV regularization, as stated in Lemma \ref{lemma}. 
\begin{lemma}\label{lemma}
Suppose that $\{{\bf M}_n\}_{n=1}^N$ are the trace-wise Bernoulli masks defined as in (\ref{eq_M}). Assume that (\ref{noisy_model}) holds and each element of the noisy matrix $\bf N$ follows Gaussian distribution with zero mean and variance ${\bf \sigma}^2\in{\mathbb R}^{H\times W}$ (${\bf \sigma}_{(i,j)}^2$ indicates the variance of ${\bf N}_{(i,j)}$), then the following equality (with $\bf N$ being the random variable) is true for any $f_\theta(\cdot)$: 
\begin{equation}
\begin{split}
&{\mathbb E}\Big{[}\sum_{n=1}^N \big{(}\left\lVert ({\bf Y}-f_\theta(\widehat{\bf Y}_n))\odot({\bf 1}-{\bf M}_n)\right\rVert_{F}^2
\\&\quad\quad\quad\quad\quad\quad\quad\quad\quad\quad\;\;
\quad
+\gamma\left\lVert f_\theta(\widehat{\bf Y}_n)\right\rVert_{\scriptscriptstyle{\rm TV},{\bf W}}
\big{)}\Big{]}\\
=&{\mathbb E}\Big{[}\sum_{n=1}^N \big{(}\left\lVert ({\bf X}-f_\theta(\widehat{\bf Y}_n))\odot({\bf 1}-{\bf M}_n)\right\rVert_{F}^2\\
&\quad\;\;
\quad+\left\lVert{\bf \sigma}\odot({\bf 1}-{\bf M}_n)\right\rVert_F^2+\gamma\left\lVert f_\theta(\widehat{\bf Y}_n)\right\rVert_{\scriptscriptstyle{\rm TV},{\bf W}}\big{)}\Big{]}.
\end{split}
\end{equation}
\end{lemma}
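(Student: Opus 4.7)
The plan is to reduce the claim to a single-instance identity, then expand the squared norm and show that the cross term vanishes in expectation because of how the mask separates the entries $f_\theta$ "sees" from the entries on which the loss is evaluated. Since the WTV term $\gamma\|f_\theta(\widehat{\mathbf{Y}}_n)\|_{\scriptscriptstyle\mathrm{TV},\mathbf{W}}$ is identical on both sides of the claimed equality and appears inside the same expectation, it cancels; by linearity of expectation it then suffices to prove, for each fixed $n$, that
\begin{equation*}
\mathbb{E}\bigl[\|(\mathbf{Y}-f_\theta(\widehat{\mathbf{Y}}_n))\odot(\mathbf{1}-\mathbf{M}_n)\|_F^2\bigr]
=\mathbb{E}\bigl[\|(\mathbf{X}-f_\theta(\widehat{\mathbf{Y}}_n))\odot(\mathbf{1}-\mathbf{M}_n)\|_F^2\bigr]+\|\boldsymbol{\sigma}\odot(\mathbf{1}-\mathbf{M}_n)\|_F^2,
\end{equation*}
treating $\mathbf{M}_n$ as fixed (the statement takes $\mathbf{N}$ as the random variable).

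Next I would substitute $\mathbf{Y}=\mathbf{X}+\mathbf{N}$ and expand the left-hand side as a sum of three pieces: the $\mathbf{X}-f_\theta(\widehat{\mathbf{Y}}_n)$ squared norm on the unmasked entries, a cross term, and $\|\mathbf{N}\odot(\mathbf{1}-\mathbf{M}_n)\|_F^2$. The third piece immediately gives $\|\boldsymbol{\sigma}\odot(\mathbf{1}-\mathbf{M}_n)\|_F^2$ in expectation, since $(\mathbf{1}-\mathbf{M}_n)_{(i,j)}\in\{0,1\}$ and the entries of $\mathbf{N}$ are independent with $\mathbb{E}[\mathbf{N}_{(i,j)}^2]=\boldsymbol{\sigma}_{(i,j)}^2$.

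The heart of the argument—and the step I expect to be the main obstacle—is showing that the cross term
\begin{equation*}
2\,\mathbb{E}\!\left[\sum_{i,j}(1-\mathbf{M}_{n_{(i,j)}})\bigl(\mathbf{X}_{(i,j)}-f_\theta(\widehat{\mathbf{Y}}_n)_{(i,j)}\bigr)\mathbf{N}_{(i,j)}\right]
\end{equation*}
is zero. Here the trace-wise (in fact, any entrywise-disjoint) structure of the mask is essential: by \eqref{eq_Y_n}, $\widehat{\mathbf{Y}}_n=\mathbf{M}_n\odot\mathbf{X}+\mathbf{M}_n\odot\mathbf{N}$, so $f_\theta(\widehat{\mathbf{Y}}_n)$ is a deterministic function of $\{\mathbf{N}_{(i,j)}:\mathbf{M}_{n_{(i,j)}}=1\}$. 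At every index $(i,j)$ contributing to the sum, we have $\mathbf{M}_{n_{(i,j)}}=0$, so $\mathbf{N}_{(i,j)}$ is independent of the whole collection on which $f_\theta(\widehat{\mathbf{Y}}_n)$ depends. Hence for such $(i,j)$,
\begin{equation*}
\mathbb{E}[f_\theta(\widehat{\mathbf{Y}}_n)_{(i,j)}\mathbf{N}_{(i,j)}]=\mathbb{E}[f_\theta(\widehat{\mathbf{Y}}_n)_{(i,j)}]\,\mathbb{E}[\mathbf{N}_{(i,j)}]=0,
\end{equation*}
and $\mathbb{E}[\mathbf{X}_{(i,j)}\mathbf{N}_{(i,j)}]=0$ trivially since $\mathbf{X}$ is deterministic and $\mathbf{N}_{(i,j)}$ has zero mean. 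Summing these zero expectations across all $(i,j)$ kills the cross term, completing the identity for each $n$; summing over $n$ and restoring the WTV contribution yields the lemma. The only subtlety worth stressing in the writeup is the independence justification: it relies on $(\mathbf{1}-\mathbf{M}_n)$ and $\mathbf{M}_n$ selecting disjoint index sets together with the coordinate-wise independence of $\mathbf{N}$, so the argument is insensitive to whether the mask is element-wise or trace-wise.
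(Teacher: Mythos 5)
Your proof is correct, but it takes a different route from the paper: the paper's entire proof is a one-sentence appeal to Proposition~1 of the cited Self2Self reference, asserting that neither the trace-wise choice of binary mask nor the added WTV term affects that result, whereas you give a self-contained derivation. Your argument is essentially the bias--variance decomposition underlying that proposition: cancel the WTV term (identical on both sides), substitute ${\bf Y}={\bf X}+{\bf N}$, expand, identify $\mathbb{E}\lVert{\bf N}\odot({\bf 1}-{\bf M}_n)\rVert_F^2=\lVert{\bm\sigma}\odot({\bf 1}-{\bf M}_n)\rVert_F^2$, and kill the cross term by observing that $f_\theta(\widehat{\bf Y}_n)$ depends only on the noise entries with ${\bf M}_{n_{(i,j)}}=1$ while the loss is evaluated only where ${\bf M}_{n_{(i,j)}}=0$. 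What your version buys is that it makes the paper's hand-wave precise: it isolates exactly which hypotheses are doing the work (zero mean, entrywise variance ${\bm\sigma}^2$, and mutual independence of the entries of ${\bf N}$, so that the masked-out coordinates are independent of everything $f_\theta$ sees), and it shows explicitly that the only property of the mask that matters is that ${\bf M}_n$ and ${\bf 1}-{\bf M}_n$ select disjoint index sets --- which is the content of the paper's remark that the trace-wise choice ``does not hurt'' the cited proposition. One point worth stating explicitly in a writeup: the lemma as phrased only says each element of ${\bf N}$ is Gaussian with zero mean and given variance; the independence across entries that your cross-term argument (and the cited Proposition~1) requires is implicit in the paper's noise model but not in the lemma's hypotheses, so you are right to flag it rather than treat it as free. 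Also note that Gaussianity itself is never used --- zero mean, finite entrywise variance, and independence suffice --- so your argument is in fact slightly more general than the stated lemma.
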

\begin{proof}
Note that the choices of binary masks $\{{\bf M}_n\}_{n=1}^N$ and the introduce of WTV regularization do not hurt the correctness of Proposition 1 in \cite{S2S}, and directly follow this theory yields the desired result.  
\end{proof}
Lemma \ref{lemma} indicates that in terms of expectation, using pairs of trace-wise masked samples $\widehat{\bf Y}_n$ and the observed data $\bf Y$ with WTV regularization to train the CNN produces similar training losses to that of using pairs of trace-wise masked samples and the ground-truth $\bf X$ with WTV regularization. Therefore, our self-supervised loss (\ref{loss_WTV}) is expected to learn an effective denoising mapping even without ground-truth training data.\par 
Moreover, our S2S-WTV simultaneously enjoys the powerful representation abilities of self-supervised CNN and the generalization abilities of WTV regularizer with interpretable domain knowledge, and thus is expected to learn an effective denoising mapping. Compared to traditional hand-crafted denoisers, our S2S-WTV leverages the expressiveness of CNN to better capture the complex structures of seismic signals. Compared to previous deep learning denoisers, S2S-WTV does not need additional training data and benefits from the generalizability of domain knowledge brought from the WTV regularizer. Therefore, our method has better generalization abilities for various types of seismic data and noises, while classical deep learning denoisers may sometimes suffer from poor generalization performances over different seismic datasets (e.g., pre-stack and post-stack seismic data) and noises (e.g., random Gaussian noise and bandpass noise); see experimental validations in Sec. \ref{Sec_exp}.  
\begin{algorithm}[t]
	\begin{spacing}{1.02}
		\renewcommand\arraystretch{1.2}
		\caption[Caption for LOF]{Self-Supervised Training Strategy of S2S-WTV for Seismic Data Noise Attenuation}\label{alg_1}
		\begin{algorithmic}[1]
			\renewcommand{\algorithmicrequire}{\textbf{Input:}} 
			\Require
			Noisy seismic data $\{{\bf Y}_k\}_{k=1}^K$, iteration numbers ${T}_1$ and $T_k<<T_1$ ($k=2,3,\cdots,K$), hyperparamters $\gamma,\mu$;
			\renewcommand{\algorithmicrequire}{\textbf{Initialization:}} 
			\Require Randomly initialize $\theta_1$, generate trace-wise masks $\{{\bf M}_n\}_{n=1}^N$, ${\Lambda_n}={\bf 0}$, $t=0$, ${\bf W}={\bf 1}$;
			\For {$k$=1:$K$}
			\State $\theta_k=\theta_1$; Generate training instances via (\ref{eq_Y_n});
			\While {$t\leq {T}_k$}
			\State Update $\mathbf{\mathcal{V}}_n$s via (\ref{V_solution});
			\State Update $\theta_k$ via (\ref{loss_theta});
			\State Update ${{\Lambda}}_n$s via (\ref{lambda_n});
			\If{$t<3000$}
			\State Update the weight matrix $\bf W$ via (\ref{eq_W});
			\EndIf
			\State $t = t+1$;
			\EndWhile
			\EndFor
			\State Using the trained denoising CNNs $\{f_{\theta_k}(\cdot)\}_{k=1}^K$ to predict clean signals $\{{\bf X}'_k\}_{k=1}^K$ via (\ref{eq_inference});
			\renewcommand{\algorithmicrequire}{\textbf{Output:}}
			\Require The estimated clean seismic signals $\{{\bf X}'_k\}_{k=1}^K$;
		\end{algorithmic}
	\end{spacing}
\end{algorithm}
\begin{table*}[!h]
	\caption{The average quantitative results by different methods for noise attenuation in synthetic post-stack seismic {\it DATASETS (1)-(3)}. The {\bf BEST} and \underline{second-best} values are highlighted. (PSNR $\uparrow$, SSIM $\uparrow$, and LS $\downarrow$)\label{tab_denoising}}\vspace{-0.4cm}
	\begin{center}
		\scriptsize
		\setlength{\tabcolsep}{2.1pt}
		\begin{spacing}{1.2}
			\begin{tabular}{clccccccccccccccccccc}
				\toprule
				\multicolumn{2}{c}{Noise}&\multicolumn{3}{c}{Gaussian ($\sigma=0.1$)}&\multicolumn{3}{c}{Gaussian ($\sigma=0.2$)}&\multicolumn{3}{c}{Gaussian ($\sigma=0.3$)}&\multicolumn{3}{c}{Bandpass ($\sigma=0.1$)}&\multicolumn{3}{c}{Bandpass ($\sigma=0.2$)}&\multicolumn{3}{c}{Bandpass ($\sigma=0.3$)}&\multirow{3}*{\tabincell{c}{
						{Time}\\{(second)}}}\\
				\cmidrule{1-20}
				Data&Method&PSNR &SSIM &LS \;\; &PSNR &SSIM &LS \;\; &PSNR &SSIM &LS \;\; &PSNR &SSIM &LS \;\;&PSNR &SSIM &LS \;\;&PSNR &SSIM &LS &~\\
				\midrule
				\multirow{8}*{\tabincell{c}{
						{\it Dataset (1)}\\{(256$\times$256)}}}
				&Observed&{20.04}&{0.851}&{\--\--}\;\;&{13.99}&{0.578}&{\--\--}\;\;&{10.52}&{0.381}&{\--\--}\;\;&{28.63}&{0.976}&{\--\--}\;\;&{22.56}&{0.911}&{\--\--}\;\;&{19.10}&{0.822}&{\--\--}\;\;&{\--\--}\\
				&BM3D&{28.81}&{0.976}&{0.096}\;\;&\underline{27.04}&\underline{0.953}&{0.099}\;\;&\underline{25.60}&\underline{0.923}&{0.127}\;\;&{30.99}&{0.985}&{0.199}\;\;&{29.74}&{0.981}&{0.147} \;\;&\underline{28.88}&\underline{0.976}&{0.159}\;\;&{2}\\
				&WNNM&{29.93}&{0.981}&{0.232}\;\;&{24.07}&{0.896}&{0.383}\;\;&{22.74}&{0.856}&{0.573}\;\;&{32.15}&{0.989}&{0.298}\;\;&{30.11}&{0.982}&{0.295}\;\;&{28.78}&{0.975}&{0.357}\;\;&{15} \\
				&MSSA&{25.60}&{0.952}&{0.144}\;\;&{19.94}&{0.832}&{0.149}\;\;&{16.59}&{0.686}&{0.155}\;\;&{31.58}&{0.988}&{0.186}\;\;&{27.51}&{0.969}&{0.177} \;\;&{24.48}&{0.939}&{0.181}\;\;&{2}\\
				&DDAE&{22.62}&{0.911}&{0.116}\;\;&{14.98}&{0.635}&{0.139}\;\;&{10.82}&{0.302}&{0.142}\;\;&{27.70}&{0.968}&{0.310}\;\;&{25.07}&{0.930}&{0.136} \;\;&{20.70}&{0.821}&{0.153}\;\;&{7}\\
				&DIP&{28.15}&{0.972}&\underline{0.084}\;\;&{25.32}&{0.947}&\underline{0.091}\;\;&{19.80}&{0.784}&\underline{0.087}\;\;&{31.39}&{0.984}&{0.123}\;\;&{27.75}&{0.971}&\underline{0.109} \;\;&{24.58}&{0.941}&{0.114}\;\;&{189}\\
				&PATCHUNET&\underline{31.57}&\underline{0.988}&{0.090}\;\;&{26.73}&{0.944}&{0.097}\;\;&{23.95}&{0.914}&{0.130}\;\;&\underline{35.19}&\underline{0.995}&\underline{0.118}\;\;&\underline{30.58}&\underline{0.984}&{0.125} \;\;&{26.81}&{0.964}&{0.178}\;\;&{101}\\
				&S2S-WTV&\bf{34.67}&\bf{0.990}&\bf{0.080}\;\;&\bf{29.28}&\bf{0.959}&\bf{0.079}\;\;&\bf{27.05}&\bf{0.955}&\bf{0.074}\;\;&\bf{40.10}&\bf{0.998}&\bf{0.111}\;\;&\bf{35.05}&\bf{0.993}&\bf{0.105} \;\;&\bf{32.33}&\bf{0.990}&\bf{0.109}\;\;&{154}\\
				\midrule
				\multirow{8}*{\tabincell{c}{
						{\it Dataset (2)}\\{(256$\times$256)}}}
				&Observed&{20.01}&{0.812}&{\--\--}\;\;&{13.93}&{0.517}&{\--\--}\;\;&{10.48}&{0.323}&{\--\--}\;\;&{26.53}&{0.951}&{\--\--}\;\;&{20.55}&{0.831}&{\--\--}\;\;&{17.05}&{0.687}&{\--\--}\;\;&{\--\--}\\
				&BM3D&{28.59}&{0.965}&{0.153}\;\;&\underline{26.24}&\underline{0.939}&{0.154}\;\;&\underline{24.22}&\underline{0.895}&{0.212}\;\;&{29.50}&{0.972}&{0.218}\;\;&\underline{28.87}&\underline{0.968}&{0.213}\;\;&{27.11}&{0.951}&{0.377}\;\;&{2} \\
				&WNNM&\underline{29.34}&\underline{0.971}&{0.221}\;\;&{22.51}&{0.878}&{0.706}\;\;&{19.93}&{0.604}&{0.644}\;\;&\underline{29.63}&\underline{0.973}&{0.302}\;\;&{28.67}&{0.966}&{0.315}\;\;&\underline{27.40}&\underline{0.955}&{0.424}\;\;&{33} \\
				&MSSA&{21.17}&{0.834}&{0.226}\;\;&{17.33}&{0.564}&{0.213}\;\;&{15.25}&{0.418}&{0.184}\;\;&{26.87}&{0.954}&{0.234}\;\;&{21.19}&{0.845}&{0.226} \;\;&{17.73}&{0.704}&{0.224}\;\;&{1}\\
				&DDAE&{21.25}&{0.808}&{0.127}\;\;&{14.53}&{0.516}&{0.143}\;\;&{10.24}&{0.317}&{0.140}\;\;&{26.15}&{0.934}&{0.278}\;\;&{21.14}&{0.821}&{0.181}\;\;&{17.74}&{0.708}&{0.182}\;\;&{7} \\
				&DIP&{25.53}&{0.935}&\underline{0.109}\;\;&{21.57}&{0.846}&\underline{0.104}\;\;&{19.16}&{0.716}&\underline{0.100}\;\;&{28.05}&{0.954}&\underline{0.141}\;\;&{24.25}&{0.912}&\underline{0.135} \;\;&{20.44}&{0.811}&\underline{0.150}\;\;&{183}\\
				&PATCHUNET&{23.74}&{0.885}&{0.170}\;\;&{22.27}&{0.843}&{0.142}\;\;&{20.85}&{0.788}&{0.139}\;\;&{24.15}&{0.896}&{0.240}\;\;&{22.95}&{0.864}&{0.204} \;\;&{21.51}&{0.816}&{0.203}\;\;&{186}\\
				&S2S-WTV&\bf{33.23}&\bf{0.987}&\bf{0.093}\;\;&\bf{28.13}&\bf{0.964}&\bf{0.091}\;\;&\bf{25.14}&\bf{0.925}&\bf{0.099}\;\;&\bf{36.02}&\bf{0.994}&\bf{0.127}\;\;&\bf{31.01}&\bf{0.981}&\bf{0.117} \;\;&\bf{28.18}&\bf{0.964}&\bf{0.144}\;\;&{155}\\
				\midrule
				\multirow{8}*{\tabincell{c}{
						{\it Dataset (3)}\\{(256$\times$256)}}}
				&Observed&{20.01}&{0.802}&{\--\--}\;\;&{13.98}&{0.505}&{\--\--}\;\;&{10.47}&{0.311}&{\--\--}\;\;&{27.04}&{0.953}&{\--\--}\;\;&{21.07}&{0.838}&{\--\--}\;\;&{17.61}&{0.701}&{\--\--}\;\;&{\--\--}\\
				&BM3D&\underline{29.98}&\underline{0.973}&{0.141}\;\;&\underline{26.52}&\underline{0.940}&{0.156}\;\;&\underline{24.91}&\underline{0.911}&{0.251}\;\;&\underline{31.11}&\underline{0.980}&{0.210}\;\;&\underline{29.95}&\underline{0.973}&{0.203}\;\;&{28.08}&\underline{0.959}&{0.322} \;\;&{2}\\
				&WNNM&{29.11}&{0.966}&{0.240}\;\;&{25.94}&{0.936}&{0.539}\;\;&{21.52}&{0.732}&{0.613}\;\;&{30.01}&{0.973}&{0.326}\;\;&{28.91}&{0.965}&{0.319} \;\;&\underline{28.19}&\underline{0.959}&{0.368}\;\;&{33}\\
				&MSSA&{22.05}&{0.835}&{0.248}\;\;&{18.95}&{0.710}&{0.181}\;\;&{16.13}&{0.549}&{0.166}\;\;&{27.81}&{0.961}&{0.182}\;\;&{21.93}&{0.862}&{0.194} \;\;&{18.48}&{0.737}&{0.208}\;\;&{1}\\
				&DDAE&{21.05}&{0.804}&{0.131}\;\;&{14.47}&{0.219}&{0.150}\;\;&{10.86}&{0.160}&{0.143}\;\;&{26.56}&{0.849}&{0.278}\;\;&{21.66}&{0.836}&{0.183} \;\;&{17.97}&{0.708}&{0.178}\;\;&{6}\\
				&DIP&{27.14}&{0.951}&\underline{0.100}\;\;&{21.99}&{0.848}&\underline{0.098}\;\;&{19.56}&{0.713}&\underline{0.108}\;\;&{28.74}&{0.915}&\underline{0.163}\;\;&{25.95}&{0.939}&\underline{0.124} \;\;&{22.24}&{0.857}&\underline{0.129}\;\;&{179}\\
				&PATCHUNET&{27.97}&{0.959}&{0.115}\;\;&{25.32}&{0.926}&{0.108}\;\;&{22.72}&{0.864}&{0.119}\;\;&{29.17}&{0.969}&\underline{0.163}\;\;&{26.90}&{0.948}&{0.143}\;\;&{24.71}&{0.915}&{0.159} \;\;&{163}\\
				&S2S-WTV&\bf{33.26}&\bf{0.988}&\bf{0.094}\;\;&\bf{27.68}&\bf{0.946}&\bf{0.095}\;\;&\bf{25.37}&\bf{0.921}&\bf{0.096}\;\;&\bf{37.30}&\bf{0.995}&\bf{0.123}\;\;&\bf{31.91}&\bf{0.984}&\bf{0.118} \;\;&\bf{28.57}&\bf{0.966}&\bf{0.117}\;\;&{153}\\
				\bottomrule
			\end{tabular}
		\end{spacing}
	\end{center}
	\vspace{-0.6cm}
\end{table*}   
\subsection{Network Structures}
The overall structure of the CNN $f_\theta(\cdot)$ is illustrated in Fig. \ref{fig_flow}. The deep CNN is based on a U-Net structure, which contains the encoding stage, decoding stage, and skip connections. Motivated by recent work of self-supervised denoising network\cite{MGRConv}, we leverage the mask guided residual convolution (MGRConv) in the encoding stage, which can more faithfully extract useful information of the input under random masks\cite{MGRConv}. The dropout mechanism is used in each decoding block by following \cite{S2S}. The CNN takes the masked seismic data $\widehat{\bf Y}_n$ as the input and is expected to output the clean signal. We suggest readers referring to \cite{MGRConv,DIP,S2S,DIP_attention} for more details and discussions about the design of CNN structures for self-supervised learning.
  \begin{figure*}[!h]
	\scriptsize
	\setlength{\tabcolsep}{0.9pt}
	\begin{center}
		\begin{tabular}{ccccccccc}
			\includegraphics[width=0.108\textwidth]{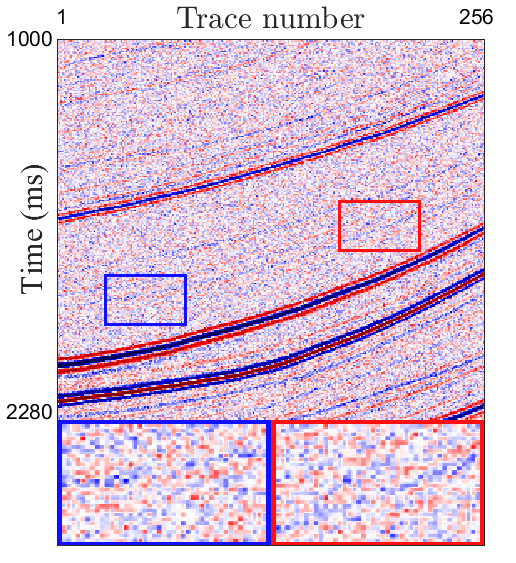}&
			\includegraphics[width=0.108\textwidth]{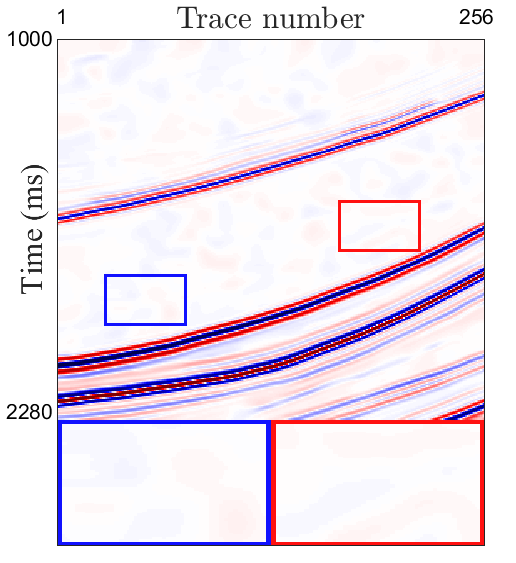}&
			\includegraphics[width=0.108\textwidth]{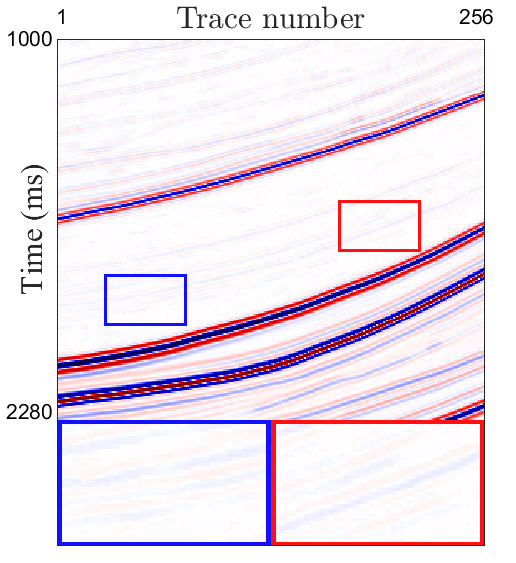}&
			\includegraphics[width=0.108\textwidth]{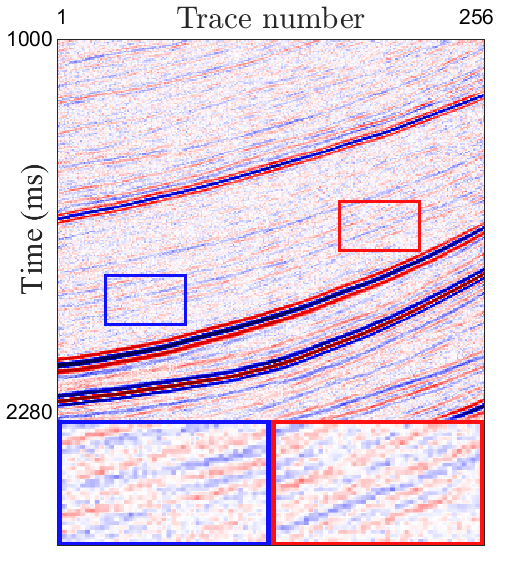}&
			\includegraphics[width=0.108\textwidth]{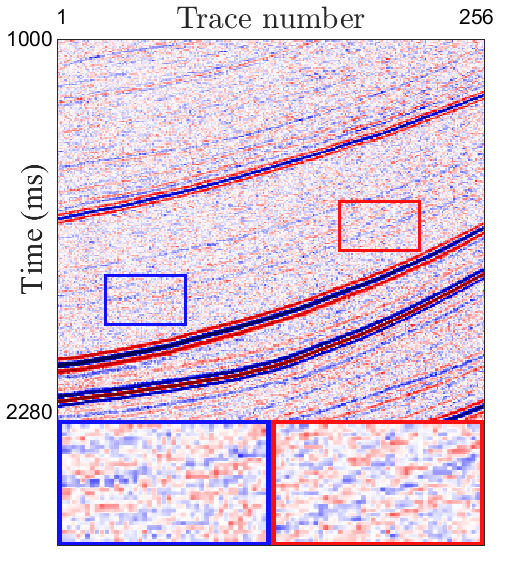}&
			\includegraphics[width=0.108\textwidth]{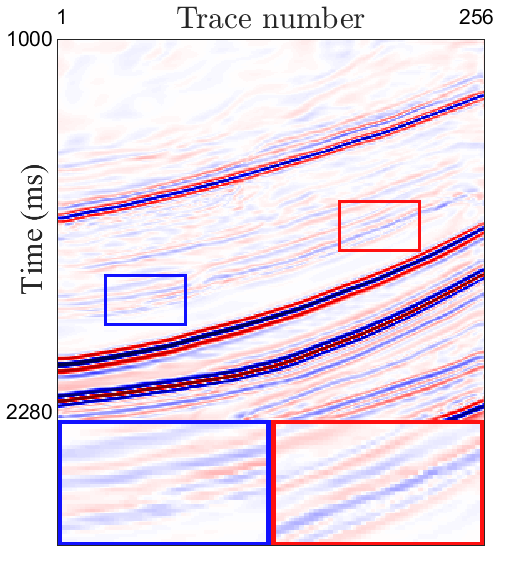}&
			\includegraphics[width=0.108\textwidth]{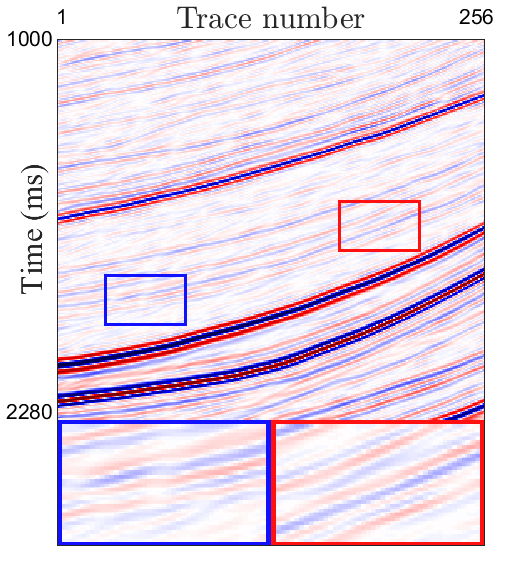}&
			\includegraphics[width=0.108\textwidth]{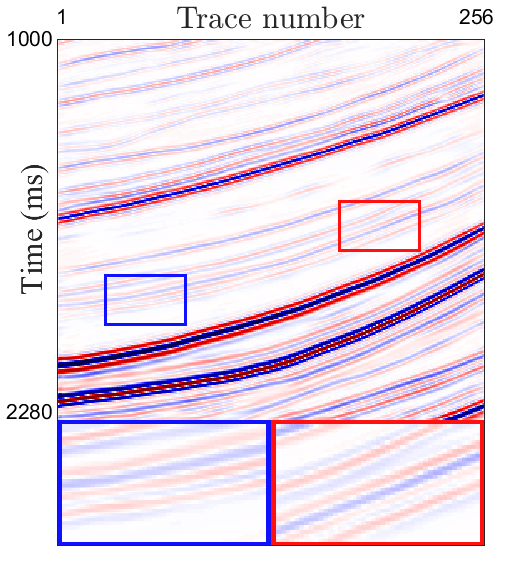}&
			\includegraphics[width=0.108\textwidth]{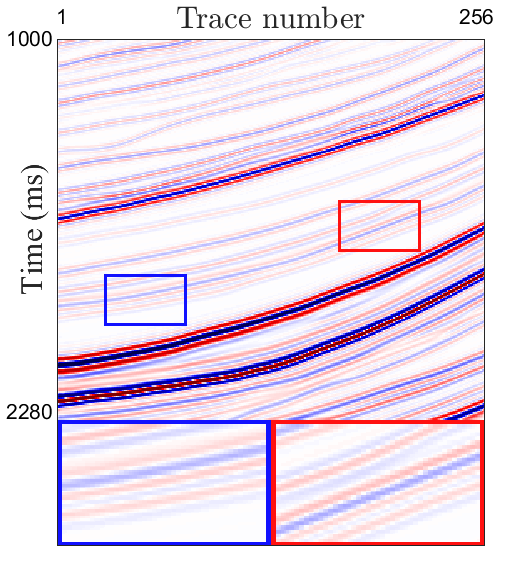}\\
			&
			\includegraphics[width=0.108\textwidth]{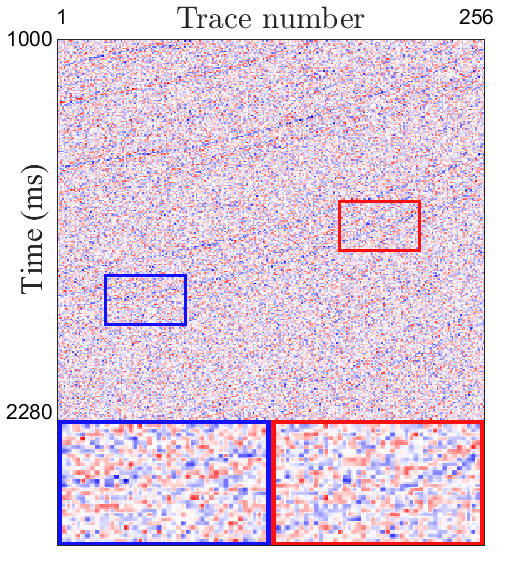}&
			\includegraphics[width=0.108\textwidth]{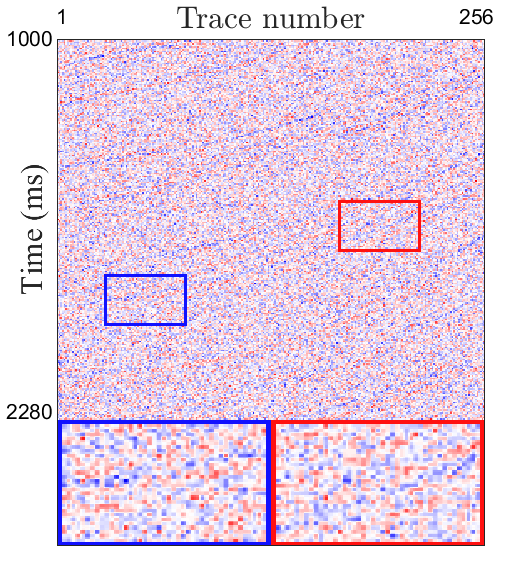}&
			\includegraphics[width=0.108\textwidth]{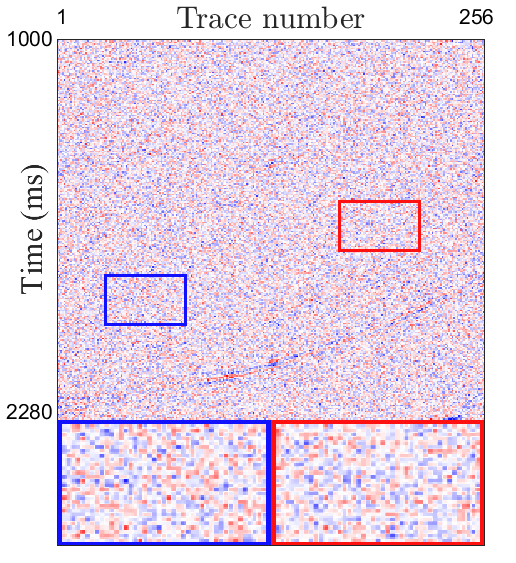}&
			\includegraphics[width=0.108\textwidth]{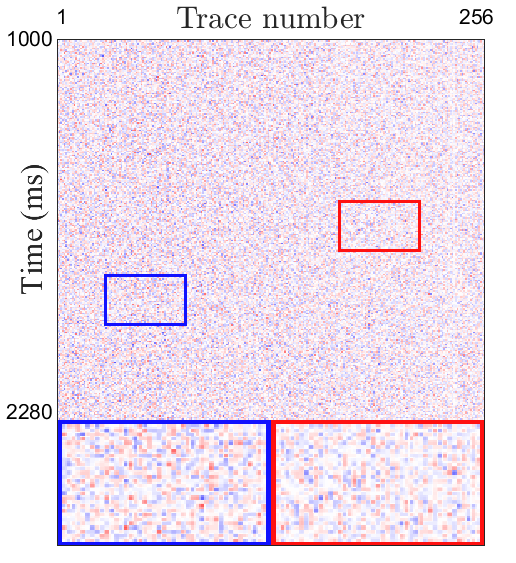}&
			\includegraphics[width=0.108\textwidth]{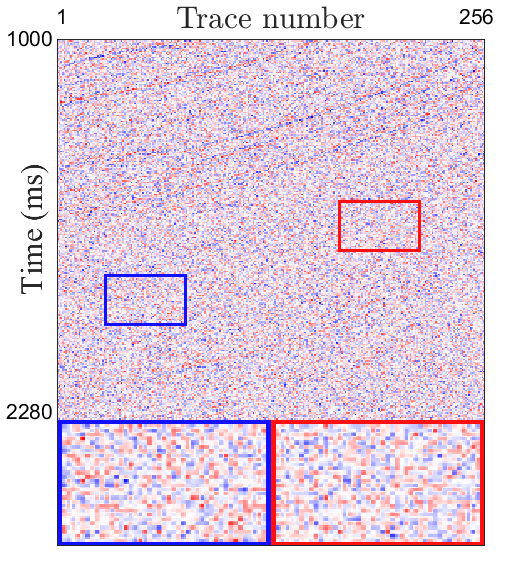}&
			\includegraphics[width=0.108\textwidth]{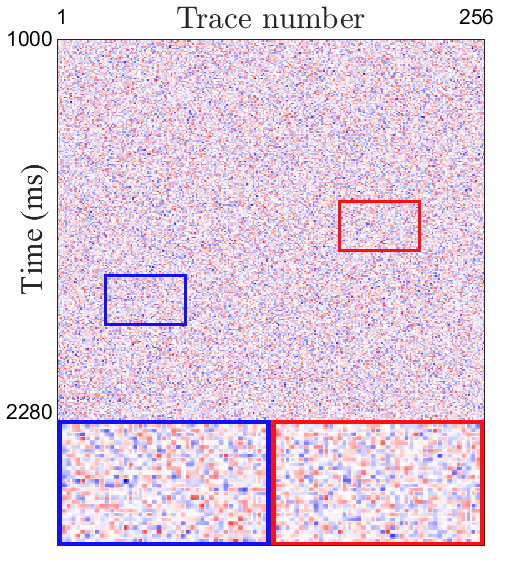}&
			\includegraphics[width=0.108\textwidth]{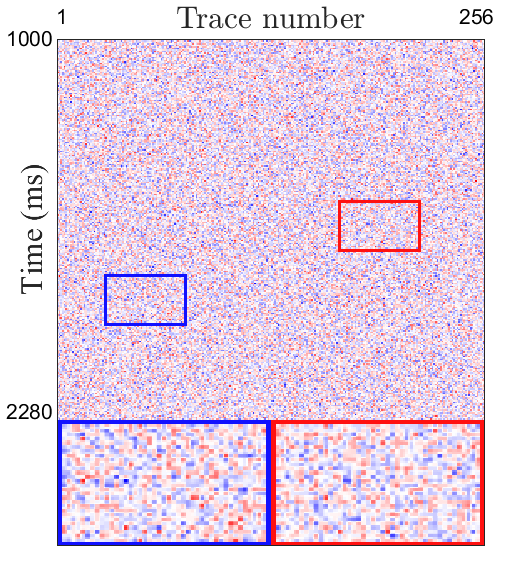}&
			\includegraphics[width=0.108\textwidth]{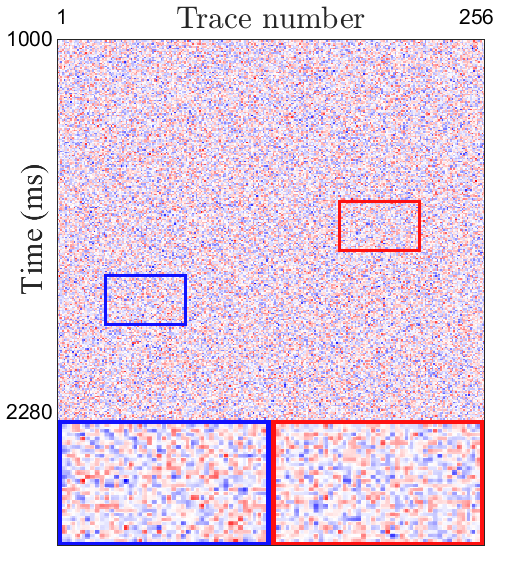}\\
			\includegraphics[width=0.108\textwidth]{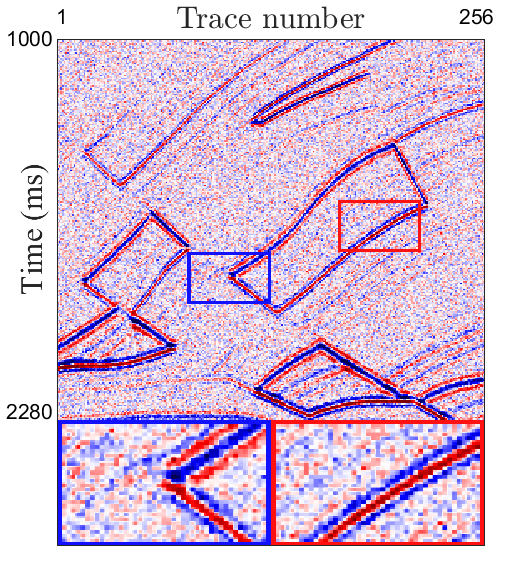}&
			\includegraphics[width=0.108\textwidth]{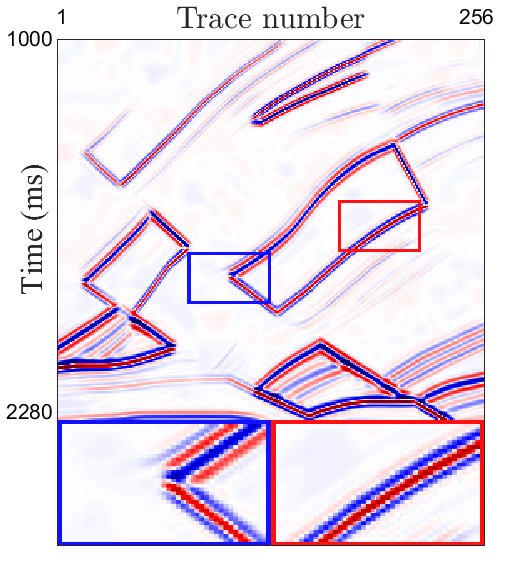}&
			\includegraphics[width=0.108\textwidth]{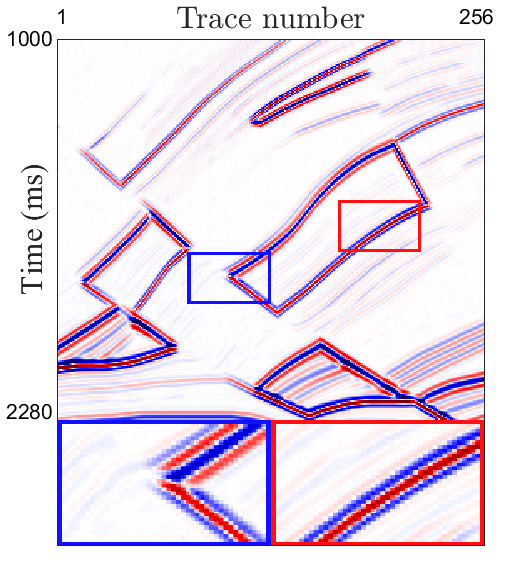}&
			\includegraphics[width=0.108\textwidth]{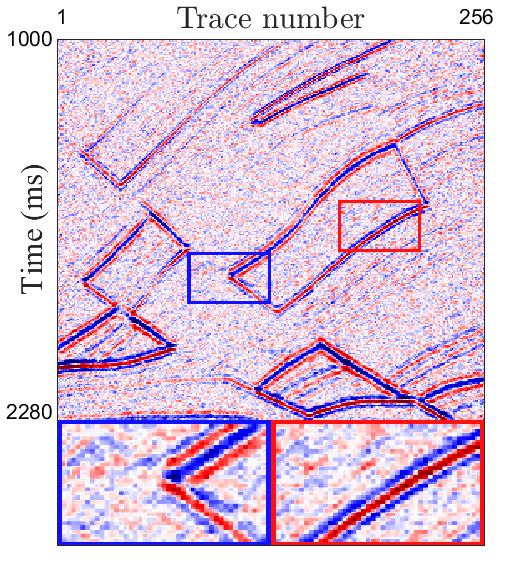}&
			\includegraphics[width=0.108\textwidth]{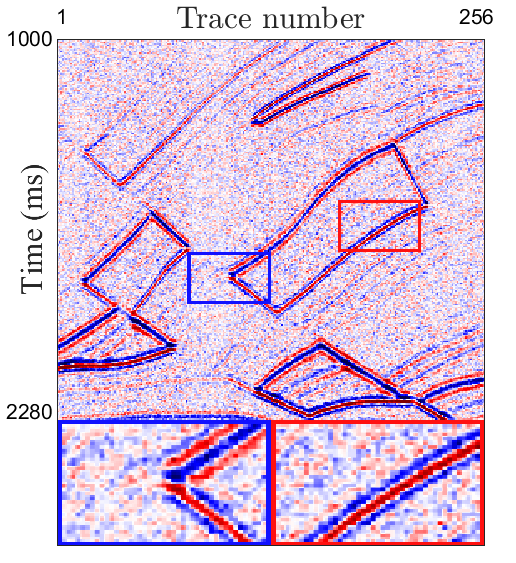}&
			\includegraphics[width=0.108\textwidth]{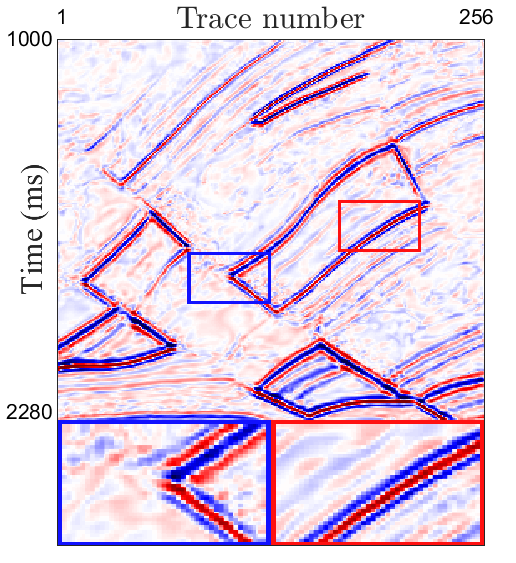}&
			\includegraphics[width=0.108\textwidth]{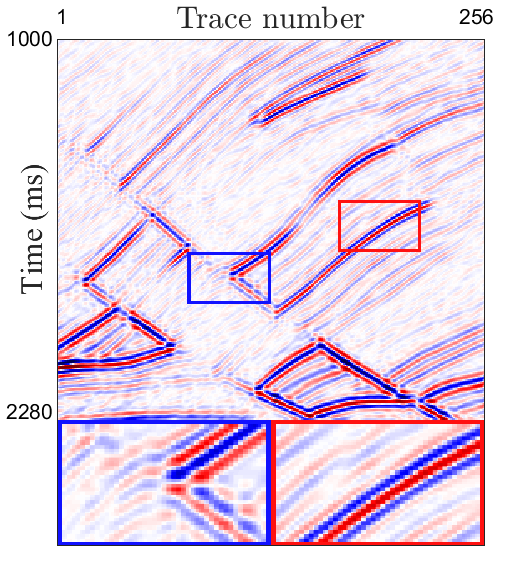}&
			\includegraphics[width=0.108\textwidth]{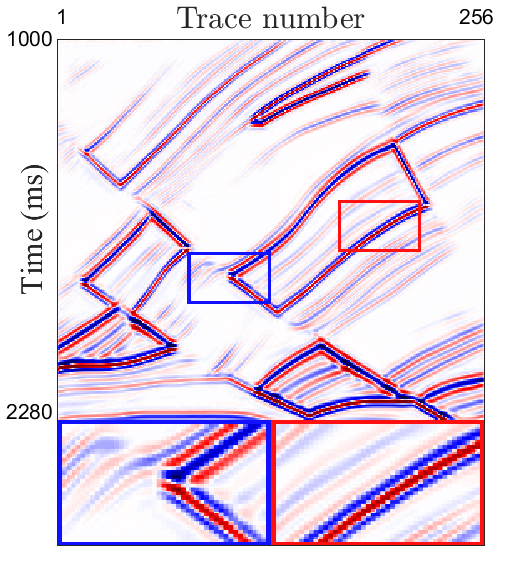}&
			\includegraphics[width=0.108\textwidth]{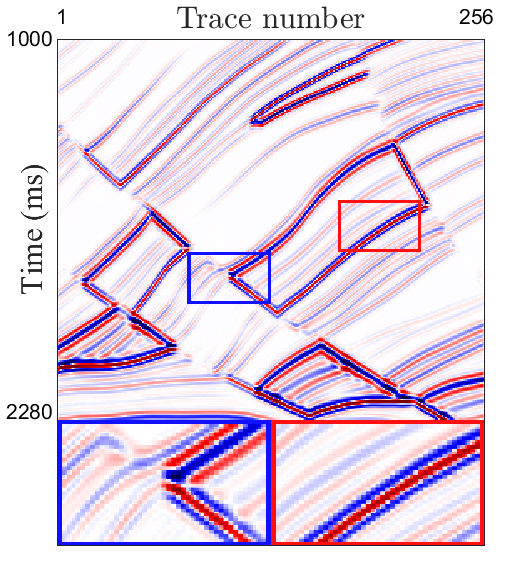}\\
			&
			\includegraphics[width=0.108\textwidth]{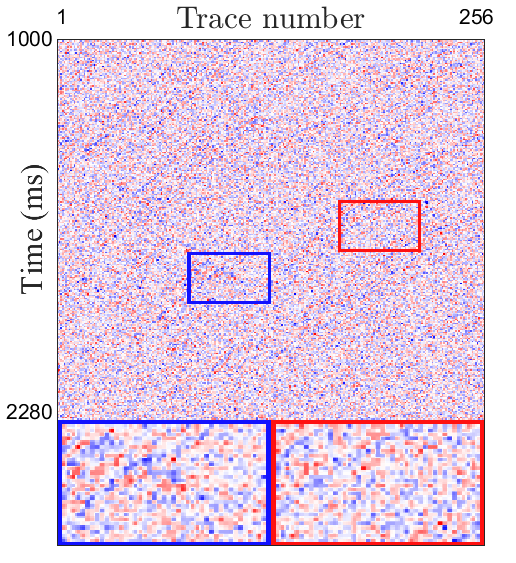}&
			\includegraphics[width=0.108\textwidth]{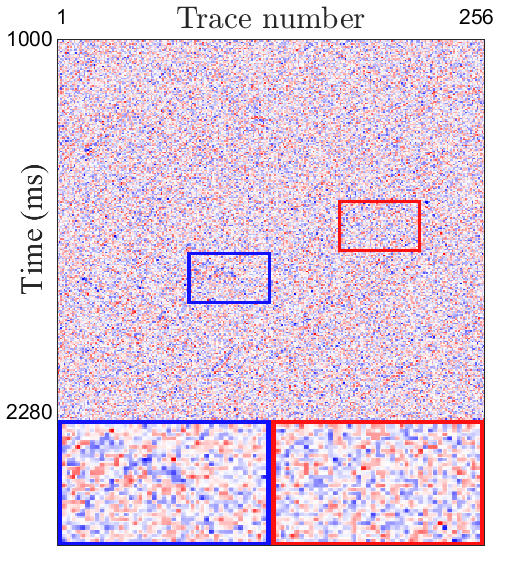}&
			\includegraphics[width=0.108\textwidth]{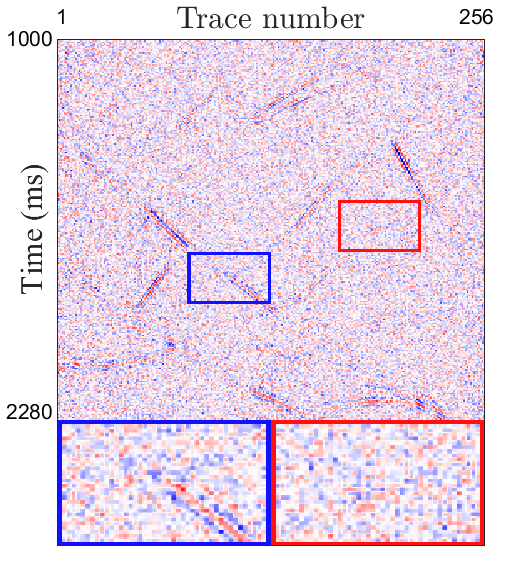}&
			\includegraphics[width=0.108\textwidth]{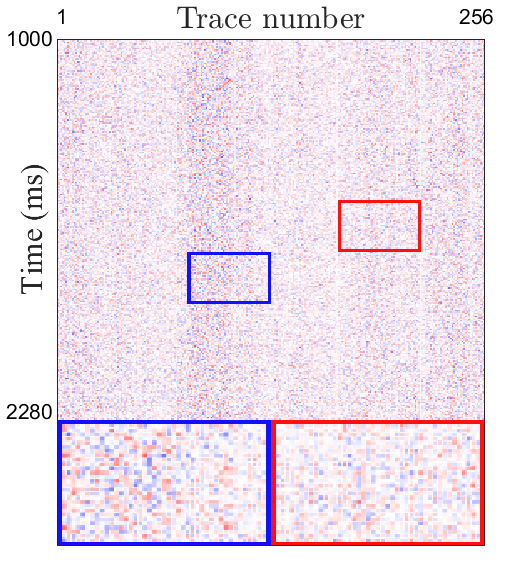}&
			\includegraphics[width=0.108\textwidth]{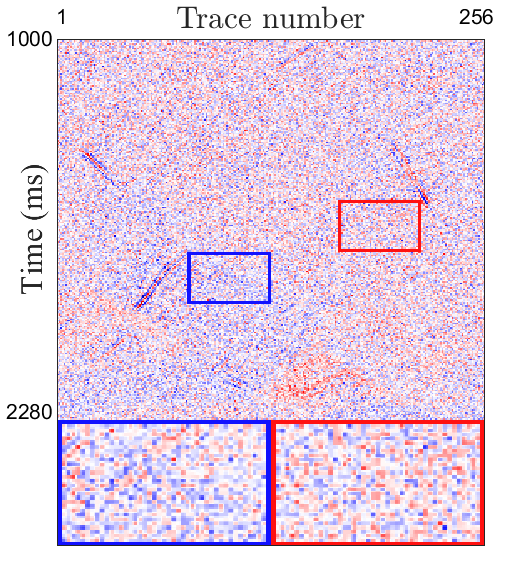}&
			\includegraphics[width=0.108\textwidth]{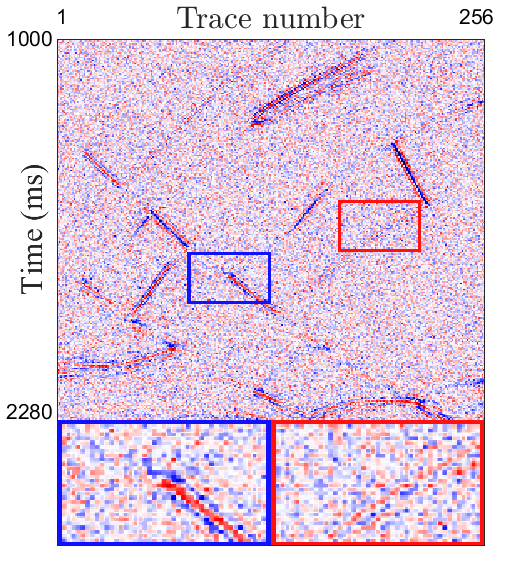}&
			\includegraphics[width=0.108\textwidth]{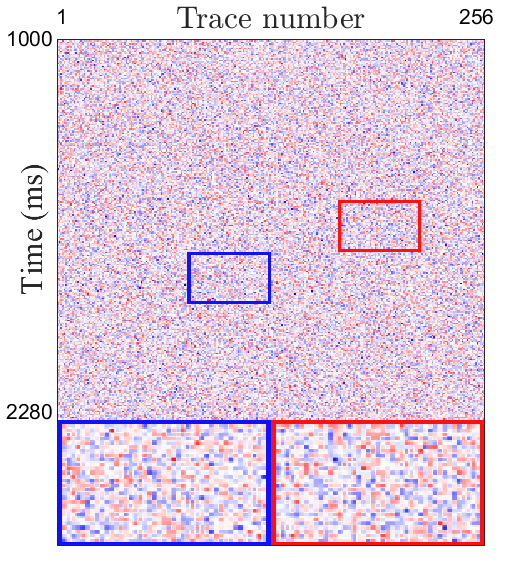}&
			\includegraphics[width=0.108\textwidth]{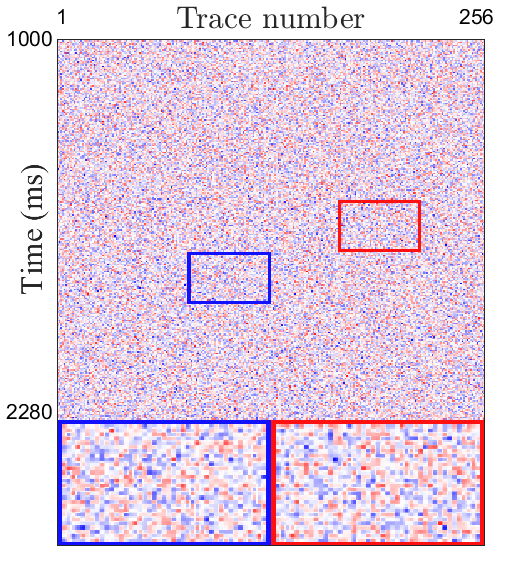}\\
			\includegraphics[width=0.108\textwidth]{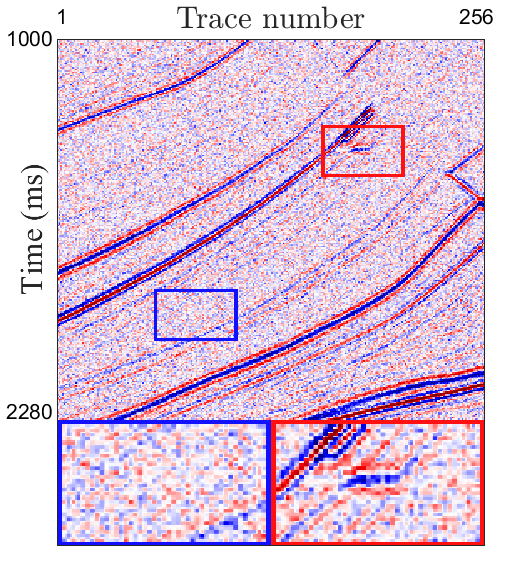}&
			\includegraphics[width=0.108\textwidth]{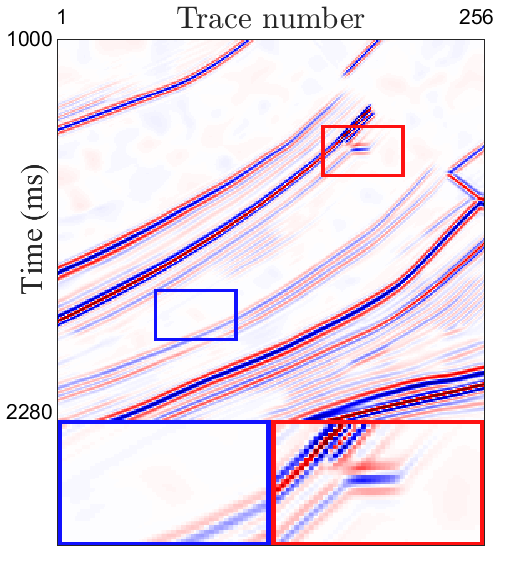}&
			\includegraphics[width=0.108\textwidth]{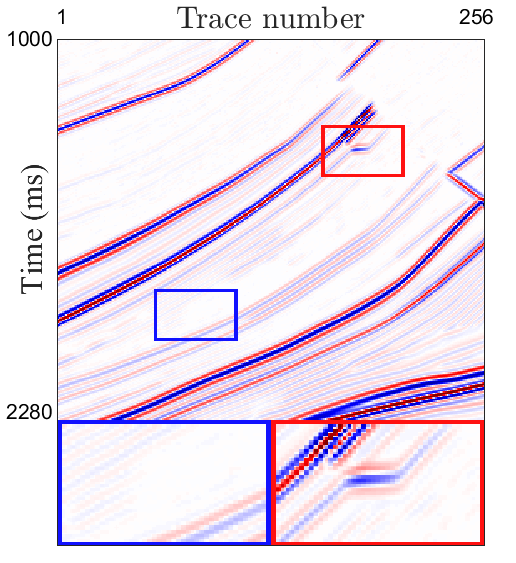}&
			\includegraphics[width=0.108\textwidth]{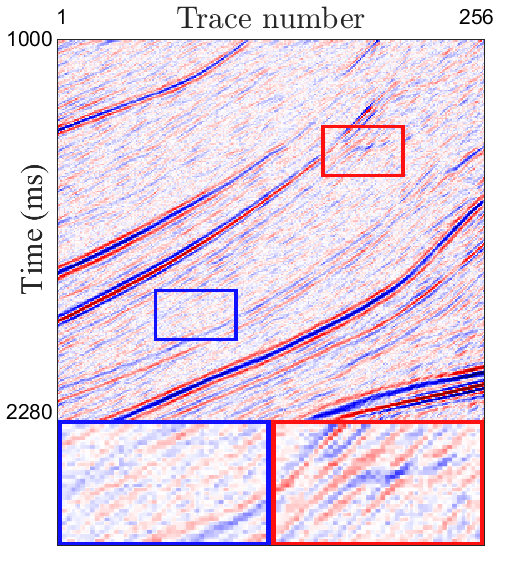}&
			\includegraphics[width=0.108\textwidth]{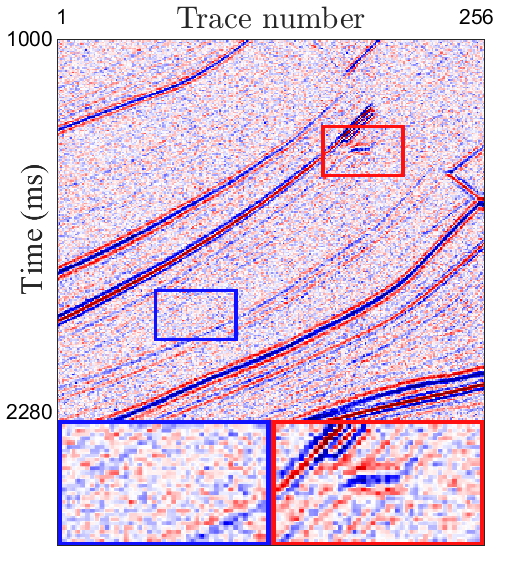}&
			\includegraphics[width=0.108\textwidth]{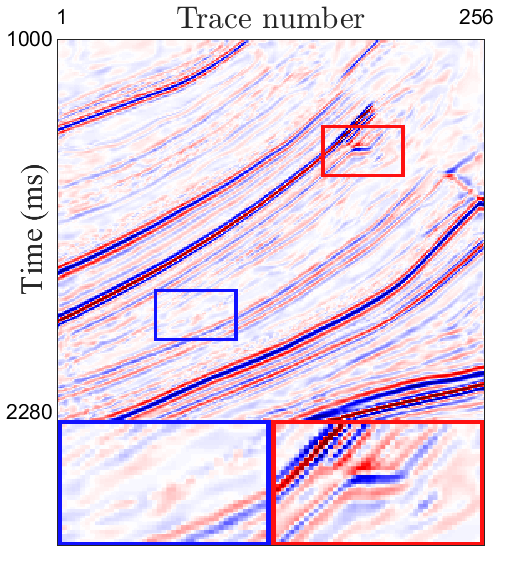}&
			\includegraphics[width=0.108\textwidth]{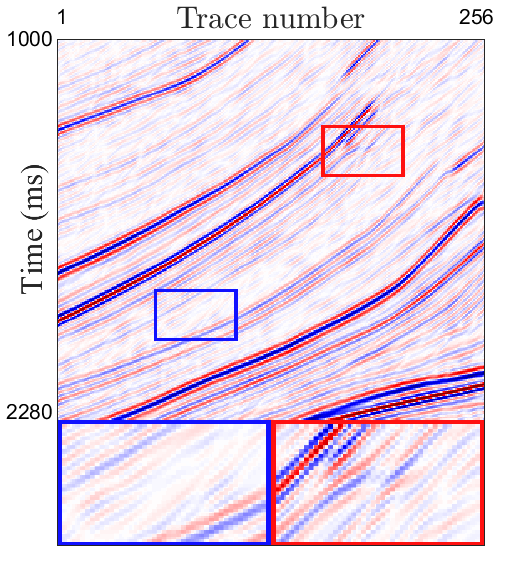}&
			\includegraphics[width=0.108\textwidth]{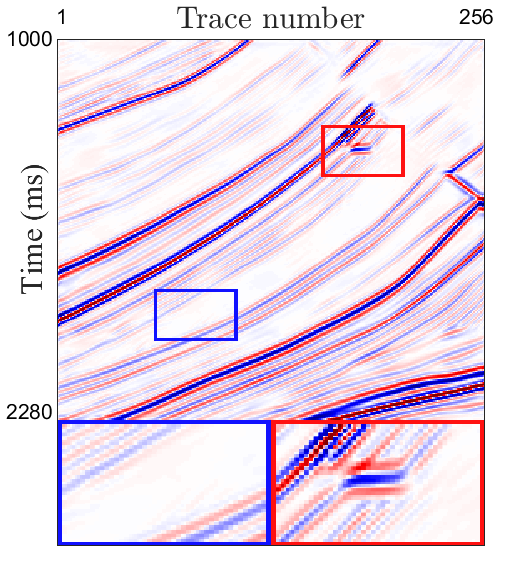}&
			\includegraphics[width=0.108\textwidth]{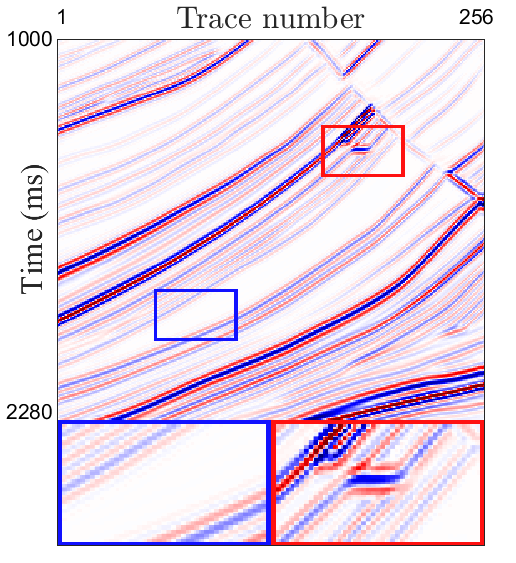}\\
			&
			\includegraphics[width=0.108\textwidth]{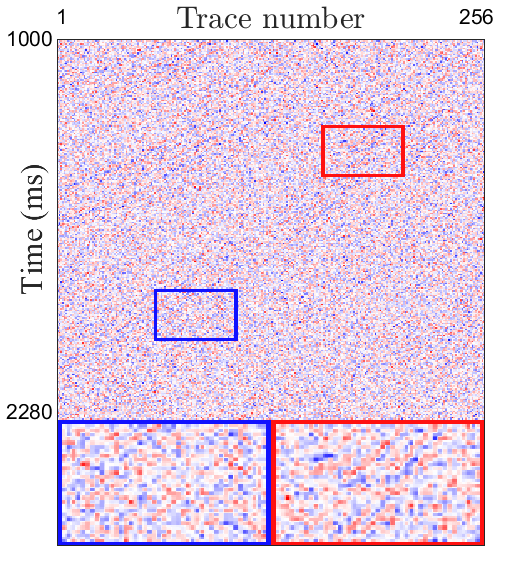}&
			\includegraphics[width=0.108\textwidth]{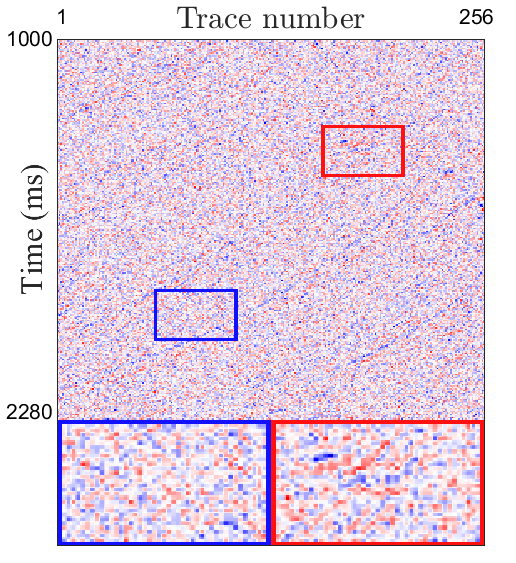}&
			\includegraphics[width=0.108\textwidth]{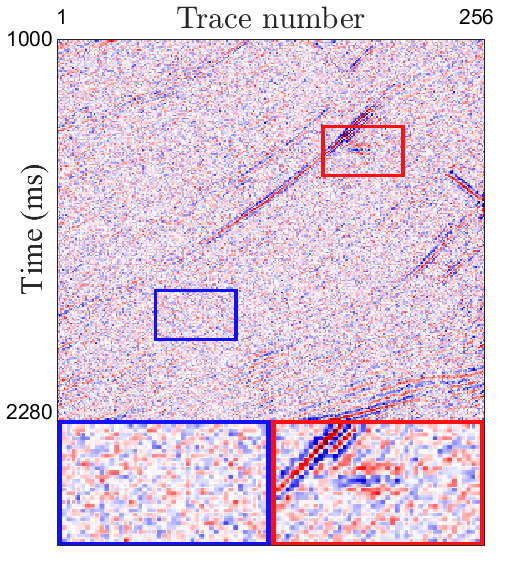}&
			\includegraphics[width=0.108\textwidth]{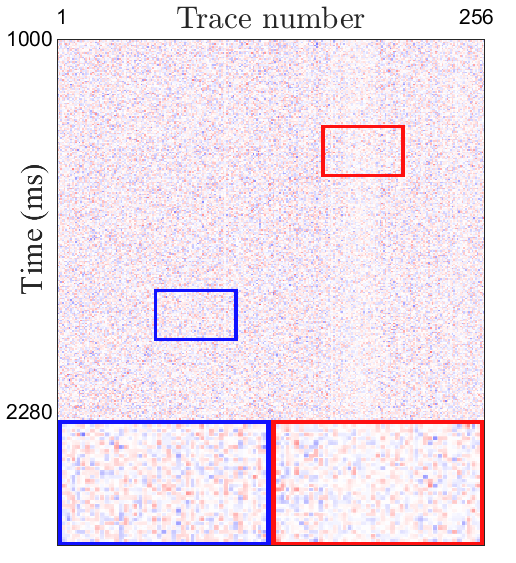}&
			\includegraphics[width=0.108\textwidth]{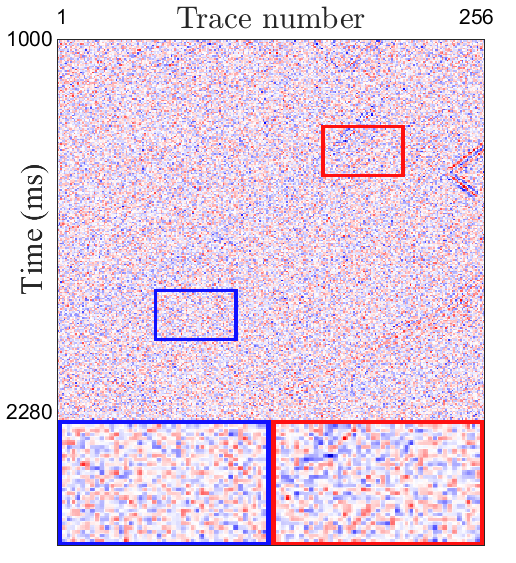}&
			\includegraphics[width=0.108\textwidth]{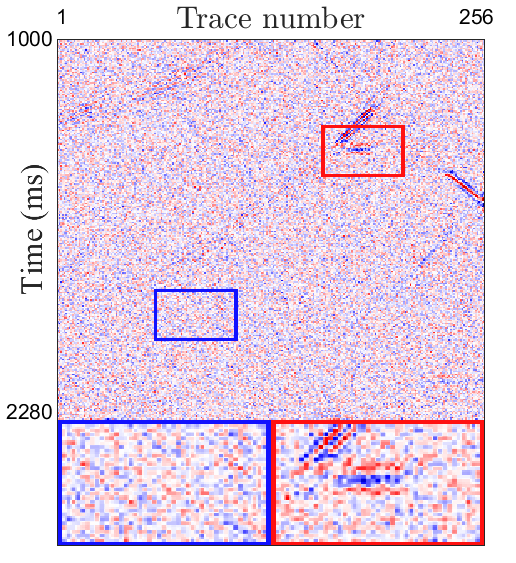}&
			\includegraphics[width=0.108\textwidth]{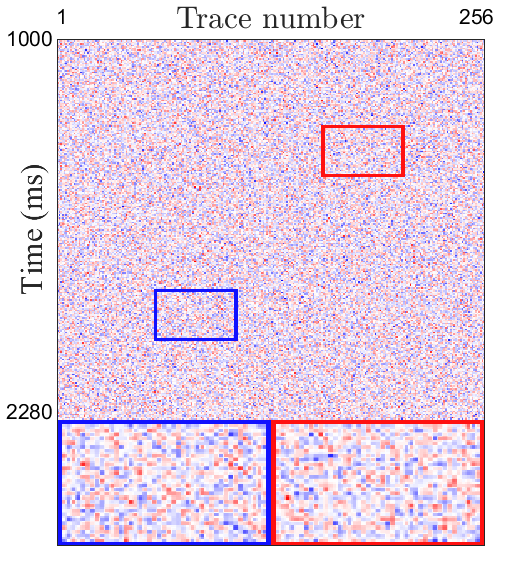}&
			\includegraphics[width=0.108\textwidth]{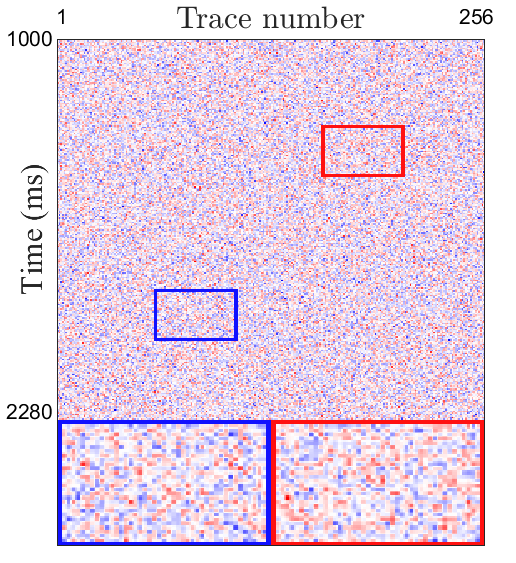}\\
			Noisy&BM3D\cite{BM3D}&WNNM\cite{WNNM}&MSSA\cite{MSSA}&DDAE\cite{DDAE}&DIP\cite{DIP}&PATCHUNET\cite{GP_PATCHUNET}&S2S-WTV&Original\\
			\vspace{-0.6cm}
		\end{tabular}
	\end{center}
	\caption{The noise attenuation results by different methods (the first, third, and fifth rows) and the corresponding residual maps between the noisy data and denoising results (the second, fourth, and sixth rows) on synthetic post-stack {\it Datasets (1)-(3)} with Gaussian noise ($\sigma=0.1$).\label{fig_results_1}}
	\vspace{-0.4cm}
\end{figure*}
\subsection{Training Loss Optimization}
\subsubsection{ADMM-Based Algorithm}\label{Sec_ADMM}
To minimize the self-supervised training loss (\ref{loss_WTV}), we suggest an ADMM-based algorithm. Specifically, by introducing auxiliary variables ${\bf V}_n$s ($n=1,2,\cdots,N$), model (\ref{loss_WTV}) can be re-formulated as
\begin{equation}
\begin{split}
&\min_{\theta,{\bf V}_ns}\;\sum_{n=1}^N \big{(}\left\lVert ({\bf Y}-f_\theta(\widehat{\bf Y}_n))\odot({\bf 1}-{\bf M}_n)\right\rVert_{F}^2\\
&\quad\quad\quad\quad
+\gamma\left\lVert{\bf W}\odot{\bf V}_n\right\rVert_{\ell_1}\big{)},\;{\rm s.t.}\;{\bf V}_n=\nabla_h{f_\theta(\widehat{\bf Y}_n)}.
\end{split}
\end{equation}
The corresponding augmented Lagrangian function is given as follows by attaching Lagrangian multipliers $\Lambda_n$s:
\begin{equation}
\begin{split}
&{\mathcal L}(\theta,{\bf V}_n,{\Lambda_n})\\=&\sum_{n=1}^N \big{(}\left\lVert ({\bf Y}-f_\theta(\widehat{\bf Y}_n))\odot({\bf 1}-{\bf M}_n)\right\rVert_{F}^2
+\gamma\left\lVert{\bf W}\odot{\bf V}_n\right\rVert_{\ell_1}\\&+\frac{\mu}{2}\left\lVert\nabla_h{f_\theta(\widehat{\bf Y}_n)}-{\bf V}_n\right\rVert_{F}^2+\langle\Lambda_n,\nabla_h{f_\theta(\widehat{\bf Y}_n)}-{\bf V}_n\rangle\big{)},
\end{split}
\end{equation}
where $\langle\cdot,\cdot\rangle$ denotes the inner product between two matrices and $\mu$ is the penalty parameter. The joint minimization problem can be decomposed into easier and smaller subproblems, followed by the update of Lagrangian multipliers.\par
{\bf ${\bf V}_n$ Subproblem} The ${\bf V}_n$ subproblem is
\begin{equation}
\min_{{\bf V}_ns}\sum_{n=1}^N(\frac{\mu}{2}\left\lVert\nabla_h{f_\theta(\widehat{\bf Y}_n)}+\frac{\Lambda_n}{\mu}-{\bf V}_n\right\rVert_{F}^2+\gamma\left\lVert{\bf W}\odot{\bf V}_n\right\rVert_{\ell_1}).
\end{equation}
In practice, directly tackling the large-scale problem with $N$ samples is computationally expensive. Thus, we only consider one individual sample (which is randomly selected from all samples) at each iteration. Suppose that the $m$-th sample is selected at the $(t+1)$-th iteration, then the corresponding ${\bf V}_m$ subproblem is
\begin{equation}\label{min_V}
\min_{{\bf V}_m}\frac{\mu}{2}\left\lVert\nabla_h{f_{\theta^t}(\widehat{\bf Y}_m)}+\frac{\Lambda_m^t}{\mu}-{\bf V}_m\right\rVert_{F}^2+\gamma\left\lVert{\bf W}\odot{\bf V}_m\right\rVert_{\ell_1}.
\end{equation}
\begin{lemma}\label{lemma_TV}
The exact solution of (\ref{min_V}) is 
\begin{equation}\label{V_solution}
({\bf V}_m^{t+1})_{(i,j)} = Soft_{\frac{\gamma{\bf W}_{(i,j)}}{\mu}}\big{(}(\nabla_h{f_{\theta^t}(\widehat{\bf Y}_m)}+\frac{\Lambda_m^t}{\mu})_{(i,j)}\big{)},
\end{equation}
where $Soft_{v}(\cdot):=sgn(\cdot)\max\{|\cdot|-v,0\}$ denotes the soft thresholding operator applied on each element of the input.
\end{lemma}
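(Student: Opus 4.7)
\textbf{Proof proposal for Lemma \ref{lemma_TV}.}
The plan is to exploit the fact that both the quadratic fidelity term and the weighted $\ell_1$ penalty in the objective are fully separable across the entries of the matrix $\mathbf{V}_m$. Once separability is established, the minimization reduces to a family of independent scalar problems, each of which is the textbook soft-thresholding problem, and the claimed closed form follows immediately by applying the standard scalar result entrywise with the appropriate entry-dependent threshold $\gamma\mathbf{W}_{(i,j)}/\mu$.

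More concretely, I would first denote $\mathbf{Z}:=\nabla_h f_{\theta^t}(\widehat{\mathbf{Y}}_m)+\Lambda_m^t/\mu$, so that the objective becomes
\begin{equation*}
\frac{\mu}{2}\sum_{i,j}\bigl(\mathbf{Z}_{(i,j)}-(\mathbf{V}_m)_{(i,j)}\bigr)^2+\gamma\sum_{i,j}\mathbf{W}_{(i,j)}\bigl|(\mathbf{V}_m)_{(i,j)}\bigr|,
\end{equation*}
where the non-negativity of $\mathbf{W}$ is used so that $|\mathbf{W}_{(i,j)}(\mathbf{V}_m)_{(i,j)}|=\mathbf{W}_{(i,j)}|(\mathbf{V}_m)_{(i,j)}|$. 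Since every term depends on exactly one entry of $\mathbf{V}_m$, the problem decouples into independent scalar minimizations, one per index $(i,j)$.

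Next I would solve each scalar subproblem
\begin{equation*}
\min_{v\in\mathbb{R}}\;\frac{\mu}{2}(z-v)^2+\gamma w\,|v|,
\end{equation*}
with $z=\mathbf{Z}_{(i,j)}$ and $w=\mathbf{W}_{(i,j)}\ge 0$. This is the classical proximal operator of the absolute value, whose unique minimizer is $v^\star=\operatorname{sgn}(z)\max\{|z|-\gamma w/\mu,0\}=Soft_{\gamma w/\mu}(z)$. The quickest self-contained justification is to split cases on the sign of $v$ (and the degenerate case $v=0$): on $v>0$ the objective is smooth and strictly convex with minimizer $z-\gamma w/\mu$, valid only when this is positive; symmetrically on $v<0$; otherwise the subdifferential condition $0\in\mu(v-z)+\gamma w\,\partial|v|$ forces $v=0$ precisely when $|z|\le\gamma w/\mu$. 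Assembling the three cases yields the soft-thresholding formula.

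Finally, reassembling the entrywise minimizers into a matrix yields (\ref{V_solution}), which concludes the proof. The argument is essentially routine; the only point that requires a moment of care is ensuring that $\mathbf{W}$ is non-negative so that the weighted $\ell_1$ term is convex and separable in the required form, but this is guaranteed by construction of the weight matrix in Sec.\ \ref{Sec_ADMM}. I do not foresee any real obstacle.
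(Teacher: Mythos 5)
Your proposal is correct and follows essentially the same route as the paper's proof: both reduce (\ref{min_V}) to independent entrywise scalar problems using the non-negativity of $\mathbf{W}$, and then apply the standard scalar soft-thresholding identity with threshold $\gamma\mathbf{W}_{(i,j)}/\mu$. Your case analysis via the subdifferential is just a self-contained justification of the identity the paper cites directly; the threshold bookkeeping is also correct.
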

\begin{proof}
Note that (\ref{min_V}) can be equally formulated as the following element-wise optimization:
\begin{equation}\label{element_V}
\begin{split}
\min_{({\bf V}_m)_{(i,j)}}\frac{\mu}{2}\big{(}(\nabla_h{f_{\theta^t}(\widehat{\bf Y}_m)}+&\frac{\Lambda_m^t}{\mu})_{(i,j)}-({\bf V}_m)_{(i,j)}\big{)}^2\\&\;\;\;\;+\gamma{\bf W}_{(i,j)}|({\bf V}_m)_{(i,j)}|,
\end{split}
\end{equation}
where we have used the non-negativity of ${\bf W}$. It is known that the following statement is true for any $y\in{\mathbb R}$:
\begin{equation}\label{soft}
Soft_{\frac{\lambda}{2}}(y)=\arg\min_{x}(y-x)^2+\lambda|x|.
\end{equation}
Combining (\ref{element_V}) and (\ref{soft}) we see that (\ref{V_solution}) is the exact solution of (\ref{element_V}), and thus is the exact solution of (\ref{min_V}).
\end{proof}
After obtaining the optimal variable ${\bf V}_m^{t+1}$, we share it with other instances, i.e., we set ${\bf V}_n^{t+1}={\bf V}_m^{t+1}$ for all $n=1,2,\cdots,N$. The sharing strategy is reasonable because although we feed different instances into the CNN, the desired outputs are consistent, i.e., the clean signal.\par 
{\bf $\theta$ Subproblem} Similarly, we consider the $m$-th sample in the $(t+1)$-th iteration and formulate the $\theta$ subproblem as
\begin{equation}\label{loss_theta}
\begin{split}
&\min_\theta \left\lVert ({\bf Y}-f_\theta(\widehat{\bf Y}_m))\odot({\bf 1}-{\bf M}_m)\right\rVert_{F}^2\\&
\quad\quad\quad\quad\quad\quad
+\frac{\mu}{2}\left\lVert\nabla_h{f_\theta(\widehat{\bf Y}_m)}+\frac{\Lambda_m^t}{\mu}-{\bf V}_m^t\right\rVert_{F}^2.
\end{split}
\end{equation}
Due to the high non-convexity and nonlinearity of the above problem, which includes the update of CNN parameters $\theta$, we consider using the adaptive moment estimation (Adam) algorithm\cite{Adam}. Specifically, we employ one step of the Adam in each iteration of the ADMM-based algorithm to obtain the updated CNN parameters $\theta^{t+1}$ by using the loss (\ref{loss_theta}).\par
{\bf $\Lambda_n$ Update} The multipliers $\Lambda_n$s are updated by using the $m$-th instance as the CNN input:
\begin{equation}\label{lambda_n}
{\Lambda}_n^{t+1} = {\Lambda}_n^{t}+\mu(\nabla_h{f_{\theta^t}(\widehat{\bf Y}_m)}-{\bf V}_n^t).
\end{equation}\par
The overall flowchart of the ADMM-based algorithm for seismic data noise attenuation in a self-supervised manner is illustrated in Fig. \ref{fig_flow}. Moreover, after the update of variables in the ADMM-based algorithm, we further update the weight matrix $\bf W$ of the WTV regularization every 100 iterations through the following paradigm:
\begin{equation}\label{eq_W}
\begin{aligned}
\begin{split}
({\bf W}^{t+1})_{(i,j)} =\left \{
\begin{array}{lr}
    \vspace{0.1cm}
    \frac{\left\lVert{\bf Y}-f_{\theta^t}(\widehat{\bf Y}_m)\right\rVert^2_F}{2HW|(\nabla_h{f_{\theta^t}(\widehat{\bf Y}_m)})_{(i,j)}|},&t\;{\rm mod}\;100 = 0,\\
     \quad\quad\quad({\bf W}^{t})_{(i,j)},&t\;{\rm mod}\;100 \neq 0.
\end{array}
\right.
\end{split}
\end{aligned}
\end{equation}
Intuitively, we assign higher weights to smoother components and lower weights to less smooth components. The update of the weight matrix can help preserve the signal details and edges as much as possible. The weight matrix $\bf W$ is updated every 100 iterations and we fix the weight matrix after a certain iteration number (3000 in practice) due to the convergence.
\begin{table*}[!t]
	\caption{The average quantitative results by different methods for noise attenuation in synthetic pre-stack seismic {\it DATASETS (4)-(5)}. The {\bf BEST} and \underline{second-best} values are highlighted. (PSNR $\uparrow$, SSIM $\uparrow$, and LS $\downarrow$)\label{tab_denoising_2}}\vspace{-0.4cm}
	\begin{center}
		\scriptsize
		\setlength{\tabcolsep}{2.1pt}
		\begin{spacing}{1.2}
			\begin{tabular}{clccccccccccccccccccc}
				\toprule
				\multicolumn{2}{c}{Noise}&\multicolumn{3}{c}{Gaussian ($\sigma=0.1$)}&\multicolumn{3}{c}{Gaussian ($\sigma=0.2$)}&\multicolumn{3}{c}{Gaussian ($\sigma=0.3$)}&\multicolumn{3}{c}{Bandpass ($\sigma=0.1$)}&\multicolumn{3}{c}{Bandpass ($\sigma=0.2$)}&\multicolumn{3}{c}{Bandpass ($\sigma=0.3$)}&\multirow{3}*{\tabincell{c}{
						{Time}\\{(second)}}}\\
				\cmidrule{1-20}
				Data&Method&PSNR &SSIM &LS \;\; &PSNR &SSIM &LS \;\; &PSNR &SSIM &LS \;\; &PSNR &SSIM&LS \;\;&PSNR &SSIM&LS \;\;&PSNR &SSIM&LS&~ \\
				\midrule
				\multirow{8}*{\tabincell{c}{
						{\it Dataset (4)}\\{(192$\times$192)}}}
				&Observed&{20.01}&{0.757}&{\--\--}\;\;&{13.93}&{0.432}&{\--\--}\;\;&{10.42}&{0.256}&{\--\--}\;\;&{28.22}&{0.953}&{\--\--}\;\;&{22.04}&{0.831}&{\--\--}\;\;&{18.58}&{0.694}&{\--\--}\;\;&{\--\--}\\
				&BM3D&\underline{30.70}&\underline{0.971}&{0.091}\;\;&{28.35}&{0.936}&{0.113}\;\;&\underline{26.58}&\underline{0.910}&{0.173}\;\;&\underline{32.10}&\underline{0.979}&{0.136}\;\;&\underline{31.17}&\underline{0.974}&{0.163} \;\;&\underline{29.68}&\underline{0.963}&{0.224}\;\;&{1}\\
				&WNNM&{29.82}&{0.963}&{0.198}\;\;&\underline{29.47}&\underline{0.952}&{0.446}\;\;&{17.46}&{0.615}&{0.892}\;\;&{30.23}&{0.966}&{0.251}\;\;&{29.32}&{0.958}&{0.350}\;\;&{28.90}&{0.955}&{0.526} \;\;&{18}\\
				&MSSA&{21.20}&{0.793}&{0.166}\;\;&{17.21}&{0.404}&{0.192}\;\;&{15.23}&{0.294}&{0.195}\;\;&{28.53}&{0.956}&{0.179}\;\;&{22.41}&{0.837}&{0.216} \;\;&{19.16}&{0.707}&{0.226}\;\;&{1}\\
				&DDAE&{21.16}&{0.763}&{0.255}\;\;&{14.27}&{0.346}&{0.146}\;\;&{11.10}&{0.076}&{0.153}\;\;&{27.60}&{0.945}&{0.274}\;\;&{22.44}&{0.837}&{0.216} \;\;&{18.71}&{0.695}&{0.214}\;\;&{6}\\
				&DIP&{28.34}&{0.953}&\underline{0.072}\;\;&{20.90}&{0.773}&\underline{0.103}\;\;&{18.90}&{0.632}&\underline{0.100}\;\;&{31.96}&\underline{0.979}&\underline{0.121}\;\;&{26.38}&{0.901}&\underline{0.116} \;\;&{22.67}&{0.840}&\underline{0.146}\;\;&{182}\\
				&PATCHUNET&{25.65}&{0.890}&{0.180}\;\;&{23.98}&{0.837}&{0.184}\;\;&{22.49}&{0.780}&{0.209}\;\;&{25.84}&{0.896}&{0.245}\;\;&{24.56}&{0.861}&{0.280} \;\;&{23.48}&{0.829}&{0.293}\;\;&{94}\\
				&S2S-WTV&\bf{34.61}&\bf{0.989}&\bf{0.071}\;\;&\bf{30.10}&\bf{0.961}&\bf{0.113}\;\;&\bf{27.15}&\bf{0.935}&\bf{0.093}\;\;&\bf{37.57}&\bf{0.994}&\bf{0.092}\;\;&\bf{32.69}&\bf{0.982}&\bf{0.101} \;\;&\bf{30.09}&\bf{0.968}&\bf{0.114}\;\;&{107}\\
				\midrule
				\multirow{8}*{\tabincell{c}{
						{\it Dataset (5)}\\{(192$\times$192)}}}
				&Observed&{19.93}&{0.670}&{\--\--}\;\;&{13.93}&{0.337}&{\--\--}\;\;&{10.43}&{0.180}&{\--\--}\;\;&{28.00}&{0.928}&{\--\--}\;\;&{22.11}&{0.771}&{\--\--}\;\;&{18.56}&{0.599}&{\--\--}\;\;&{\--\--}\\
				&BM3D&{30.68}&{0.955}&\underline{0.087}\;\;&\underline{29.00}&\underline{0.924}&\underline{0.092}\;\;&{25.78}&{0.829}&\underline{0.099}\;\;&{32.74}&{0.972}&{0.126}\;\;&\underline{32.06}&\underline{0.967}&\underline{0.117}\;\;&{29.23}&{0.934}&\underline{0.072}\;\;&{1}\\
				&WNNM&\underline{31.92}&\underline{0.967}&{0.214}\;\;&{27.45}&{0.909}&{0.721}\;\;&{14.79}&{0.357}&{0.942}\;\;&\underline{33.16}&\underline{0.974}&{0.250}\;\;&{31.92}&{0.966}&{0.395}\;\;&\underline{29.96}&\underline{0.945}&{0.592} \;\;&{19}\\
				&MSSA&{24.49}&{0.833}&{0.171}\;\;&{19.47}&{0.603}&{0.182}\;\;&{16.26}&{0.404}&{0.189}\;\;&{28.11}&{0.920}&{0.178}\;\;&{25.58}&{0.864}&{0.215} \;\;&{23.25}&{0.784}&{0.222}\;\;&{1}\\
				&DDAE&{20.47}&{0.696}&{0.146}\;\;&{14.18}&{0.338}&{0.147}\;\;&{10.53}&{0.177}&{0.157}\;\;&{27.57}&{0.917}&{0.277}\;\;&{22.24}&{0.770}&{0.224}\;\;&{18.65}&{0.595}&{0.213} \;\;&{7}\\
				&DIP&{25.64}&{0.864}&{0.092}\;\;&{19.08}&{0.524}&{0.102}\;\;&{15.55}&{0.323}&{0.102}\;\;&{30.43}&{0.950}&\underline{0.116}\;\;&{25.60}&{0.874}&{0.139} \;\;&{22.74}&{0.770}&{0.125}\;\;&{182}\\
				&PATCHUNET&{31.75}&{0.966}&{0.186}\;\;&{28.27}&{0.920}&{0.332}\;\;&\underline{25.99}&\underline{0.839}&{0.392}\;\;&{32.93}&\underline{0.974}&{0.243}\;\;&{28.91}&{0.936}&{0.364} \;\;&{26.06}&{0.881}&{0.412}\;\;&{94}\\
				&S2S-WTV&\bf{33.84}&\bf{0.979}&\bf{0.057}\;\;&\bf{29.16}&\bf{0.935}&\bf{0.076}\;\;&\bf{27.64}&\bf{0.889}&\bf{0.053}\;\;&\bf{35.48}&\bf{0.985}&\bf{0.082}\;\;&\bf{33.00}&\bf{0.975}&\bf{0.114} \;\;&\bf{30.23}&\bf{0.952}&\bf{0.068}\;\;&{108}\\
				\bottomrule
			\end{tabular}
		\end{spacing}
	\end{center}
	\vspace{-0.4cm}
\end{table*}    
\begin{figure*}[!h]
	\scriptsize
	\setlength{\tabcolsep}{0.9pt}
	\begin{center}
		\begin{tabular}{ccccccccc}
			\includegraphics[width=0.108\textwidth]{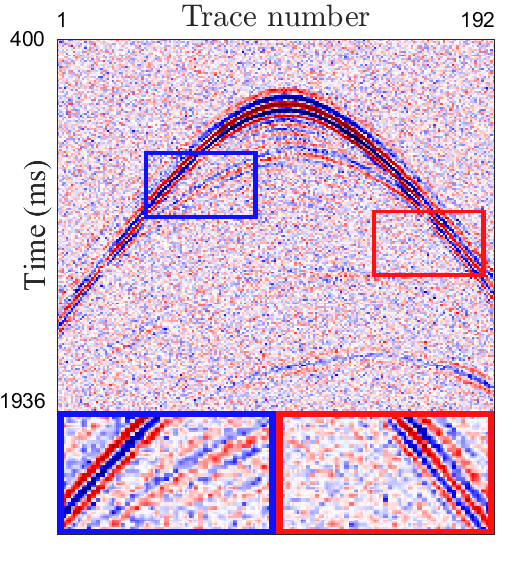}&
			\includegraphics[width=0.108\textwidth]{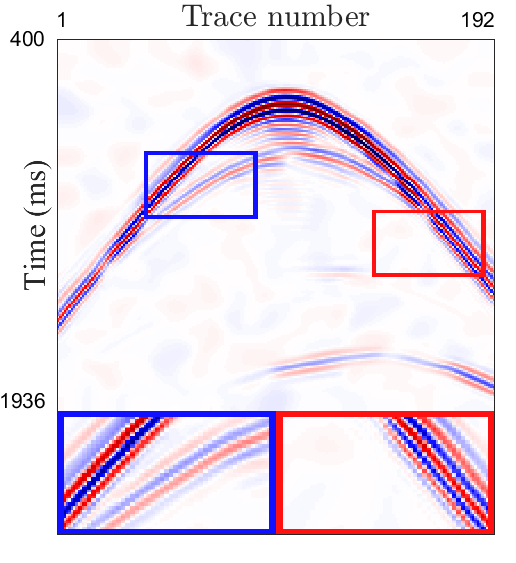}&
			\includegraphics[width=0.108\textwidth]{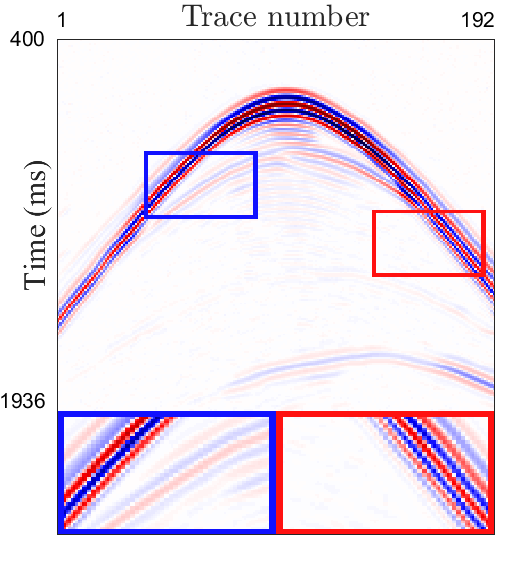}&
			\includegraphics[width=0.108\textwidth]{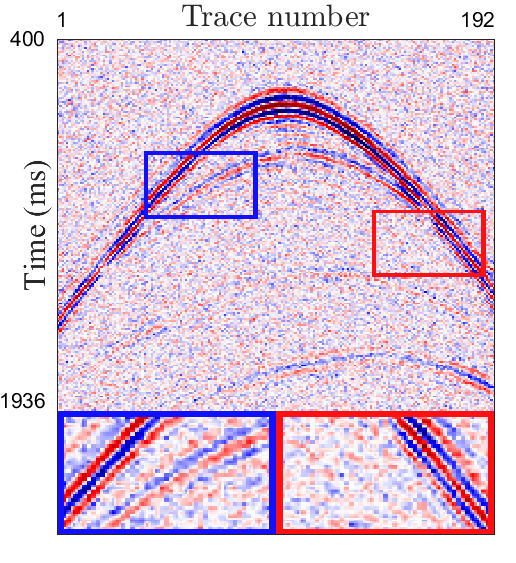}&
			\includegraphics[width=0.108\textwidth]{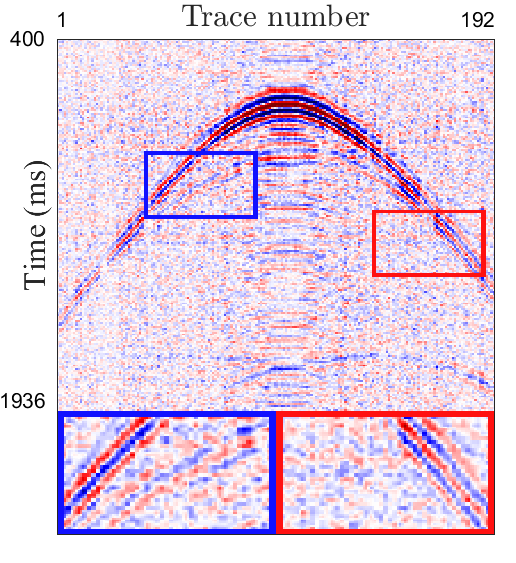}&
			\includegraphics[width=0.108\textwidth]{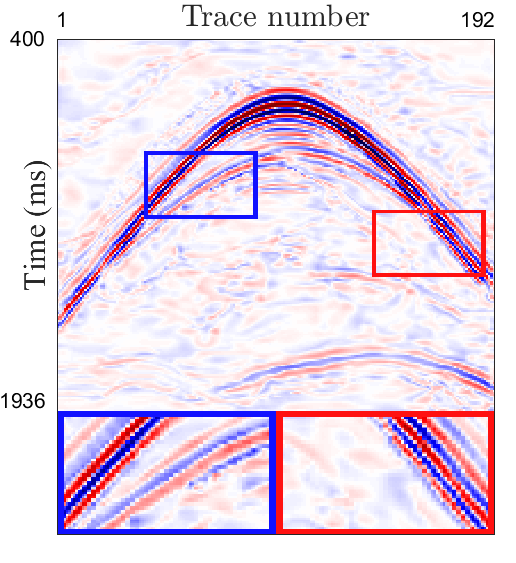}&
			\includegraphics[width=0.108\textwidth]{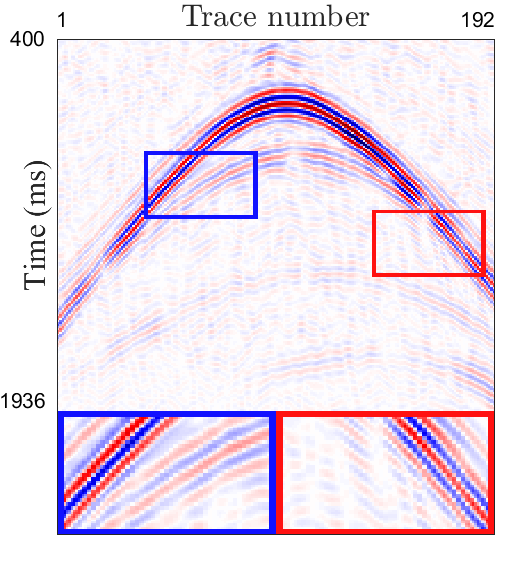}&
			\includegraphics[width=0.108\textwidth]{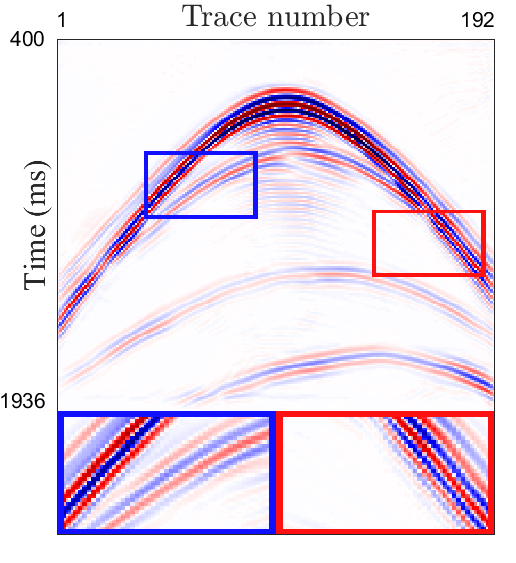}&
			\includegraphics[width=0.108\textwidth]{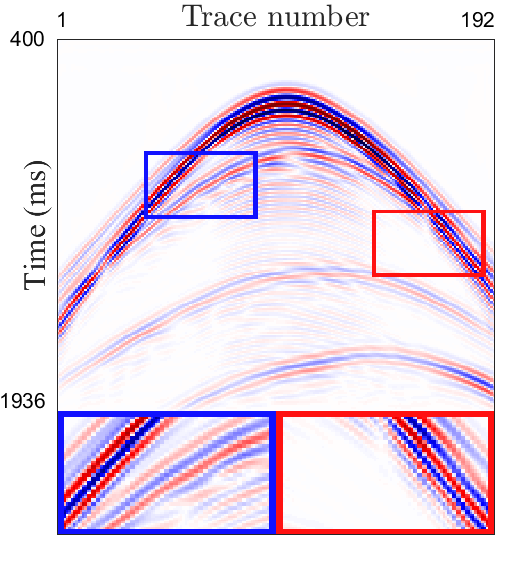}\\
			&
			\includegraphics[width=0.108\textwidth]{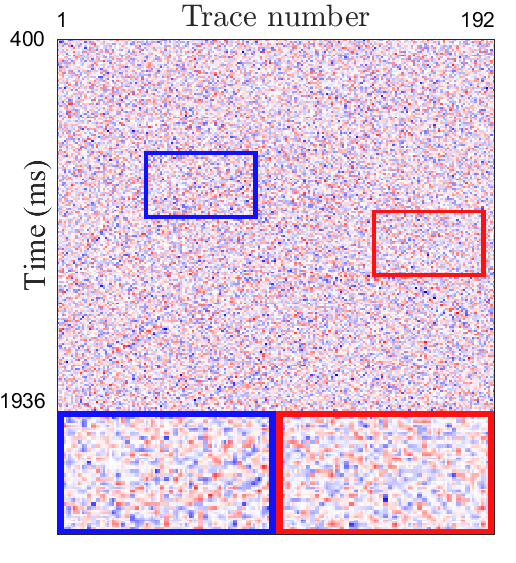}&
			\includegraphics[width=0.108\textwidth]{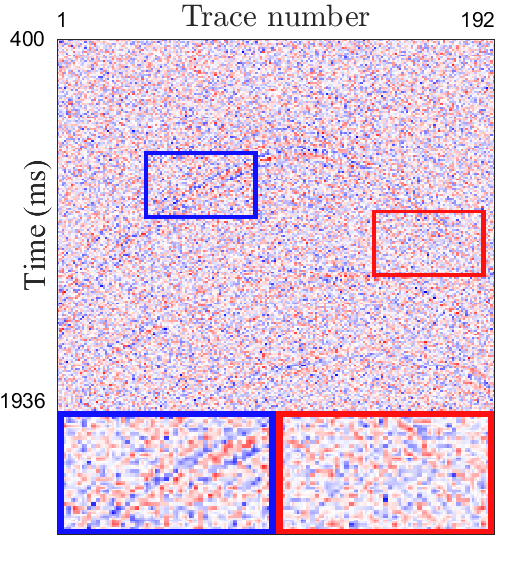}&
			\includegraphics[width=0.108\textwidth]{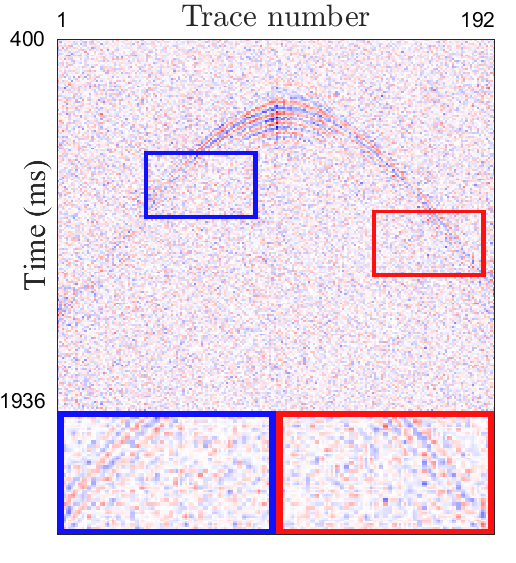}&
			\includegraphics[width=0.108\textwidth]{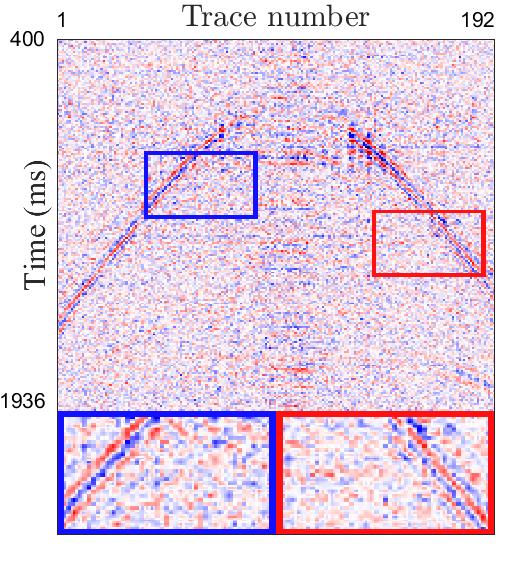}&
			\includegraphics[width=0.108\textwidth]{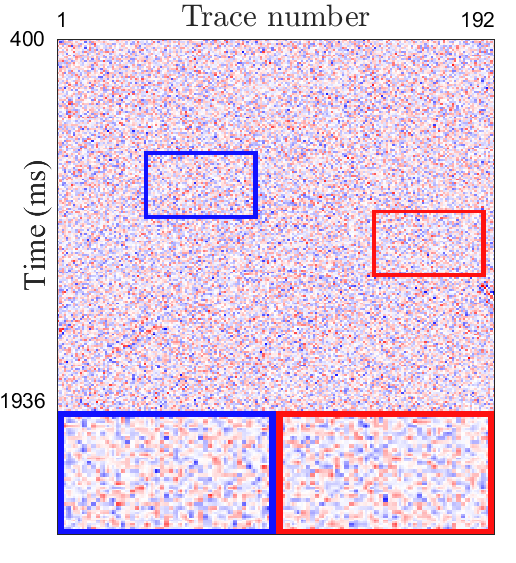}&
			\includegraphics[width=0.108\textwidth]{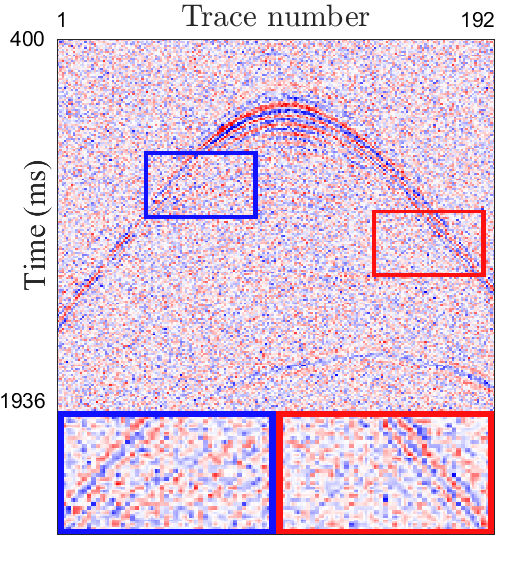}&
			\includegraphics[width=0.108\textwidth]{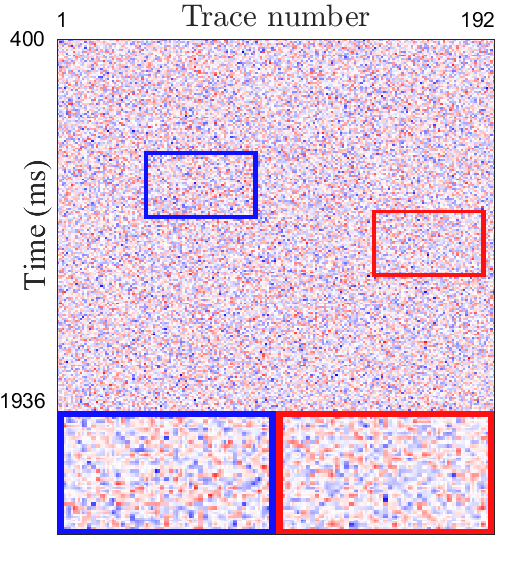}&
			\includegraphics[width=0.108\textwidth]{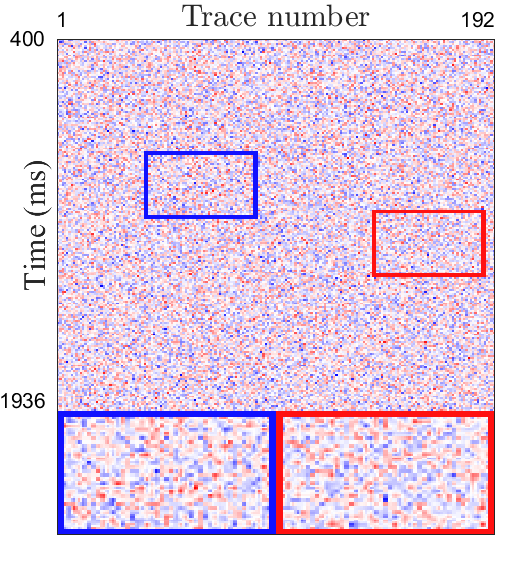}\\
			\includegraphics[width=0.108\textwidth]{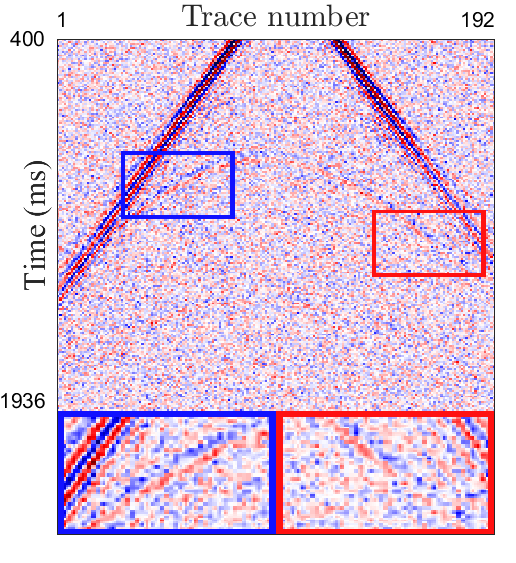}&
			\includegraphics[width=0.108\textwidth]{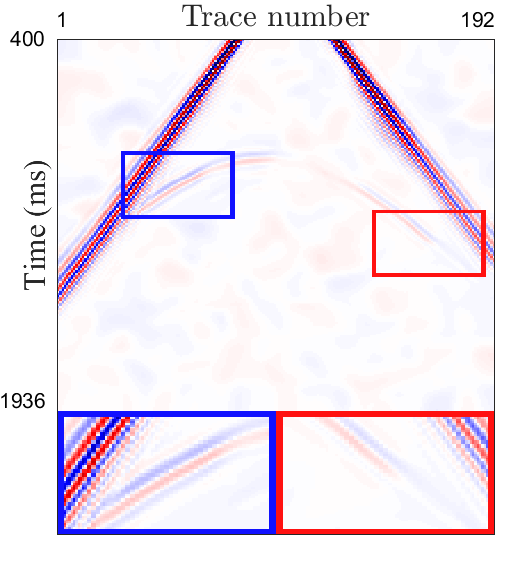}&
			\includegraphics[width=0.108\textwidth]{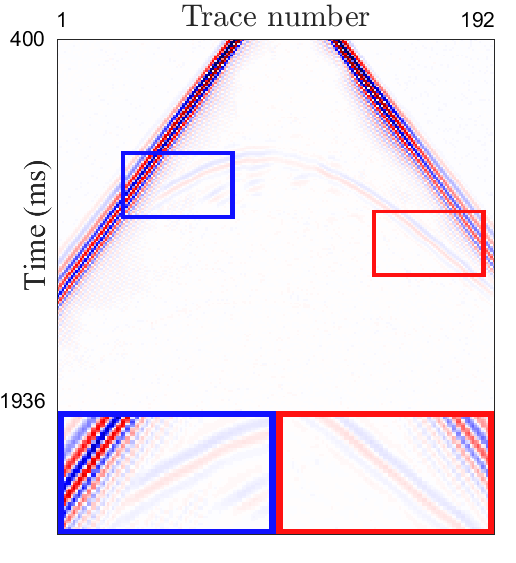}&
			\includegraphics[width=0.108\textwidth]{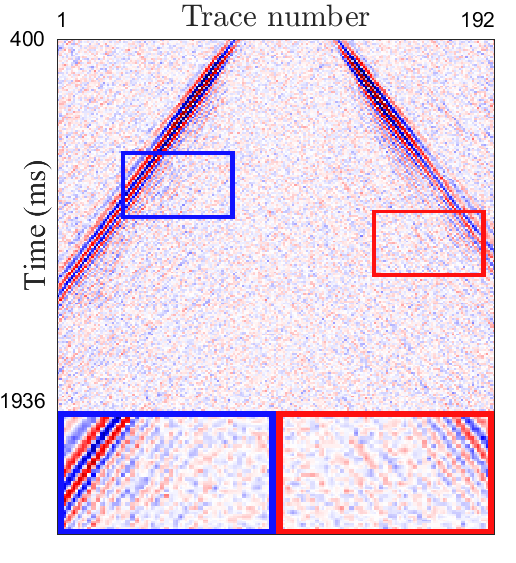}&
			\includegraphics[width=0.108\textwidth]{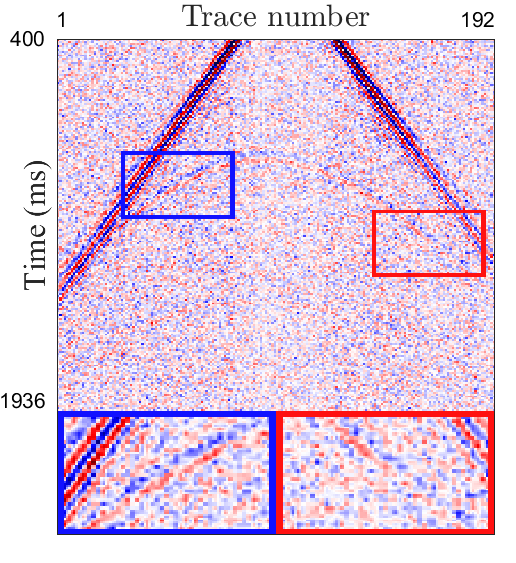}&
			\includegraphics[width=0.108\textwidth]{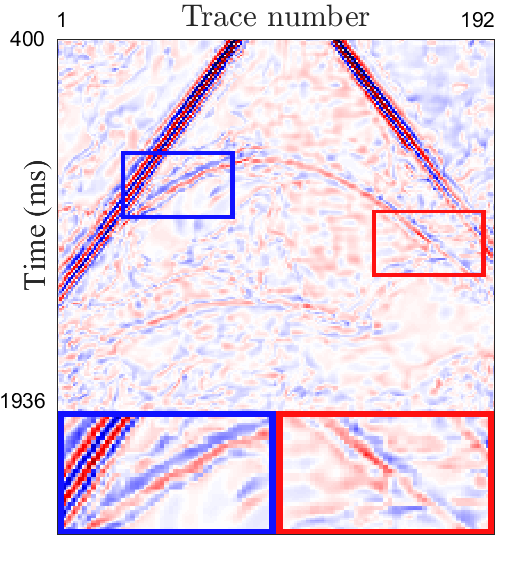}&
			\includegraphics[width=0.108\textwidth]{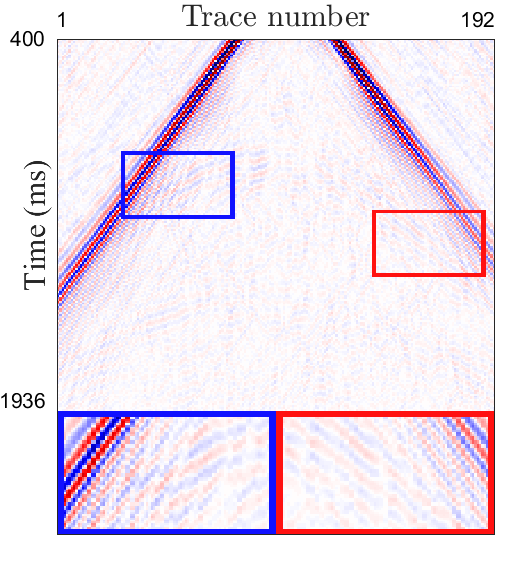}&
			\includegraphics[width=0.108\textwidth]{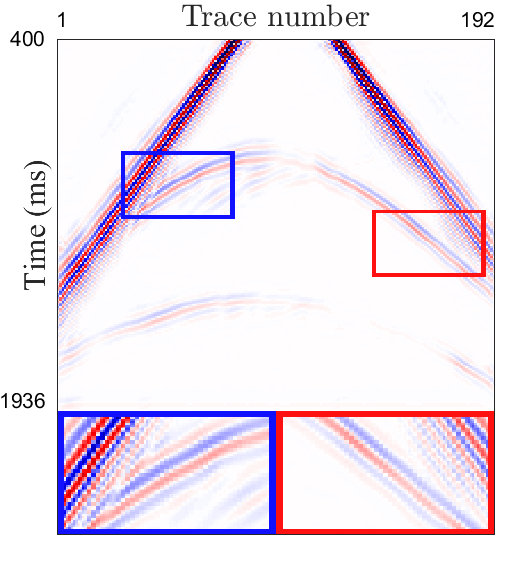}&
			\includegraphics[width=0.108\textwidth]{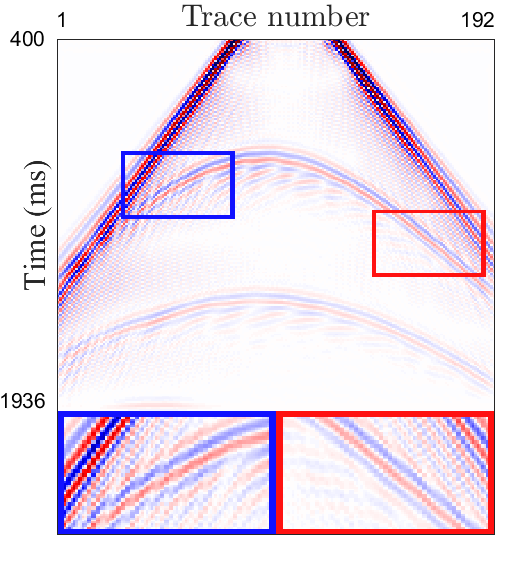}\\
			&
			\includegraphics[width=0.108\textwidth]{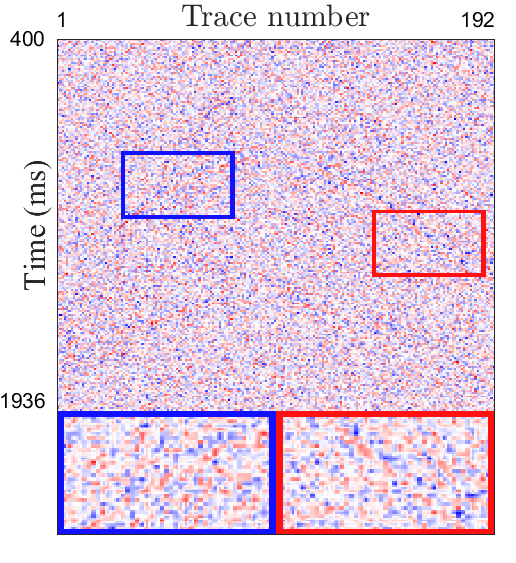}&
			\includegraphics[width=0.108\textwidth]{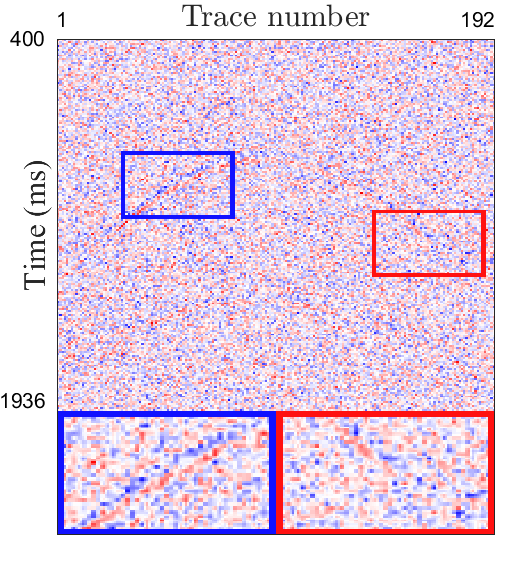}&
			\includegraphics[width=0.108\textwidth]{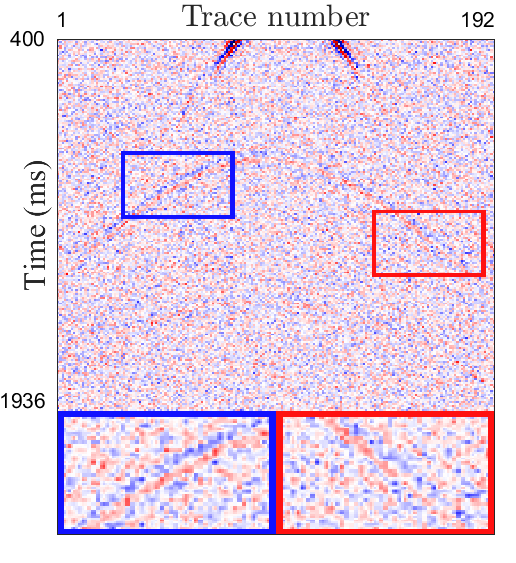}&
			\includegraphics[width=0.108\textwidth]{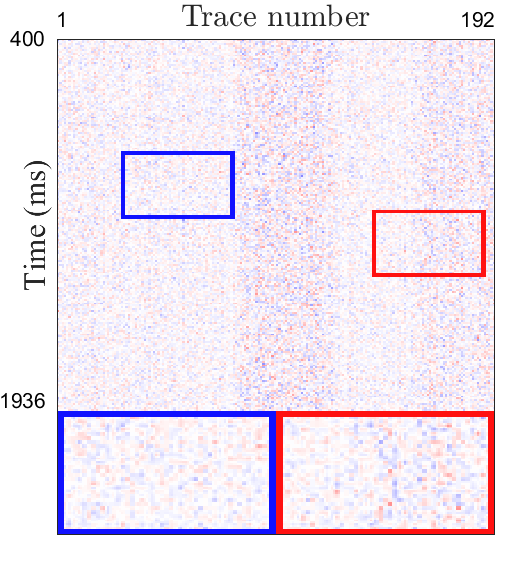}&
			\includegraphics[width=0.108\textwidth]{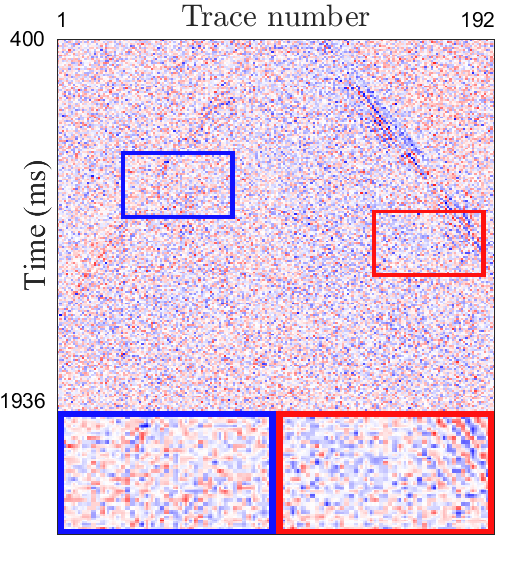}&
			\includegraphics[width=0.108\textwidth]{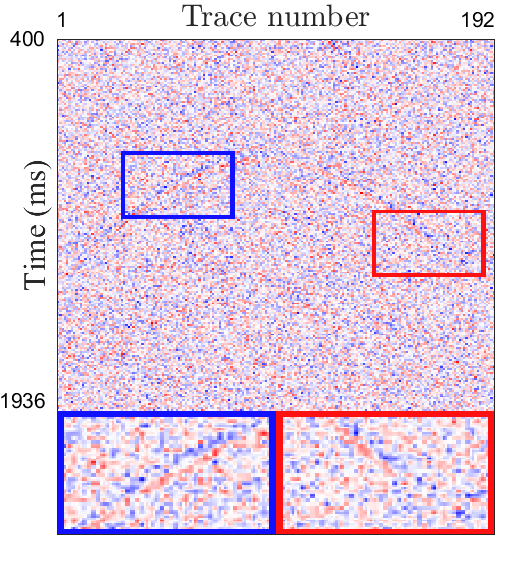}&
			\includegraphics[width=0.108\textwidth]{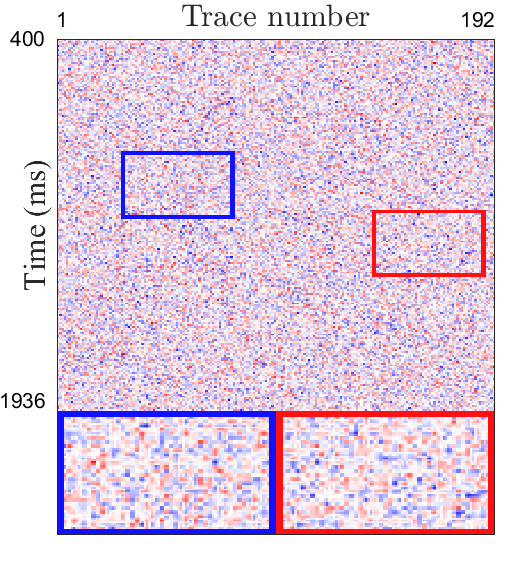}&
			\includegraphics[width=0.108\textwidth]{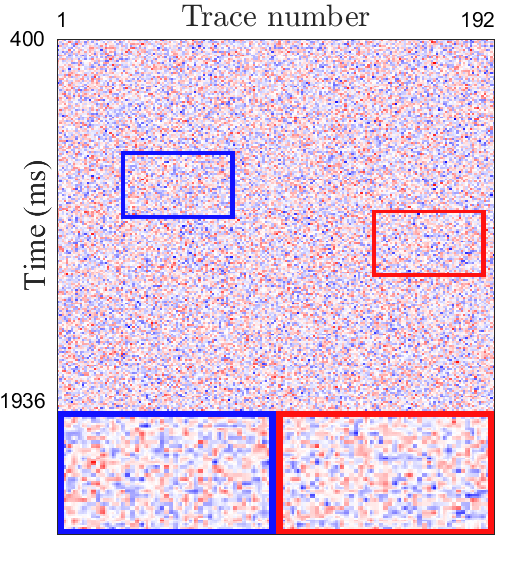}\\
			Noisy&BM3D\cite{BM3D}&WNNM\cite{WNNM}&MSSA\cite{MSSA}&DDAE\cite{DDAE}&DIP\cite{DIP}&PATCHUNET\cite{GP_PATCHUNET}&S2S-WTV&Original\\
			\vspace{-0.6cm}
		\end{tabular}
	\end{center}
	\caption{The noise attenuation results by different methods (the first and third rows) and the corresponding residual maps between the noisy data and denoising results (the second and fourth rows) on synthetic pre-stack {\it Datasets (4)-(5)} with Gaussian noise ($\sigma=0.1$).\label{fig_results_2}}
	\vspace{-0.5cm}
\end{figure*}
\begin{figure*}[t]
	\scriptsize
	\setlength{\tabcolsep}{0.9pt}
	\begin{center}
		\begin{tabular}{ccccccccc}
			\includegraphics[width=0.108\textwidth]{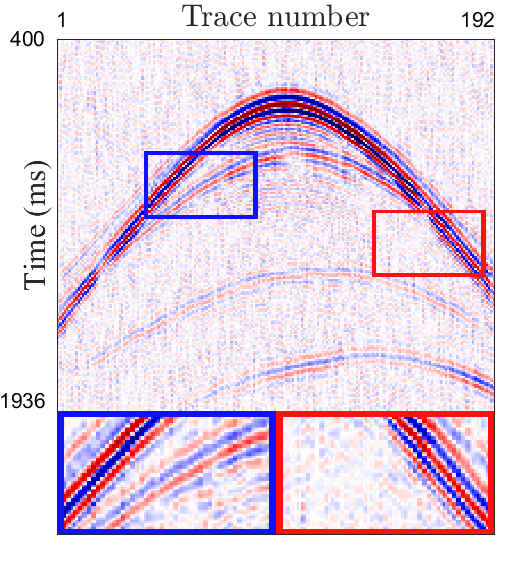}&
			\includegraphics[width=0.108\textwidth]{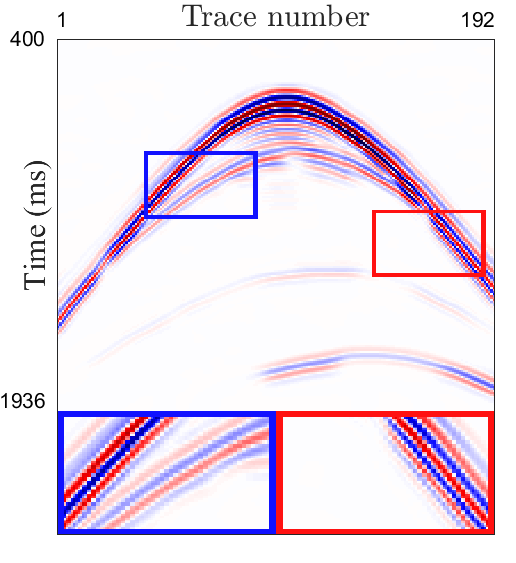}&
			\includegraphics[width=0.108\textwidth]{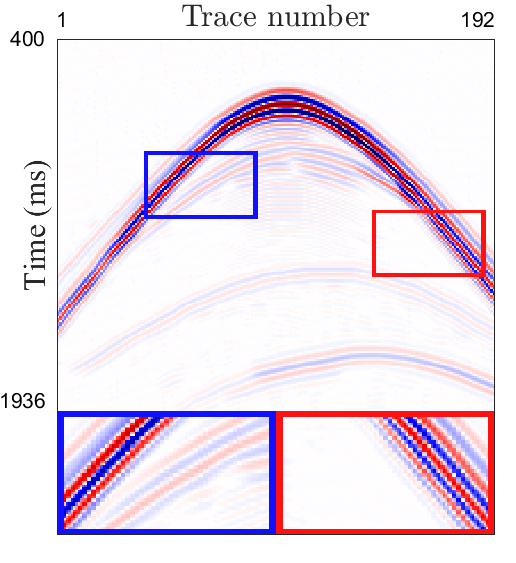}&
			\includegraphics[width=0.108\textwidth]{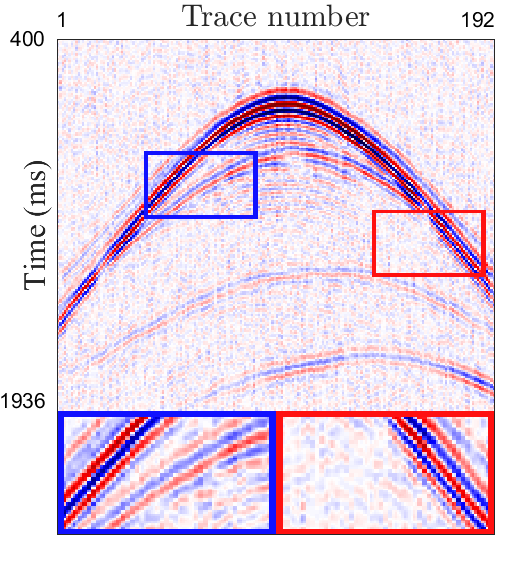}&
			\includegraphics[width=0.108\textwidth]{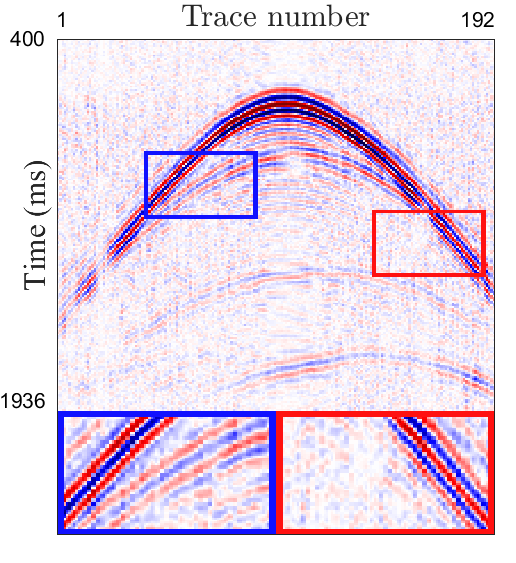}&
			\includegraphics[width=0.108\textwidth]{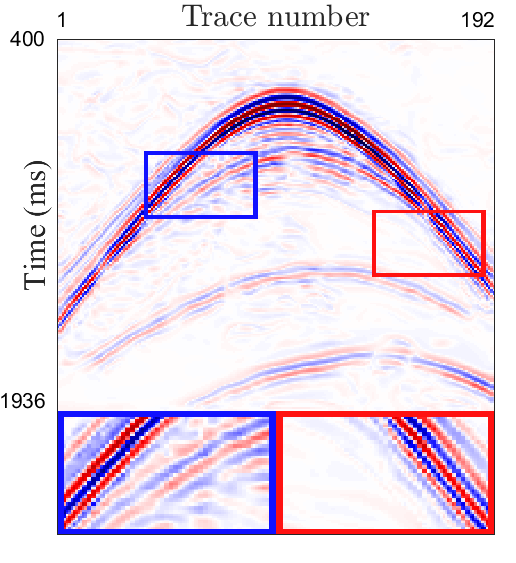}&
			\includegraphics[width=0.108\textwidth]{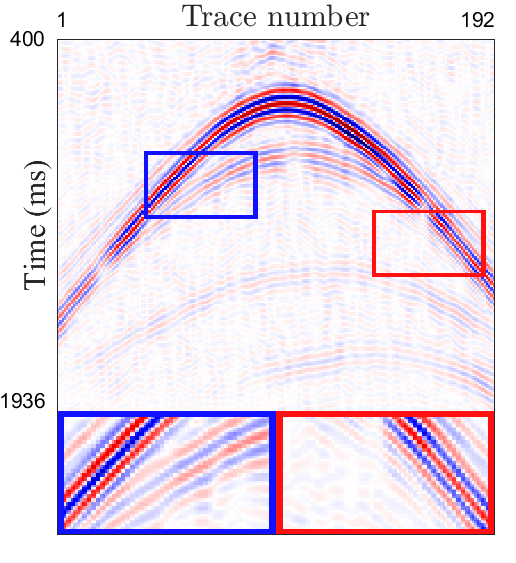}&
			\includegraphics[width=0.108\textwidth]{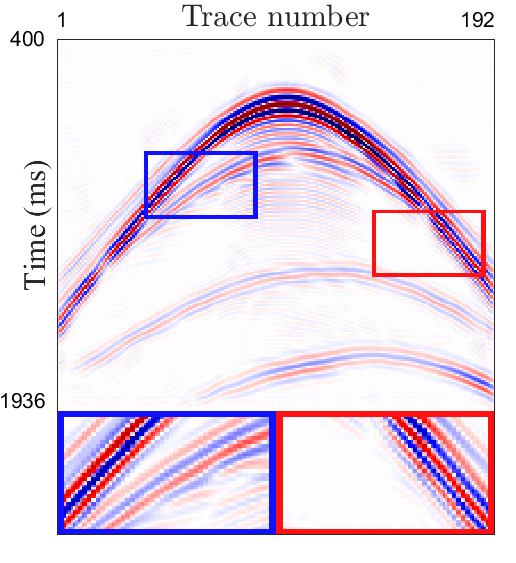}&
			\includegraphics[width=0.108\textwidth]{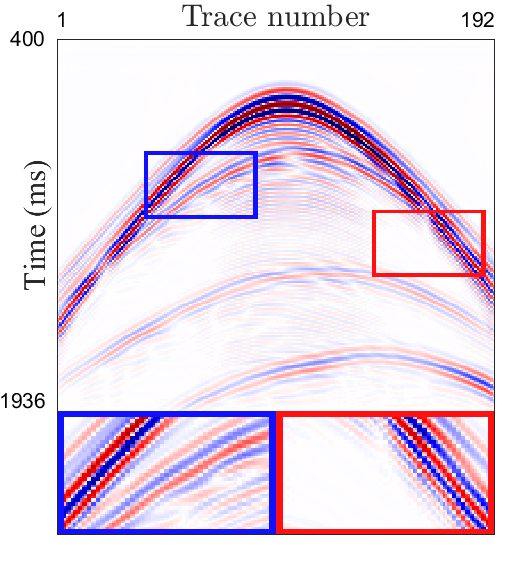}\\
			&
			\includegraphics[width=0.108\textwidth]{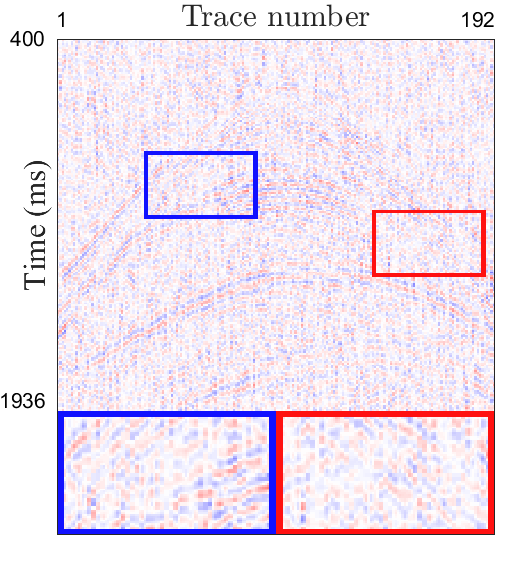}&
			\includegraphics[width=0.108\textwidth]{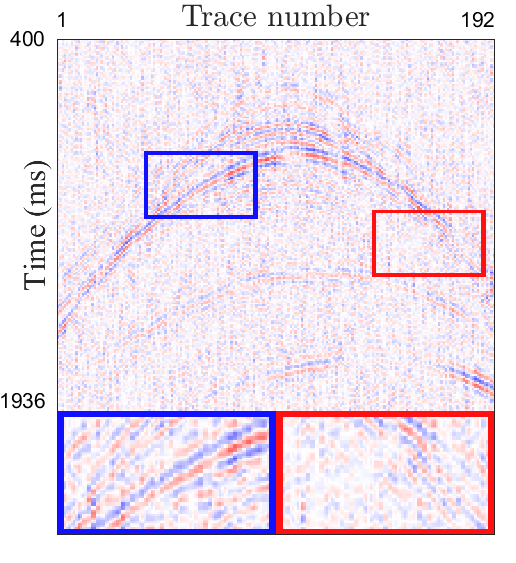}&
			\includegraphics[width=0.108\textwidth]{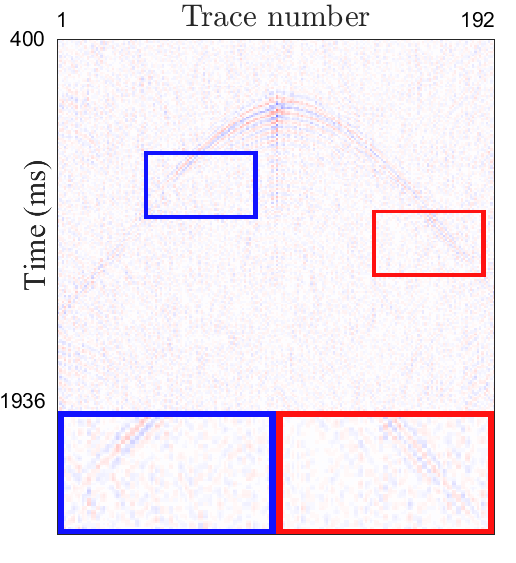}&
			\includegraphics[width=0.108\textwidth]{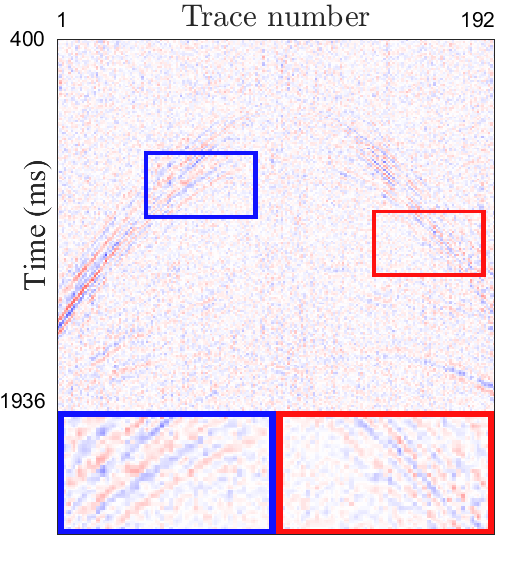}&
			\includegraphics[width=0.108\textwidth]{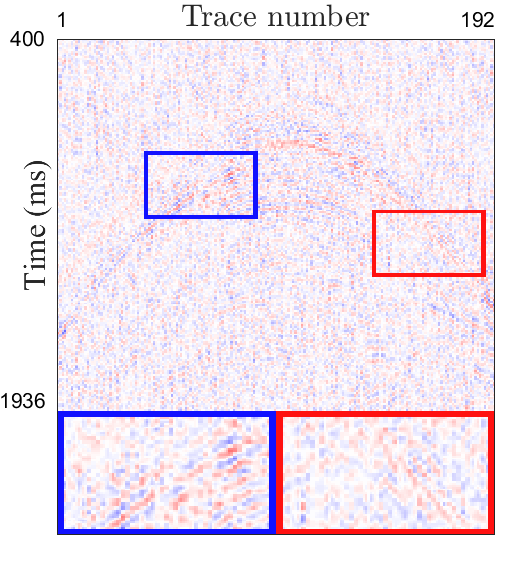}&
			\includegraphics[width=0.108\textwidth]{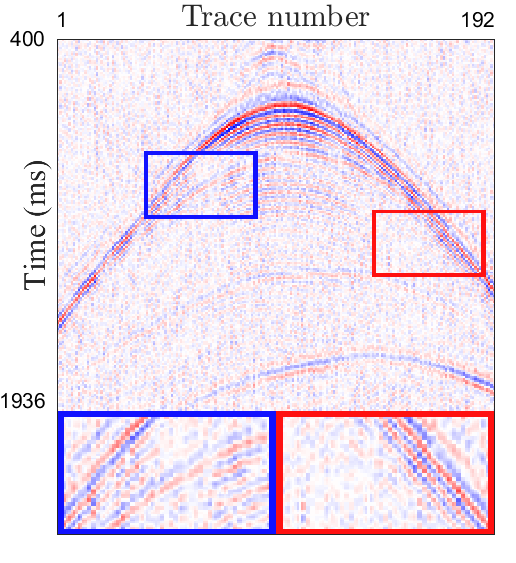}&
			\includegraphics[width=0.108\textwidth]{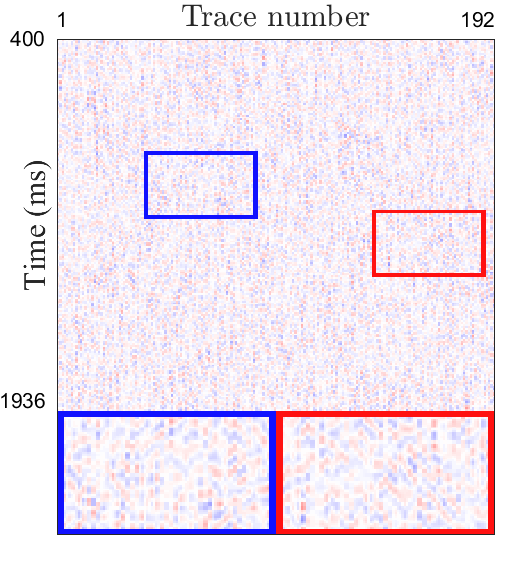}&
			\includegraphics[width=0.108\textwidth]{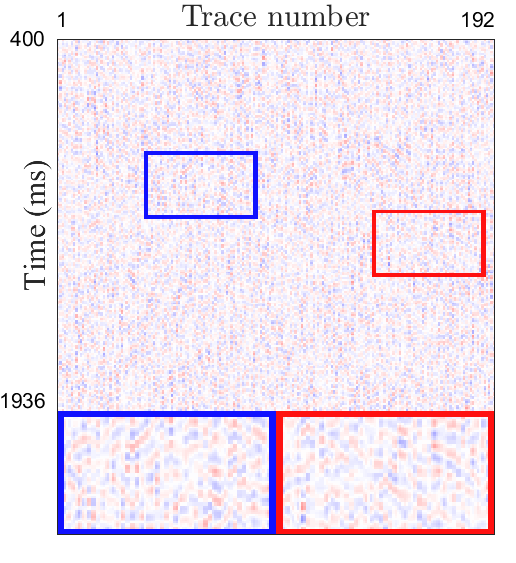}\\
			\includegraphics[width=0.108\textwidth]{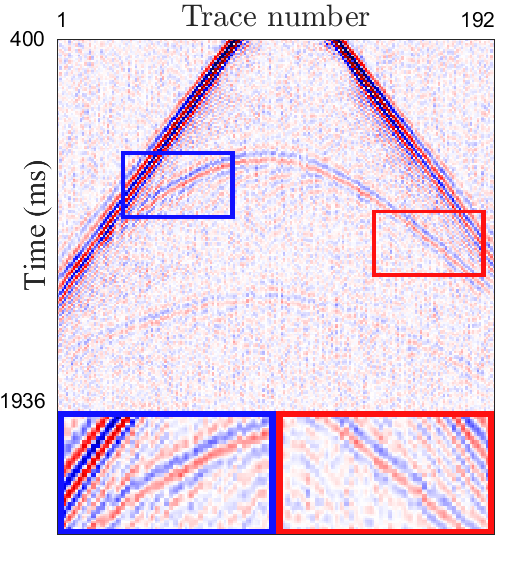}&
			\includegraphics[width=0.108\textwidth]{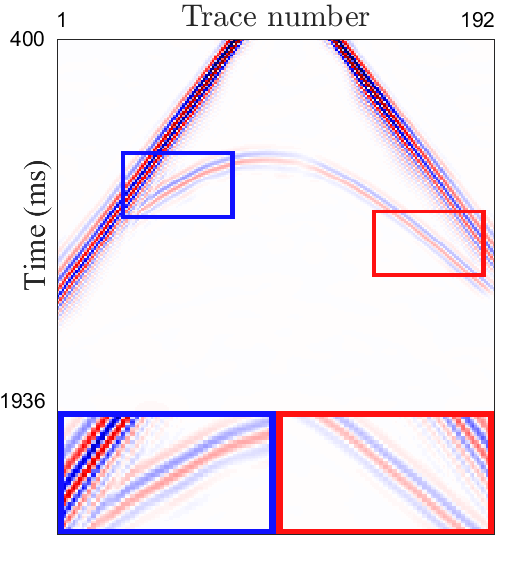}&
			\includegraphics[width=0.108\textwidth]{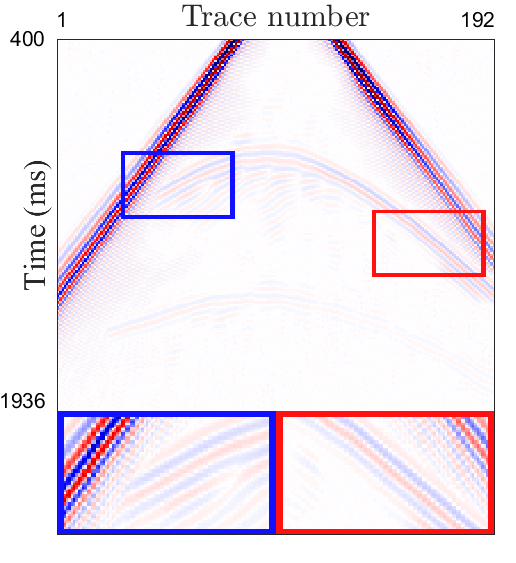}&
			\includegraphics[width=0.108\textwidth]{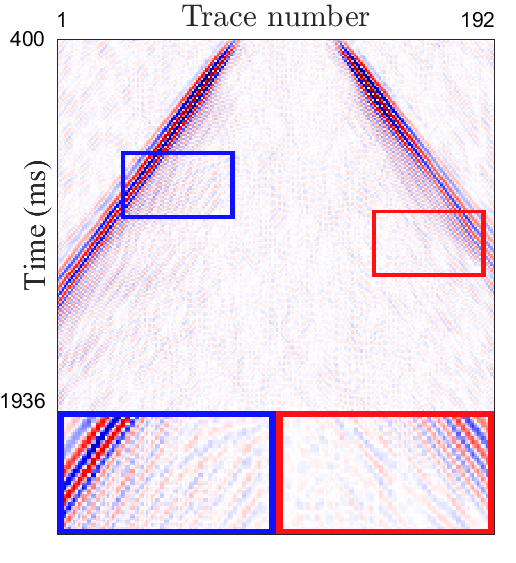}&
			\includegraphics[width=0.108\textwidth]{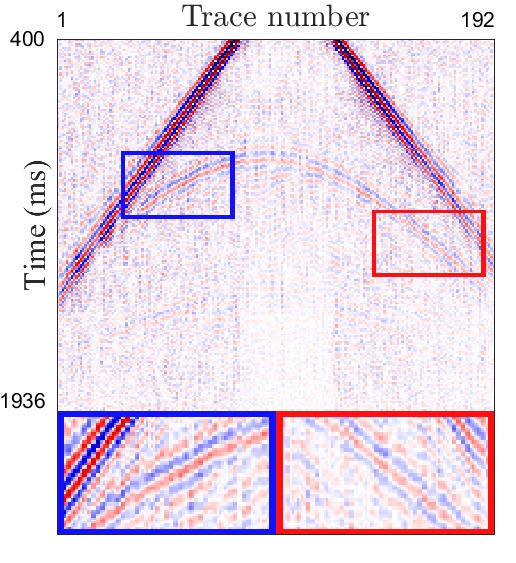}&
			\includegraphics[width=0.108\textwidth]{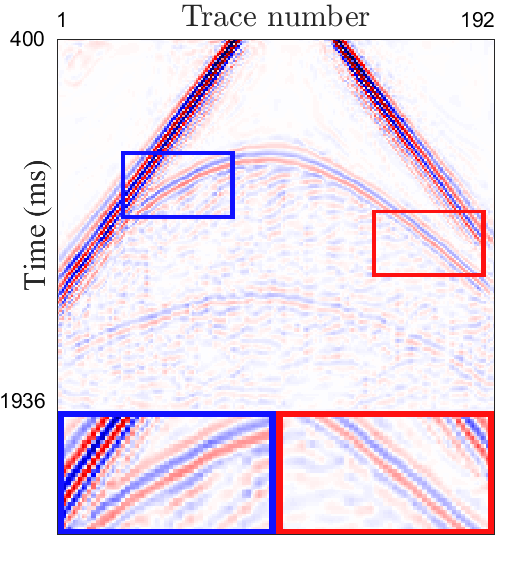}&
			\includegraphics[width=0.108\textwidth]{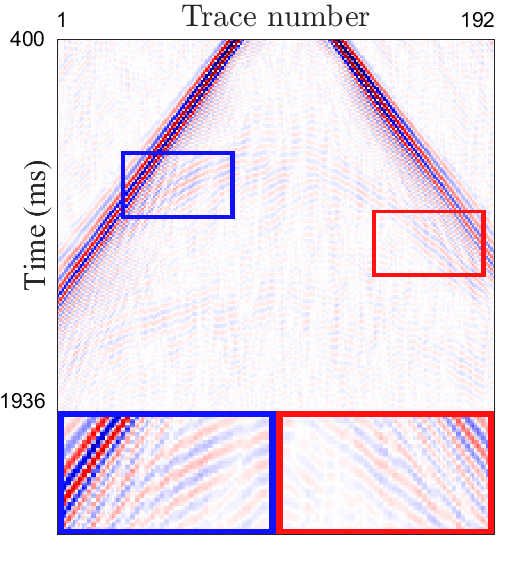}&
			\includegraphics[width=0.108\textwidth]{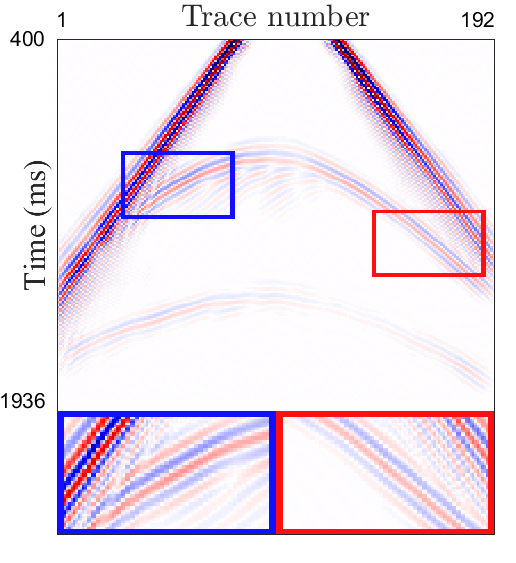}&
			\includegraphics[width=0.108\textwidth]{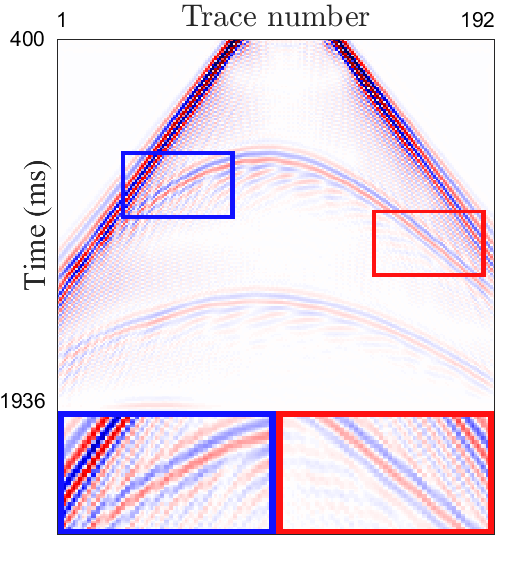}\\
			&
			\includegraphics[width=0.108\textwidth]{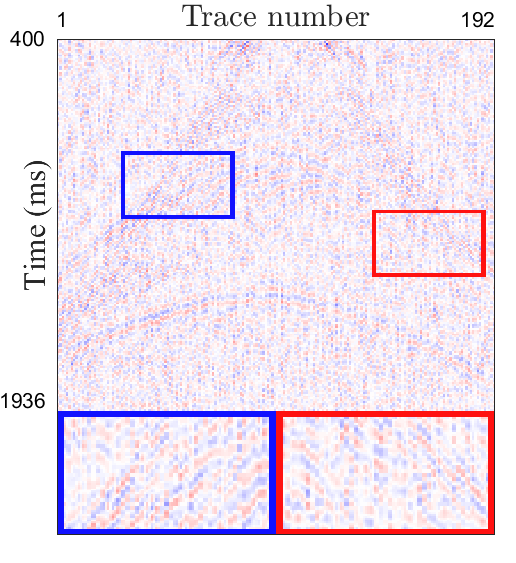}&
			\includegraphics[width=0.108\textwidth]{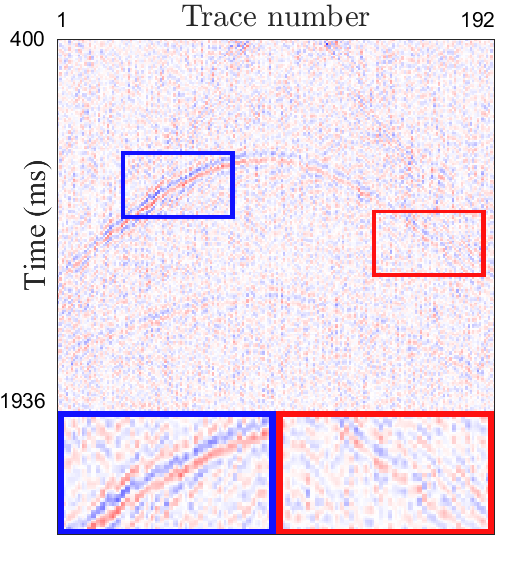}&
			\includegraphics[width=0.108\textwidth]{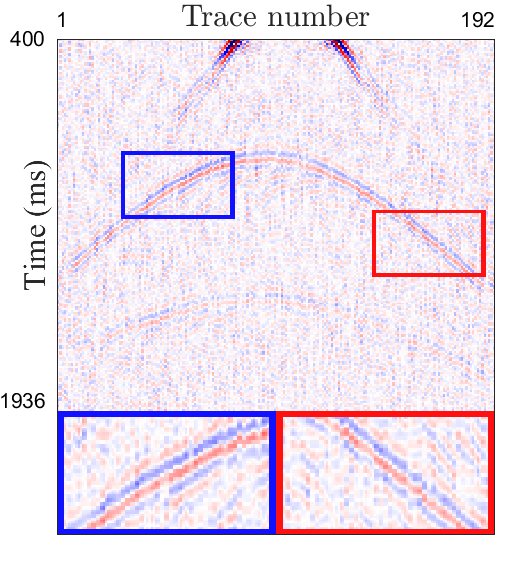}&
			\includegraphics[width=0.108\textwidth]{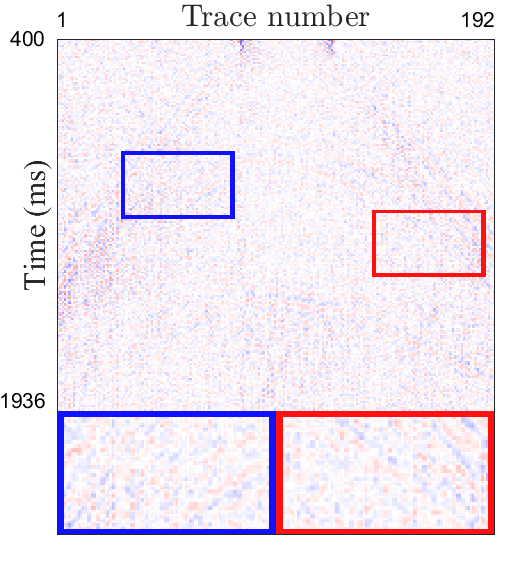}&
			\includegraphics[width=0.108\textwidth]{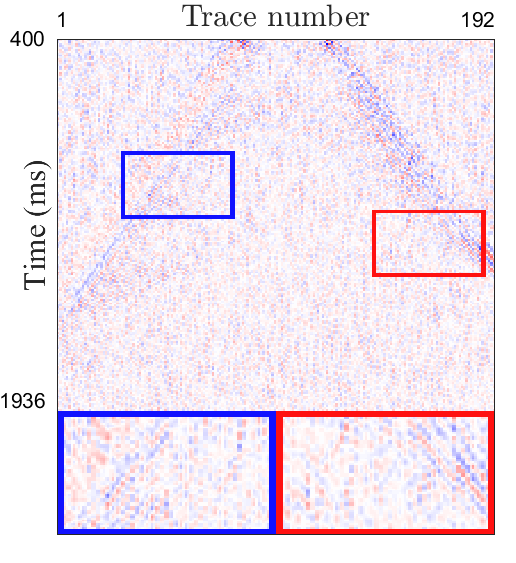}&
			\includegraphics[width=0.108\textwidth]{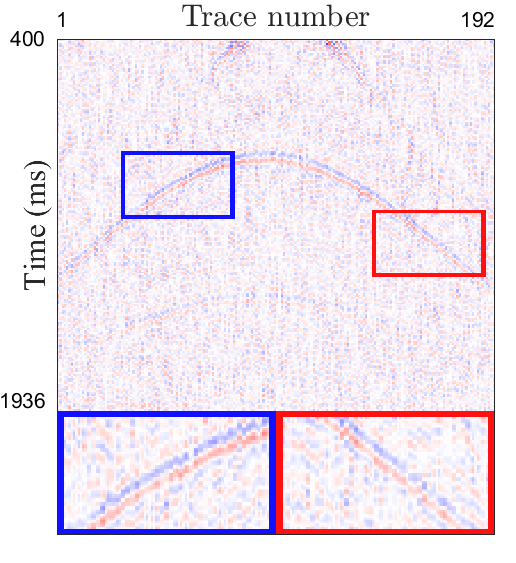}&
			\includegraphics[width=0.108\textwidth]{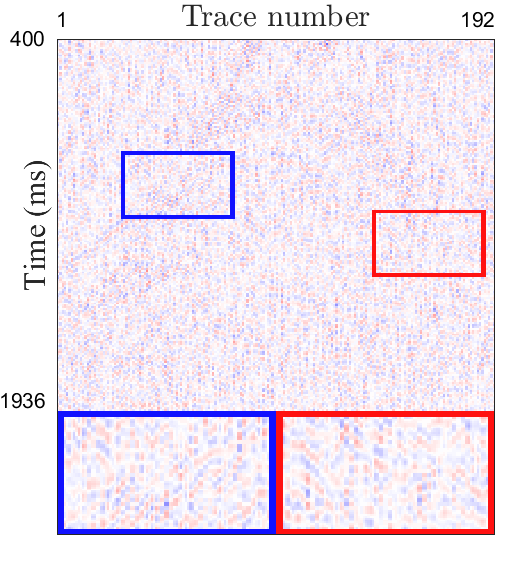}&
			\includegraphics[width=0.108\textwidth]{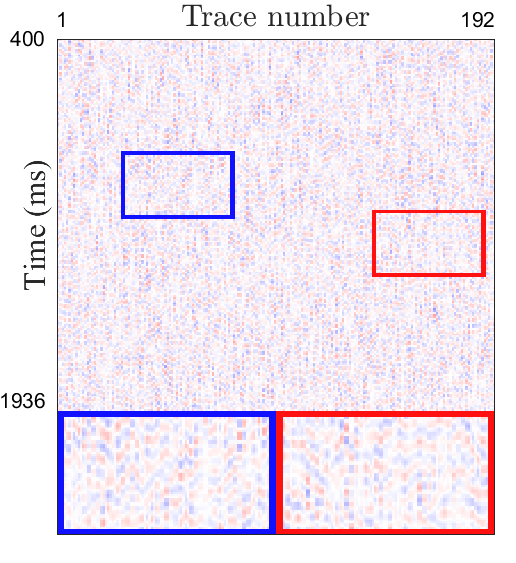}\\
			Noisy&BM3D\cite{BM3D}&WNNM\cite{WNNM}&MSSA\cite{MSSA}&DDAE\cite{DDAE}&DIP\cite{DIP}&PATCHUNET\cite{GP_PATCHUNET}&S2S-WTV&Original\\
			\vspace{-0.6cm}
		\end{tabular}
	\end{center}
	\caption{The noise attenuation results by different methods (the first and third rows) and the corresponding residual maps between the noisy data and denoising results (the second and fourth rows) on synthetic pre-stack {\it Datasets (4)-(5)} with bandpass noise ($\sigma=0.1$).\label{fig_results_3}}
	\vspace{-0.3cm}
\end{figure*}
\begin{figure*}[t]
	\scriptsize
	\setlength{\tabcolsep}{0.9pt}
	\begin{center}
		\begin{tabular}{cccccccc}
			\includegraphics[width=0.108\textwidth]{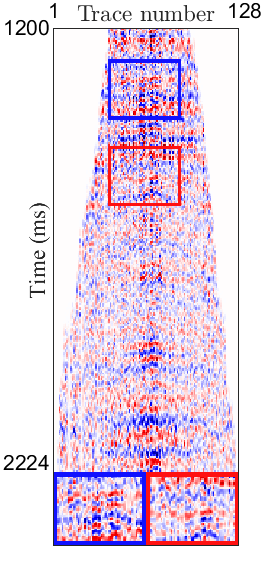}&
			\includegraphics[width=0.108\textwidth]{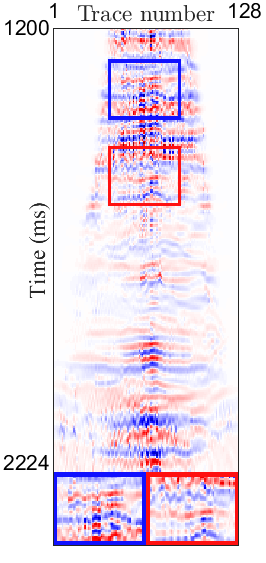}&
			\includegraphics[width=0.108\textwidth]{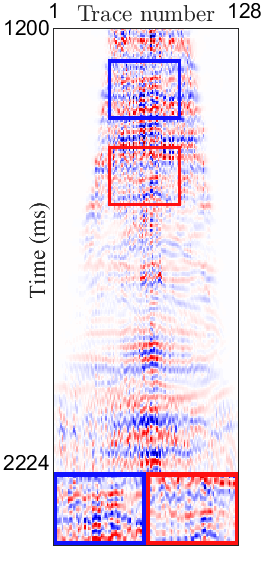}&
			\includegraphics[width=0.108\textwidth]{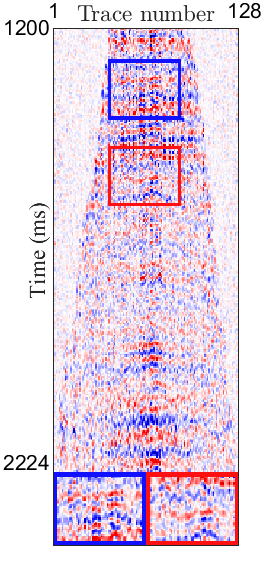}&
			\includegraphics[width=0.108\textwidth]{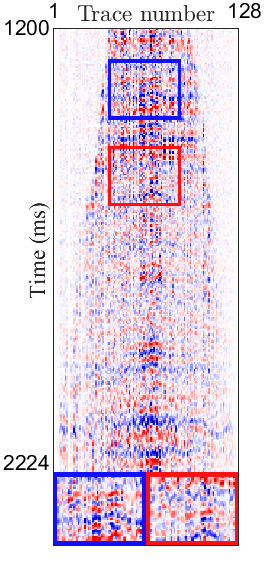}&
			\includegraphics[width=0.108\textwidth]{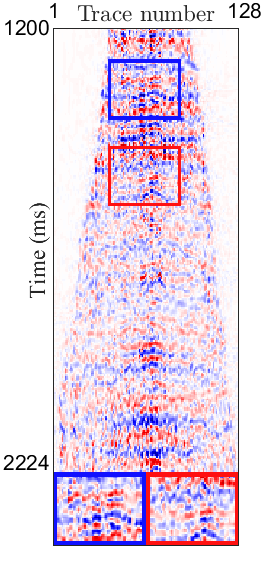}&
			\includegraphics[width=0.108\textwidth]{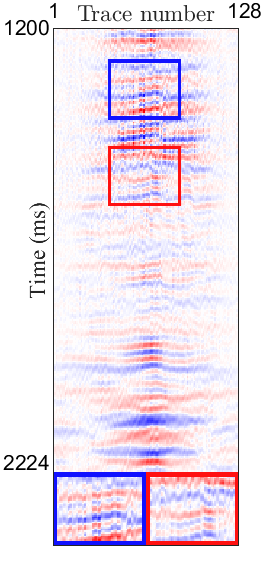}&
			\includegraphics[width=0.108\textwidth]{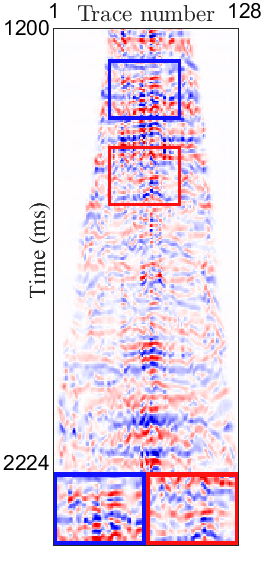}\\
			&
			\includegraphics[width=0.108\textwidth]{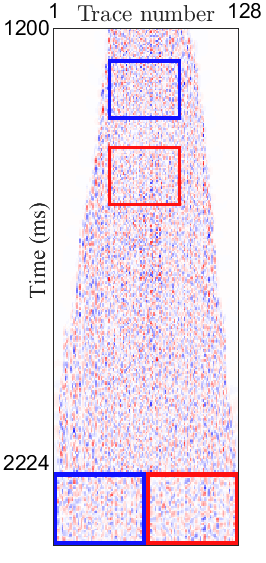}&
			\includegraphics[width=0.108\textwidth]{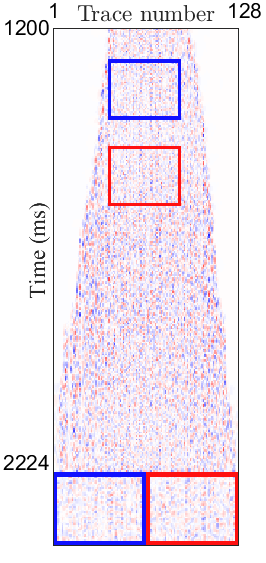}&
			\includegraphics[width=0.108\textwidth]{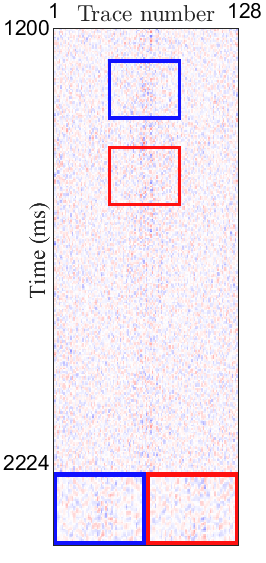}&
			\includegraphics[width=0.108\textwidth]{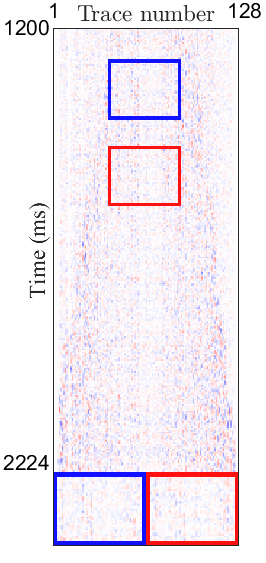}&
			\includegraphics[width=0.108\textwidth]{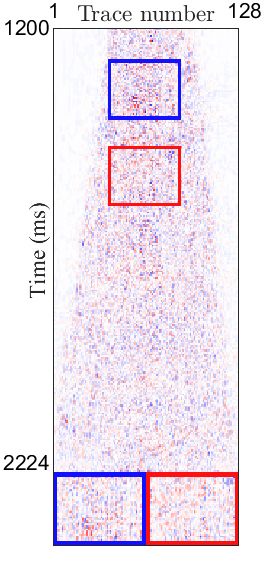}&
			\includegraphics[width=0.108\textwidth]{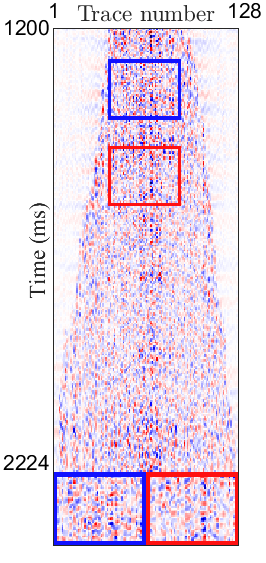}&
			\includegraphics[width=0.108\textwidth]{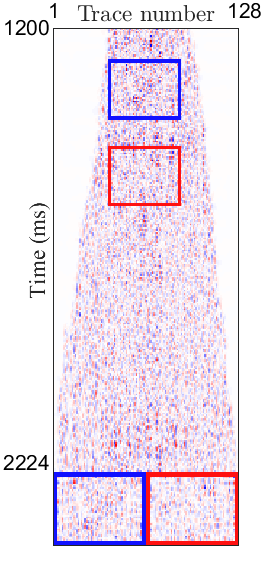}\\
			\includegraphics[width=0.108\textwidth]{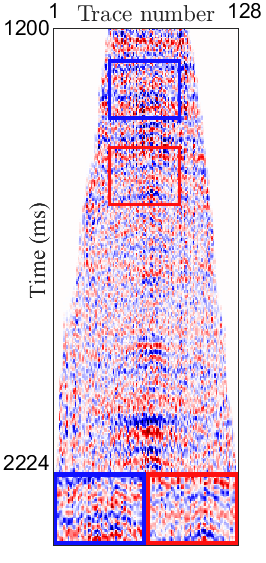}&
			\includegraphics[width=0.108\textwidth]{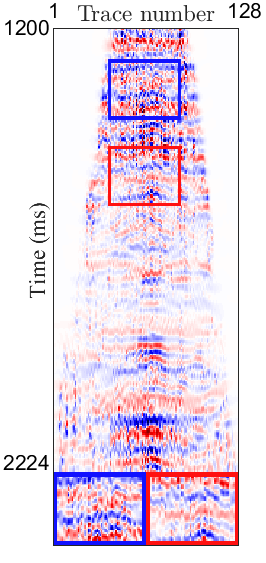}&
			\includegraphics[width=0.108\textwidth]{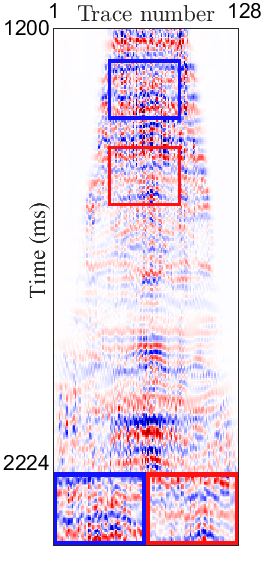}&
			\includegraphics[width=0.108\textwidth]{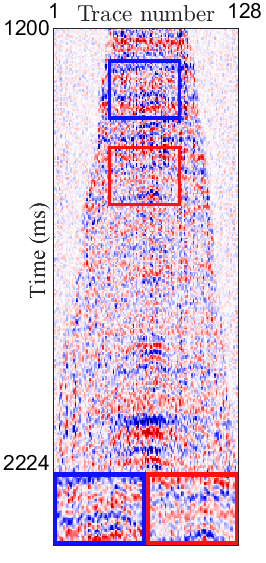}&
			\includegraphics[width=0.108\textwidth]{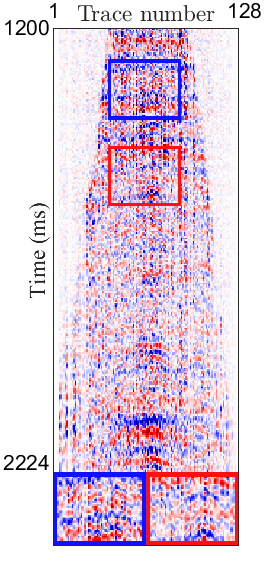}&
			\includegraphics[width=0.108\textwidth]{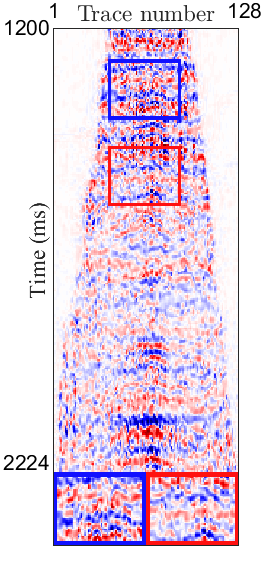}&
			\includegraphics[width=0.108\textwidth]{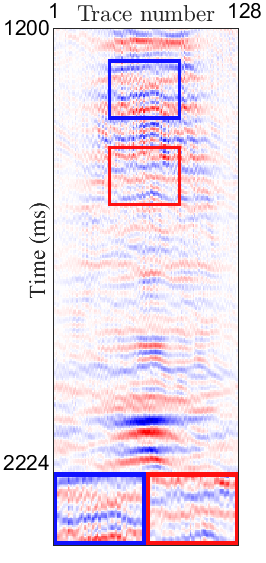}&
			\includegraphics[width=0.108\textwidth]{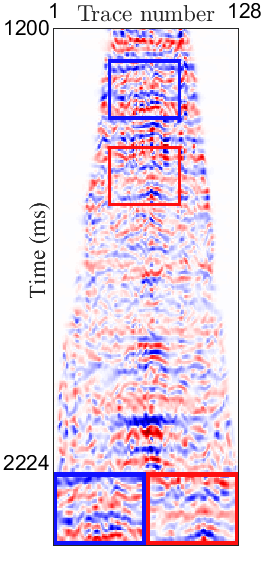}\\
			&
			\includegraphics[width=0.108\textwidth]{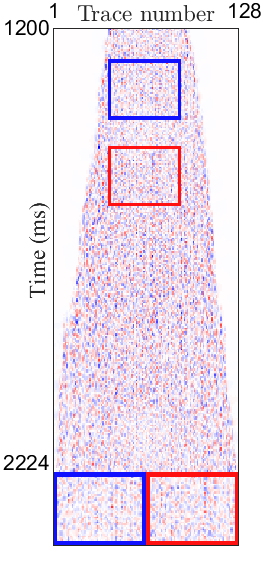}&
			\includegraphics[width=0.108\textwidth]{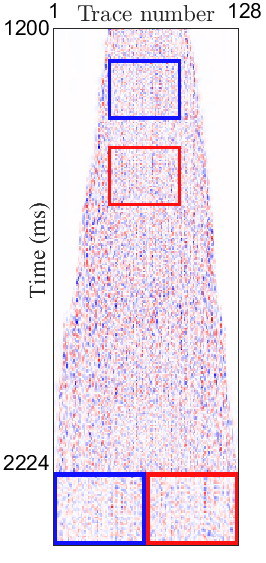}&
			\includegraphics[width=0.108\textwidth]{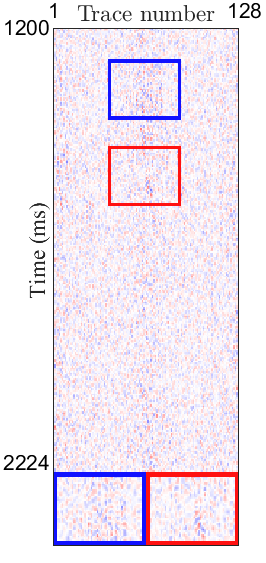}&
			\includegraphics[width=0.108\textwidth]{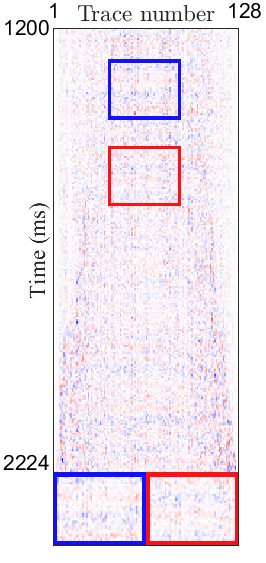}&
			\includegraphics[width=0.108\textwidth]{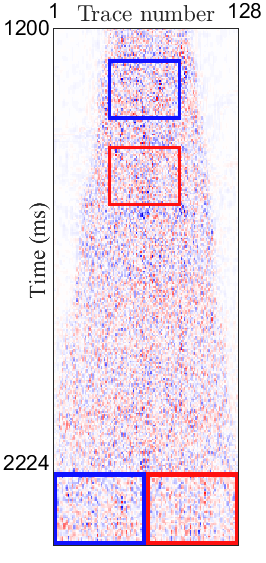}&
			\includegraphics[width=0.108\textwidth]{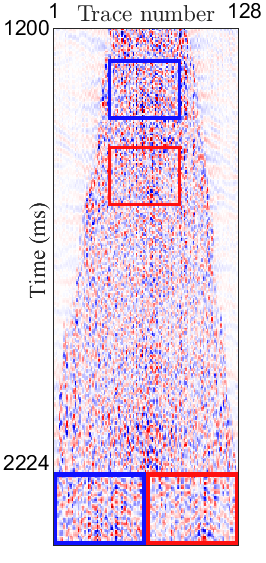}&
			\includegraphics[width=0.108\textwidth]{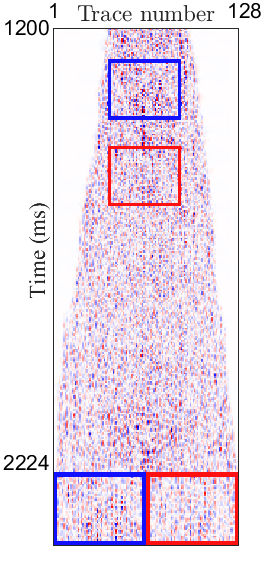}\\
			&LS 0.457&LS 0.424&LS 0.426&LS 0.347&LS 0.200&LS 0.409&LS 0.163\\
			&Time 5s&Time 269s&Time 5s&Time 48s&Times 882s&Time 329s&Time 143s\\
			Noisy&BM3D\cite{BM3D}&WNNM\cite{WNNM}&MSSA\cite{MSSA}&DDAE\cite{DDAE}&DIP\cite{DIP}&PATCHUNET\cite{GP_PATCHUNET}&S2S-WTV\\
			\vspace{-0.6cm}
		\end{tabular}
	\end{center}
	\caption{The first two CMP gathers of the noise attenuation results by different methods (the first and third rows) and the corresponding residual maps between the noisy data and denoising results (the second and fourth rows) on field noisy seismic data {\it X-Pre}.\label{fig_results_real_0}}
	\vspace{-0.2cm}
\end{figure*}
\subsubsection{Fine-Tuning Strategy for High-Dimensional Data}
In some scenarios, we are given a group of noisy seismic data $\{{\bf Y}_k\}_{k=1}^K$ at adjacent inline/crossline positions. They often share similar structures since the positions of adjacent two receivers are close. Therefore, we develop a fine-tuning training strategy to deal with such a high-dimensional noisy data group. Specifically, we use the well-trained CNN weights of ${\bf Y}_1$ (denoted by $\theta_1$) as the initial weights for training the CNN $f_{\theta_k}(\cdot)$ of the rest data $\{{\bf Y}_{k}\}_{k=2}^{K}$. Suppose that we take a large iteration number $T_1$ of the ADMM-based algorithm to train the first CNN $f_{\theta_1}(\cdot)$ for ${\bf Y}_1$, it only takes a few iterations ${T_{k}}<<T_1$ ($k=2,3,\cdots,K$) to train the CNN $f_{\theta_{k}}(\cdot)$ for ${\bf Y}_{k}$ since $f_{\theta_{k}}(\cdot)$ adopts $\theta_1$ as the initial weights, which greatly speeds up the convergence.\par 
The ADMM-based algorithm can be seen as the inner loop of the fine-tuning strategy, where each observation ${\bf Y}_k$ is applied with $T_k$ steps of ADMM-based algorithm to train the CNN $f_{\theta_k}(\cdot)$ for $k=1,2,\cdots,K$. The overall training strategies for high-dimensional seismic data noise attenuation are summarized in Algorithm \ref{alg_1}. 
\vspace{-0.4cm}
\subsection{Inference Strategy}\label{Sec_inference}
At the inference stage, we generate multiple CNNs from the trained CNN by conducting dropout in the decoding stage. These generated CNNs are likely to have certain degree of independence, and thus can reduce the variance (noise) of the denoising result by averaging all outputs\cite{S2S}. Formally, suppose that $P$ CNNs $f_{\theta}^1(\cdot),f_{\theta}^2(\cdot),\cdots,f_{\theta}^P(\cdot)$ are generated by using random dropout in the decoding block of the trained CNN $f_{\theta}(\cdot)$. We feed $P$ newly masked instances $\{\widehat{\bf Y}'_p\}_{p=1}^P$ (with trace-wise masks) into these networks and calculate their average output
\begin{equation}\label{eq_inference}
{\bf X}'=\frac{1}{P}\sum_{p=1}^P f_{\theta}^p(\widehat{\bf Y}'_p)
\end{equation}
as the denoising result. The average output of these relatively independent CNNs can effectively reduce the variance of the result and thus obtain a cleaner recovered seismic signal. In this work, we set $P=100$ as a fixed parameter, which consistently produces satisfactory results.
\vspace{-0.2cm}
\section{Experiments}\label{Sec_exp}
In this section, we present the experimental results to verify the effectiveness of our method. We first introduce the detailed experimental settings and then present the results on synthetic and field noisy seismic datasets. All experiments are conducted on a computer with an Intel(R) i7-12700H CPU and an RTX 3070 GPU (8 GB GPU memory).
\subsection{Experimental Settings}
We compare our S2S-WTV with six representative seismic data denoising methods, including three traditional methods (BM3D\cite{BM3D}, WNNM\cite{WNNM}, and MSSA\cite{MSSA}) and three deep learning methods (DDAE\cite{DDAE}, DIP\cite{DIP}, and PATCHUNET\cite{GP_PATCHUNET}). Here, DDAE is a supervised method and we use the pre-trained model provided by the authors. DIP and PATCHUNET are self-supervised methods that solely use the observed noisy data to train the network. We tune the hyperparameters of different methods to obtain the best peak signal-to-noise ratio.\par
We include both synthetic and field noisy seismic datasets for testing. The synthetic data include 5 seismic datasets, denoted as {\it Datasets (1)-(5)}. {\it Datasets (1)-(3)} are three patches cutting from the {\it Marmousi} post-stack seismic dataset and {\it Datasets (4)-(5)} are two patches cutting from the {\it SEG C3} pre-stack seismic dataset. The size of {\it Datasets (1)-(3)} is $256\times256$ and the size of {\it Datasets (4)-(5)} is $192\times192$. We consider synthetic random Gaussian noise and bandpass noise (by filtering Gaussian noise with the same frequency band as seismic data) with standard deviations (denoted by $\sigma$) of 0.1, 0.2, and 0.3. The field datasets include one pre-stack seismic dataset (the {\it X-Pre} oil field data with size $256\times128\times5$) and three post-stack seismic datasets (the {\it X} oil field data, the {\it F3} seismic data, and the {\it Kerry} seismic data\footnote{The seismic datasets {\it Marmousi}, {\it SEG C3}, {\it F3}, and {\it Kerry} are online available at \url{https://wiki.seg.org/wiki/Open_data}.}, all of which have the size of $256\times256\times5$). We employ the proposed fine-tuning strategy (see Algorithm \ref{alg_1}) to deal with these high-dimensional field datasets. Besides, we set the iteration numbers $T_1=5000$ and $T_k=500$ for $k>1$. The mask rate of the trace-wise masks is set to 0.4. The dropout rate in the decoding blocks is set to 0.5. The hyperparameters $\gamma$ and $\mu$ are set to 0.01 and 0.1. The sampling number at the inference stage (i.e., $P$) is set to 100. The above settings are applied to all datasets.
\par 
We report the peak signal-to-noise ratio (PSNR), structural similarity (SSIM), and local similarity (LS)\cite{LS} of the denoising results. Here, the LS is calculated by averaging all elements of the LS map between the denoising result and residual map. We remark that calculating PSNR and SSIM needs the ground-truth data as references while calculating LS does not require ground-truth data. 
    \begin{figure*}[t]
	\scriptsize
	\setlength{\tabcolsep}{0.9pt}
	\begin{center}
		\begin{tabular}{cccccccc}
			\includegraphics[width=0.12\textwidth]{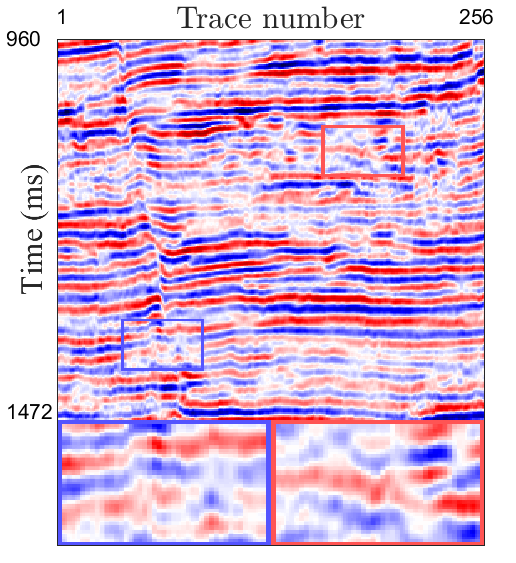}&
			\includegraphics[width=0.12\textwidth]{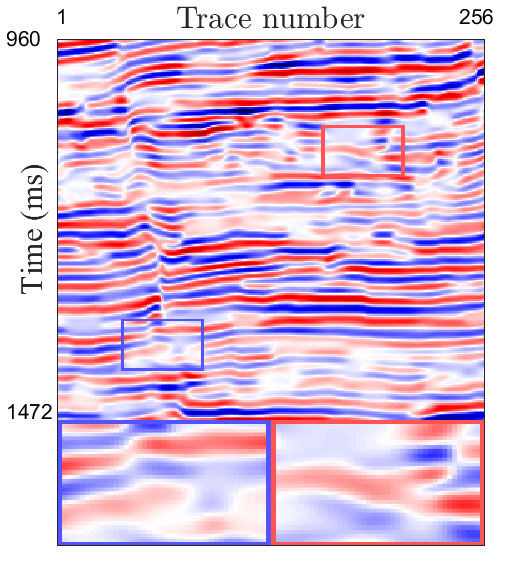}&
			\includegraphics[width=0.12\textwidth]{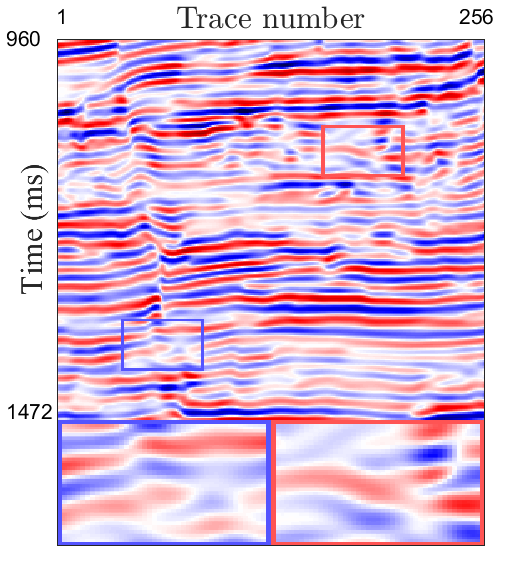}&
			\includegraphics[width=0.12\textwidth]{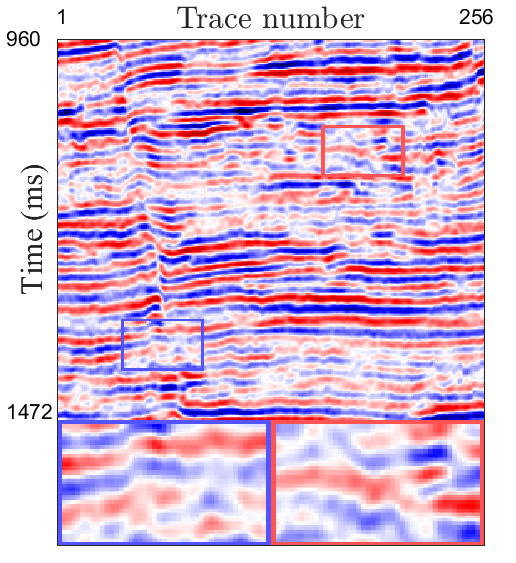}&
			\includegraphics[width=0.12\textwidth]{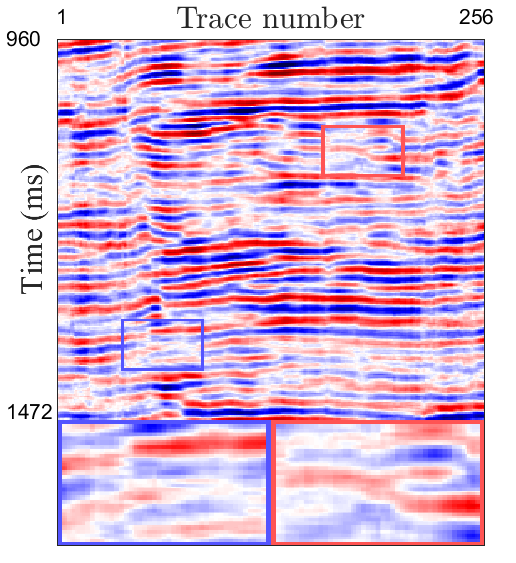}&
			\includegraphics[width=0.12\textwidth]{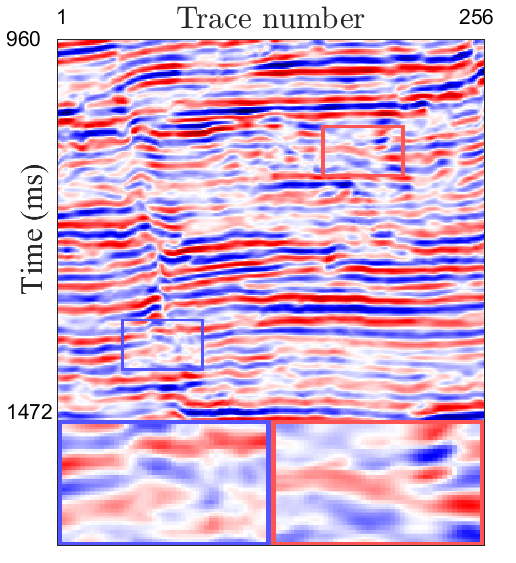}&
			\includegraphics[width=0.12\textwidth]{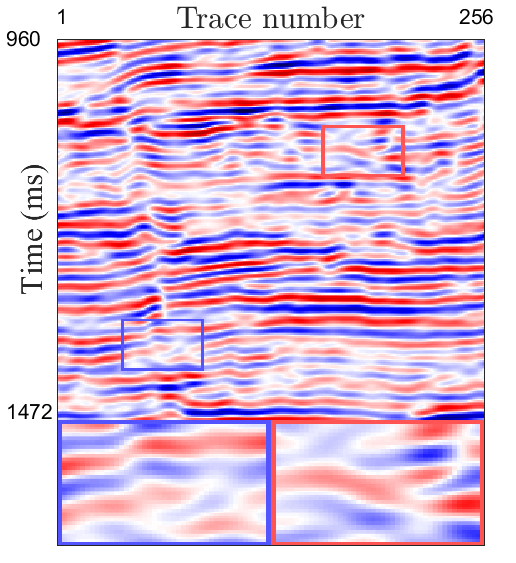}&
			\includegraphics[width=0.12\textwidth]{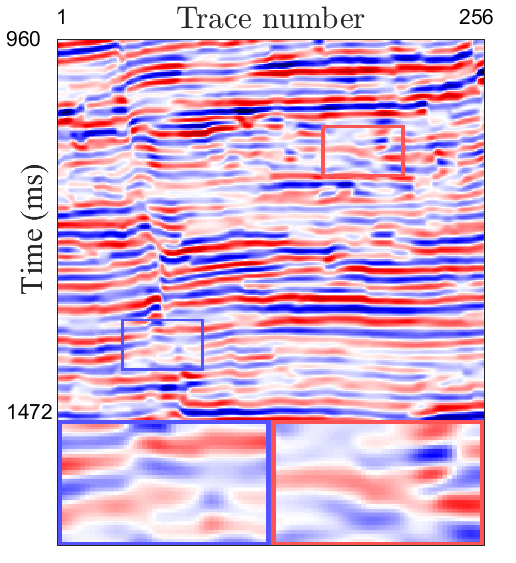}\\
			&
			\includegraphics[width=0.12\textwidth]{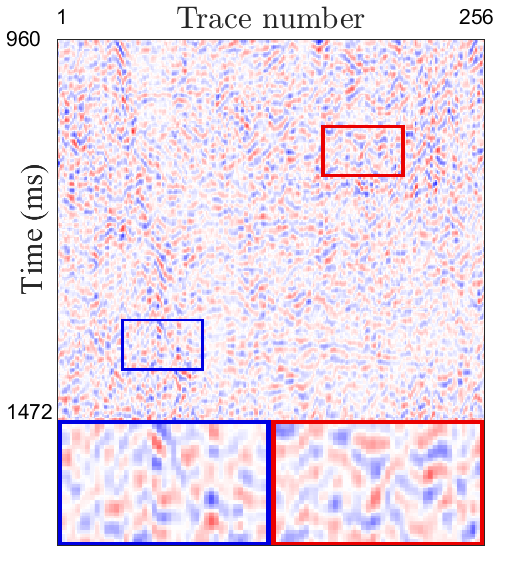}&
			\includegraphics[width=0.12\textwidth]{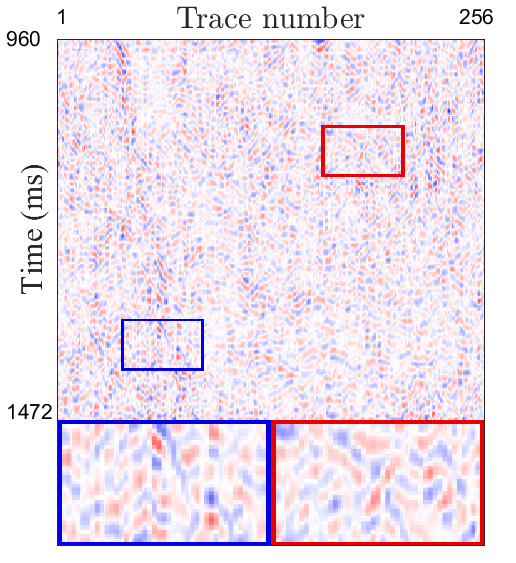}&
			\includegraphics[width=0.12\textwidth]{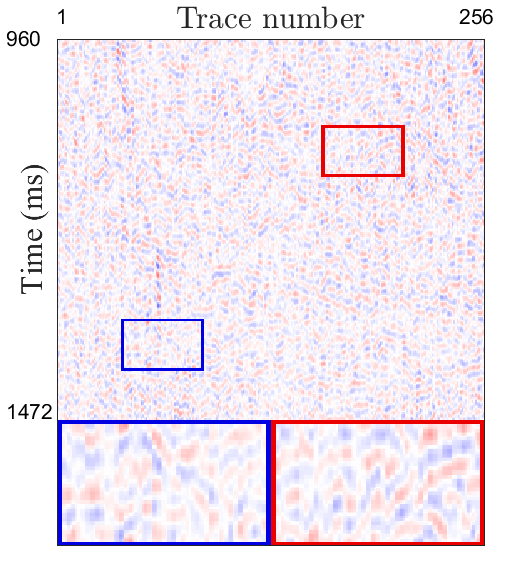}&
			\includegraphics[width=0.12\textwidth]{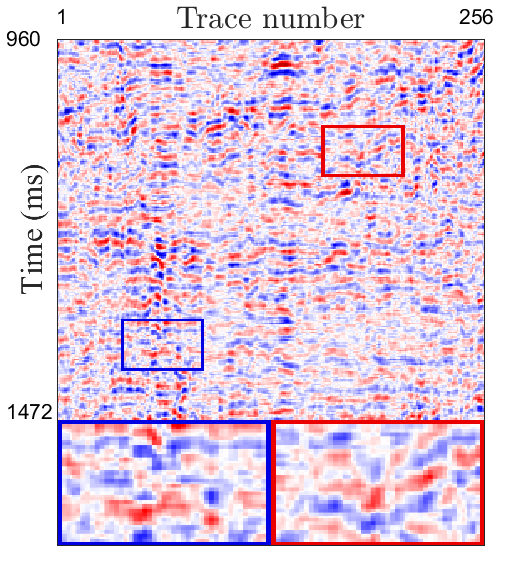}&
			\includegraphics[width=0.12\textwidth]{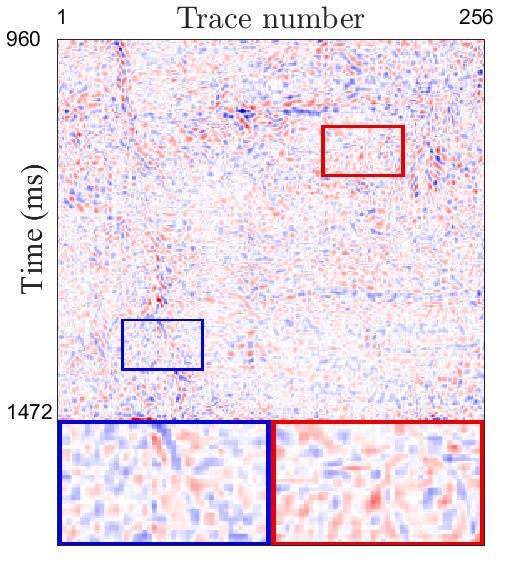}&
			\includegraphics[width=0.12\textwidth]{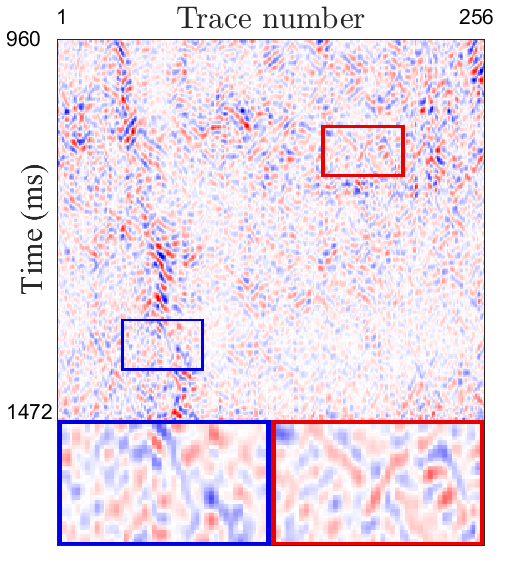}&
			\includegraphics[width=0.12\textwidth]{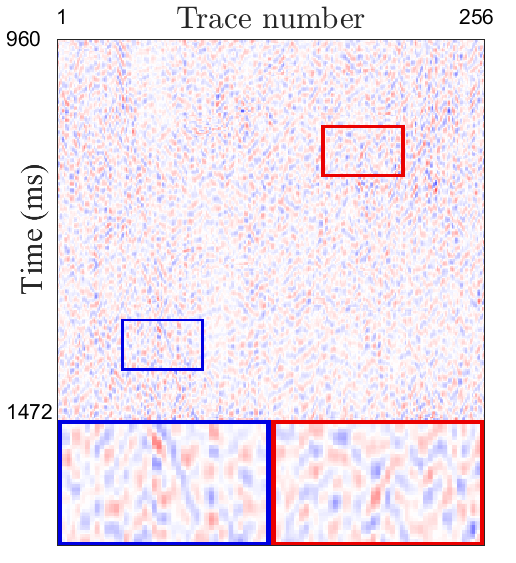}\\
			\includegraphics[width=0.12\textwidth]{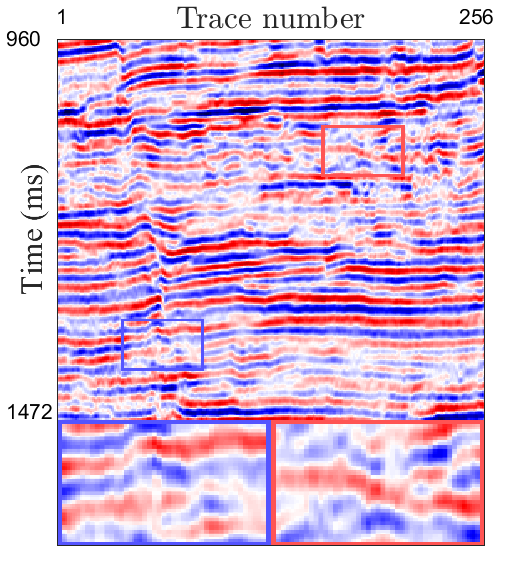}&
			\includegraphics[width=0.12\textwidth]{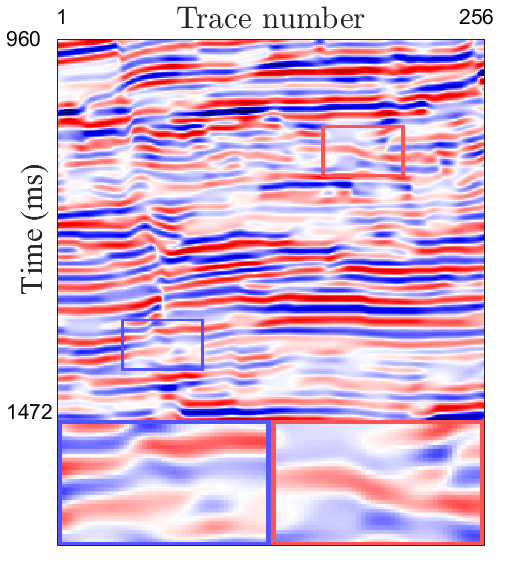}&
			\includegraphics[width=0.12\textwidth]{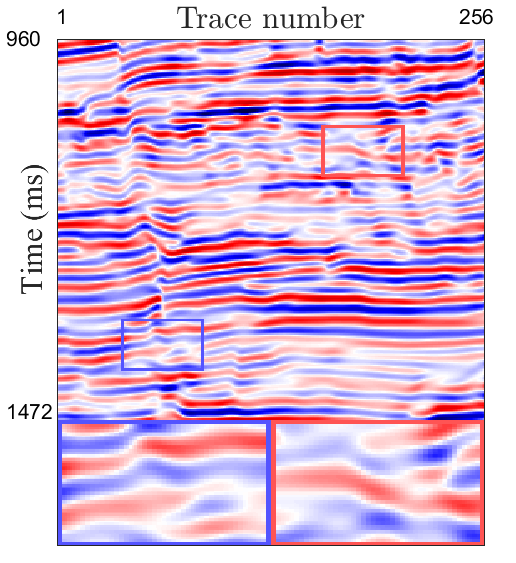}&
			\includegraphics[width=0.12\textwidth]{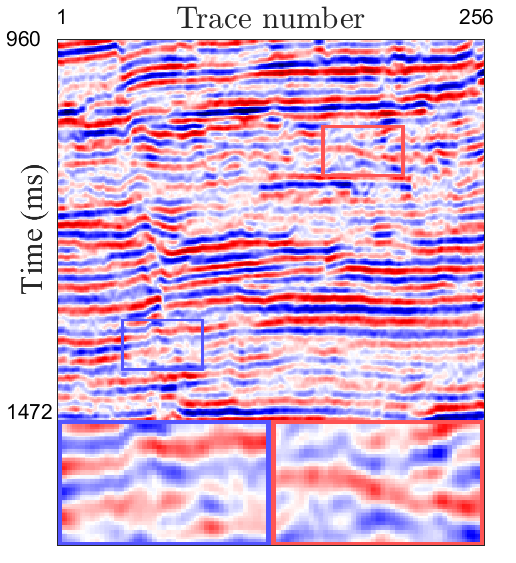}&
			\includegraphics[width=0.12\textwidth]{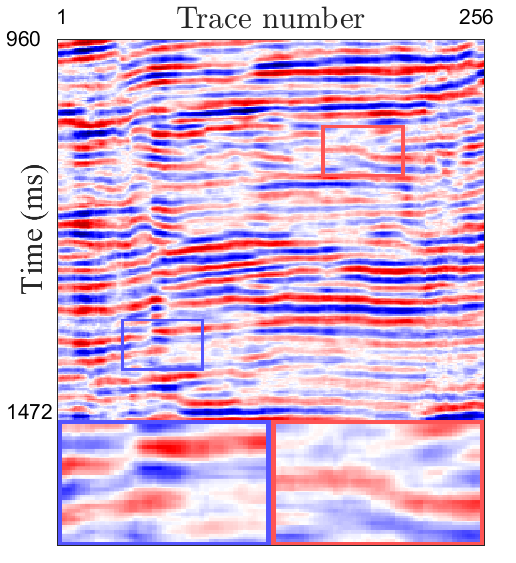}&
			\includegraphics[width=0.12\textwidth]{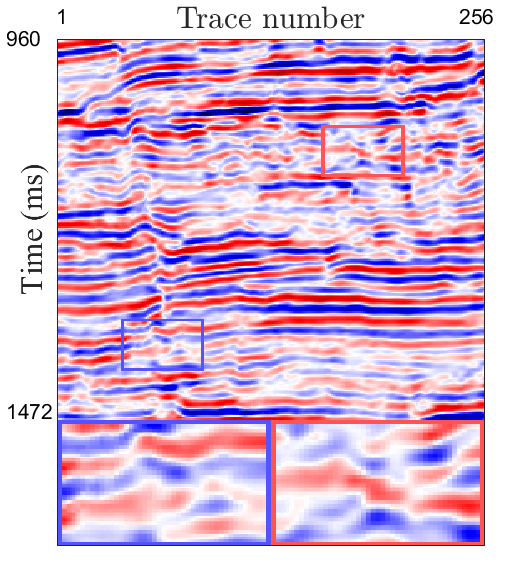}&
			\includegraphics[width=0.12\textwidth]{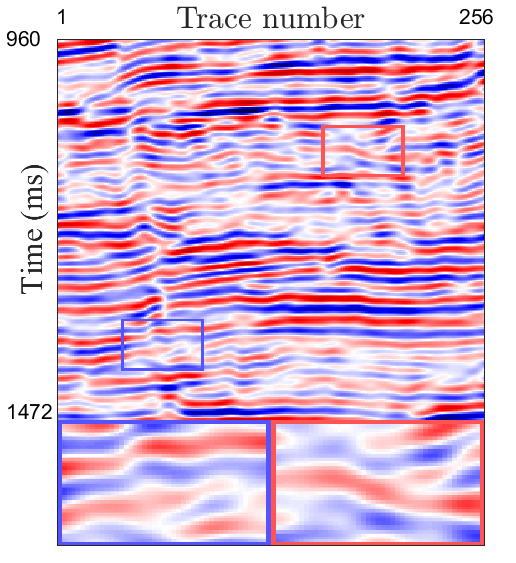}&
			\includegraphics[width=0.12\textwidth]{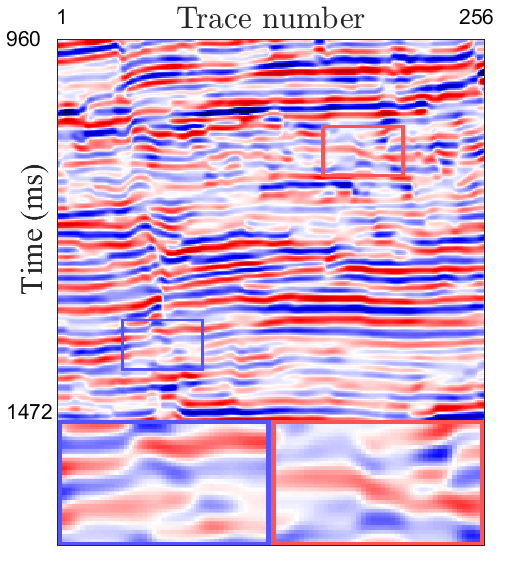}\\
			&
			\includegraphics[width=0.12\textwidth]{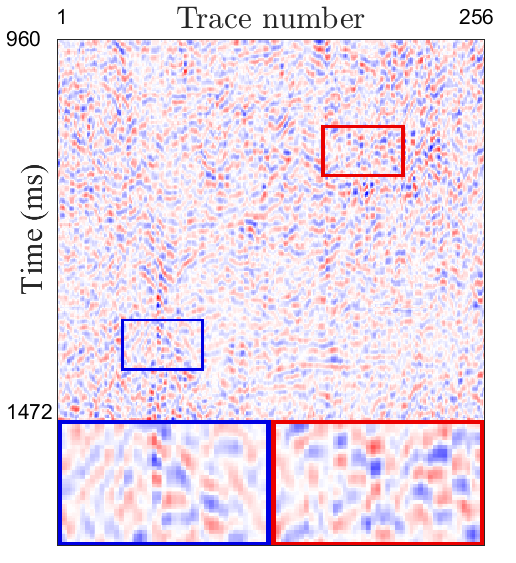}&
			\includegraphics[width=0.12\textwidth]{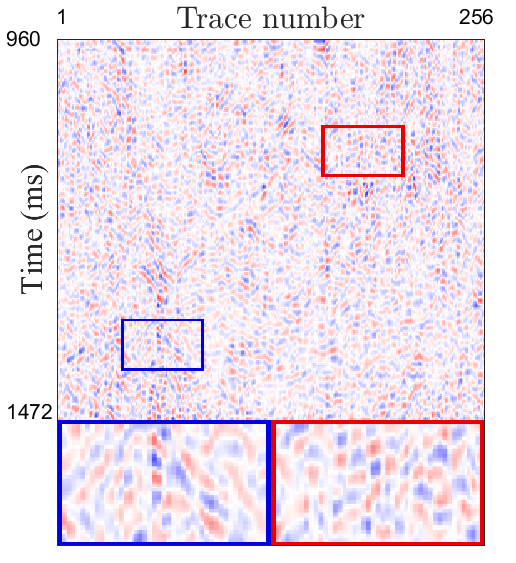}&
			\includegraphics[width=0.12\textwidth]{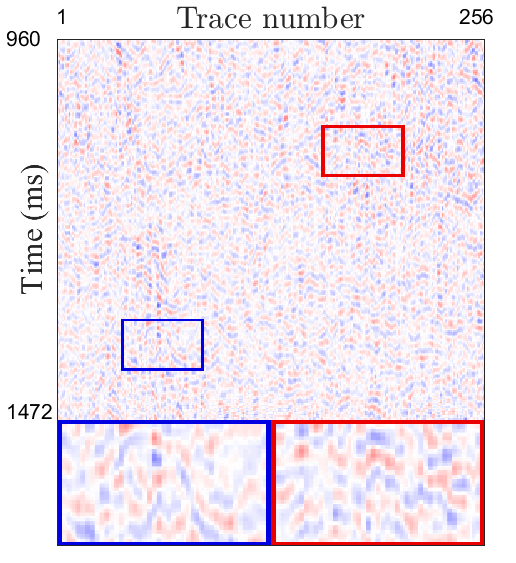}&
			\includegraphics[width=0.12\textwidth]{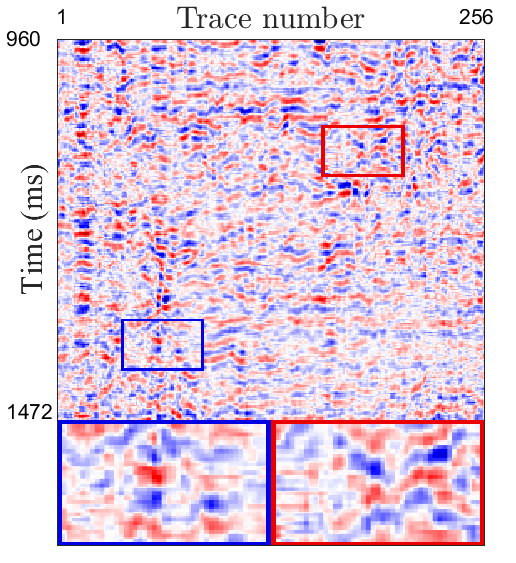}&
			\includegraphics[width=0.12\textwidth]{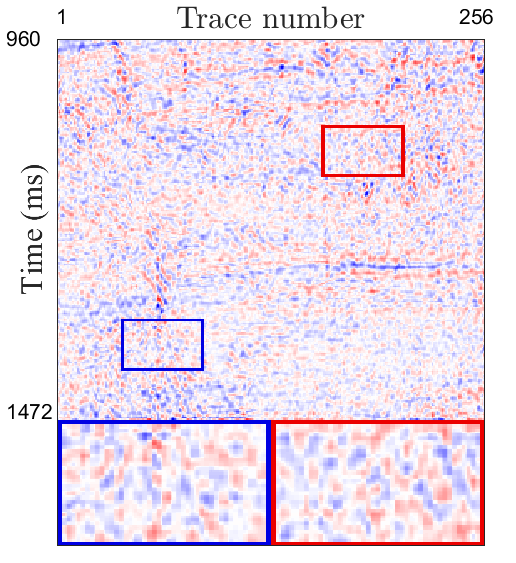}&
			\includegraphics[width=0.12\textwidth]{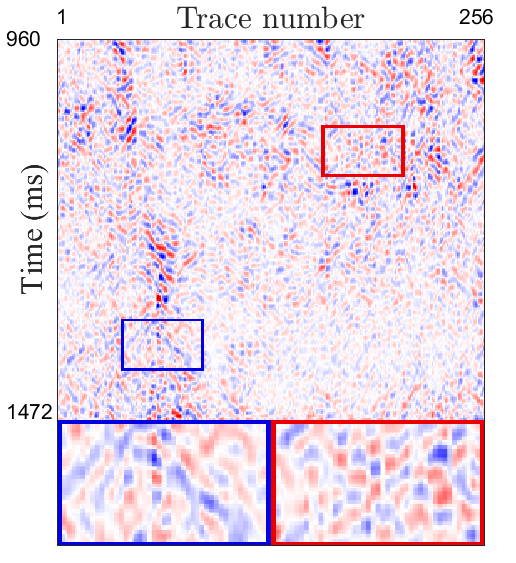}&
			\includegraphics[width=0.12\textwidth]{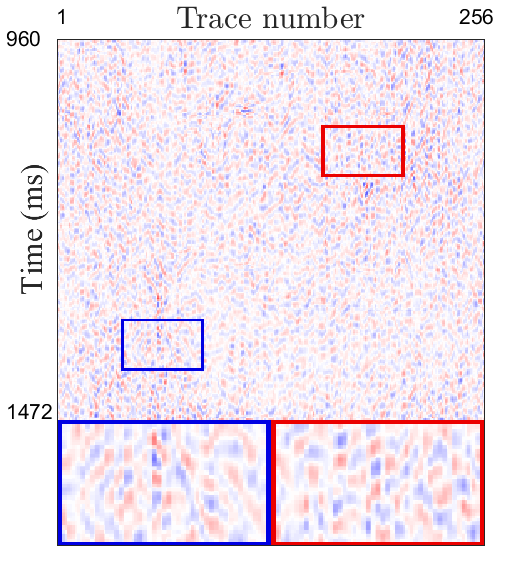}\\
			&LS 0.377&LS 0.322&LS 0.320&LS 0.444&LS 0.301&LS 0.294&LS 0.283\\
			&Time 9s&Time 562s&Time 5s&Time 55s&Times 911s&Time 910s&Time 224s\\
			Noisy&BM3D\cite{BM3D}&WNNM\cite{WNNM}&MSSA\cite{MSSA}&DDAE\cite{DDAE}&DIP\cite{DIP}&PATCHUNET\cite{GP_PATCHUNET}&S2S-WTV\\
			\vspace{-0.6cm}
		\end{tabular}
	\end{center}
	\caption{The first two 2-D slices of the noise attenuation results by different methods (the first and third rows) and the corresponding residual maps between the noisy data and denoising results (the second and fourth rows) on field noisy seismic data {\it X}.\label{fig_results_real_1}}
	\vspace{-0.3cm}
\end{figure*}
\begin{figure}[h]
	\scriptsize
	\setlength{\tabcolsep}{0.9pt}
	\begin{center}
		\begin{tabular}{cccc}
			\includegraphics[width=0.12\textwidth]{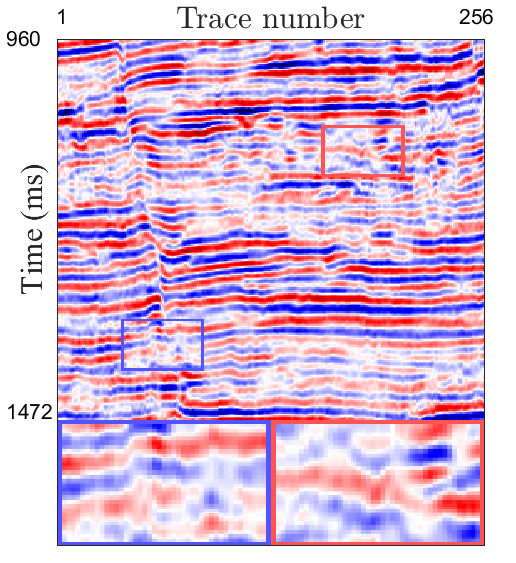}&
			\includegraphics[width=0.12\textwidth]{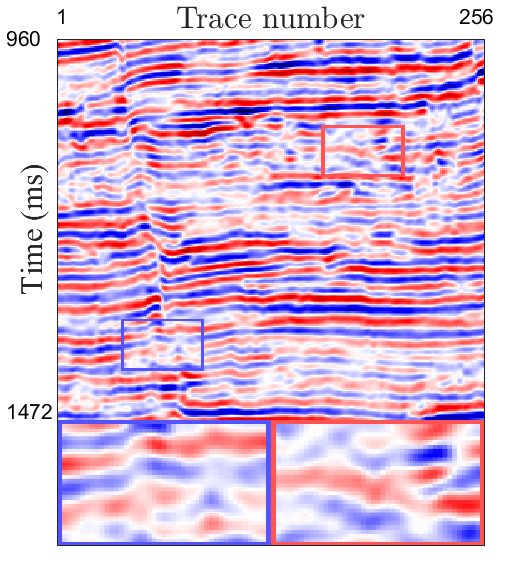}&
			\includegraphics[width=0.12\textwidth]{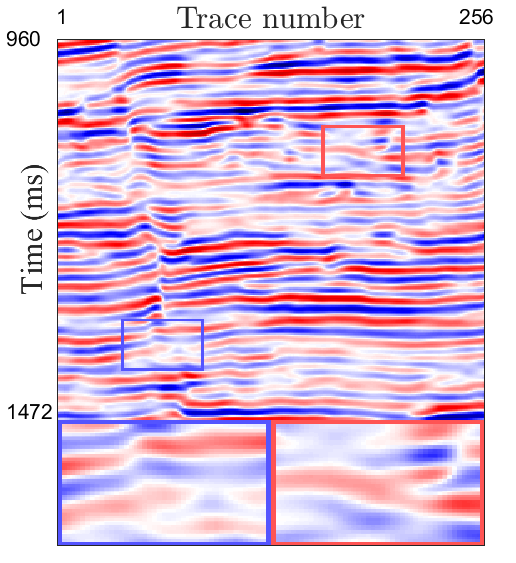}&
			\includegraphics[width=0.12\textwidth]{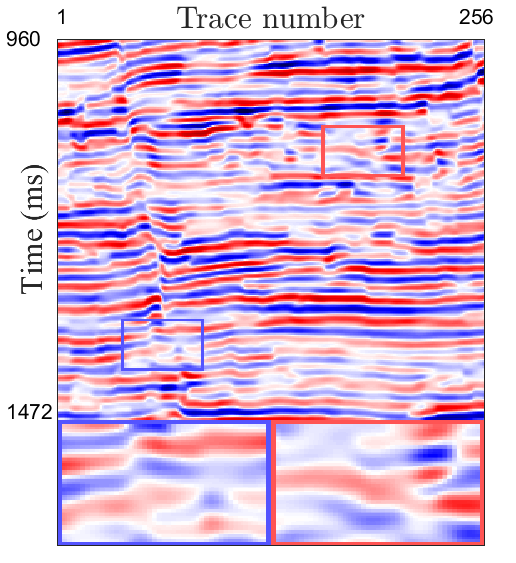}\\
			&
			\includegraphics[width=0.12\textwidth]{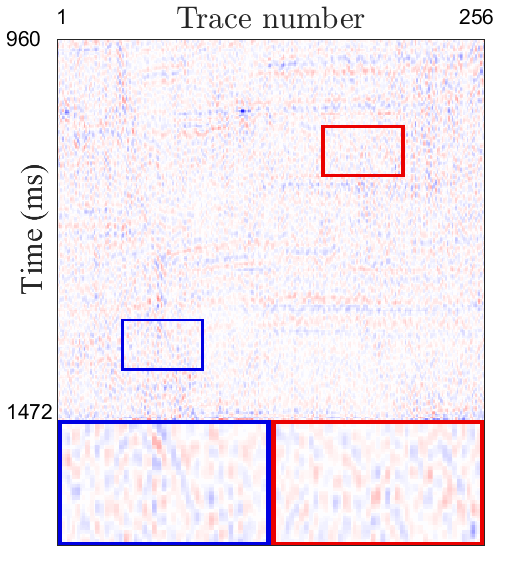}&
			\includegraphics[width=0.12\textwidth]{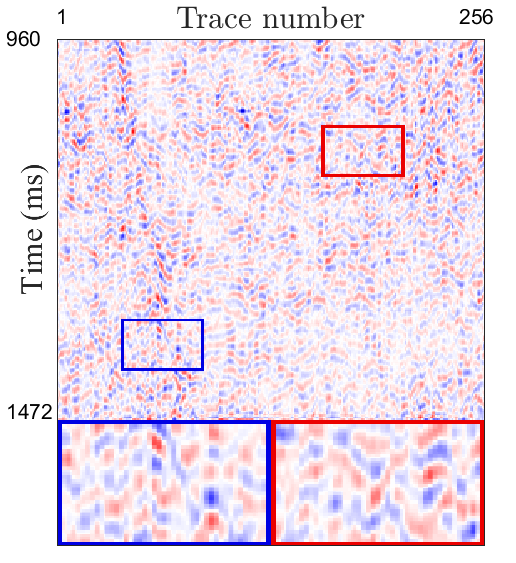}&
			\includegraphics[width=0.12\textwidth]{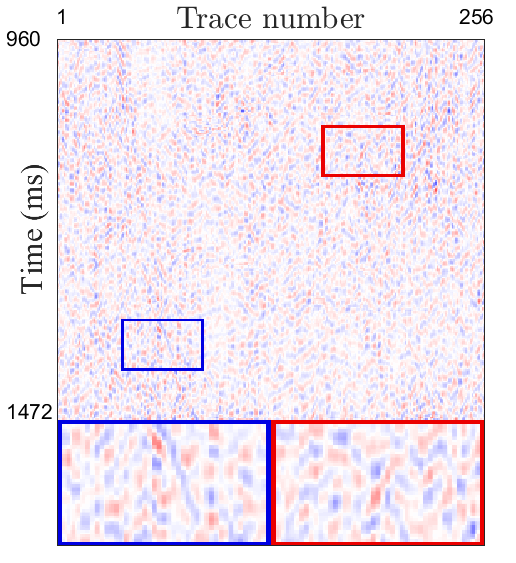}\\
			&LS 0.397 &LS 0.400 &LS 0.281\\
			Noisy& S2S\cite{S2S}&S2S-TV&S2S-WTV\\
			\vspace{-0.5cm}
		\end{tabular}
	\end{center}
	\caption{The noise attenuation results (the first row) and the corresponding residual map between the noisy data and denoising results (the second row) on field noisy seismic data {\it X}. The WTV regularization is helpful to obtain a cleaner result while preserving the signal details.\label{fig_WTV}}
	\vspace{-0.5cm}
\end{figure}
\begin{figure*}[t]
	\scriptsize
	\setlength{\tabcolsep}{0.9pt}
	\begin{center}
		\begin{tabular}{cccccccc}
			\includegraphics[width=0.12\textwidth]{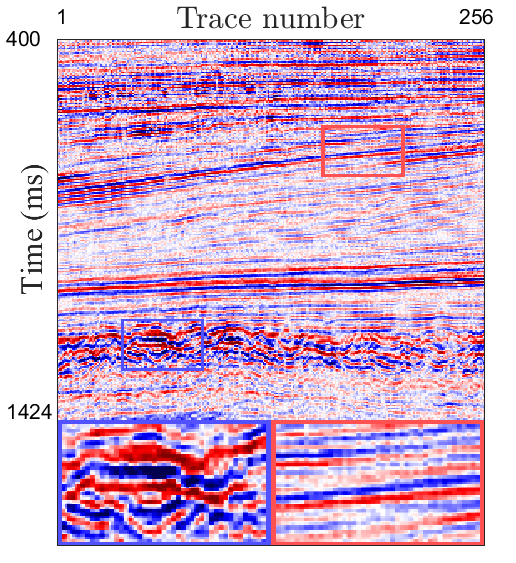}&
			\includegraphics[width=0.12\textwidth]{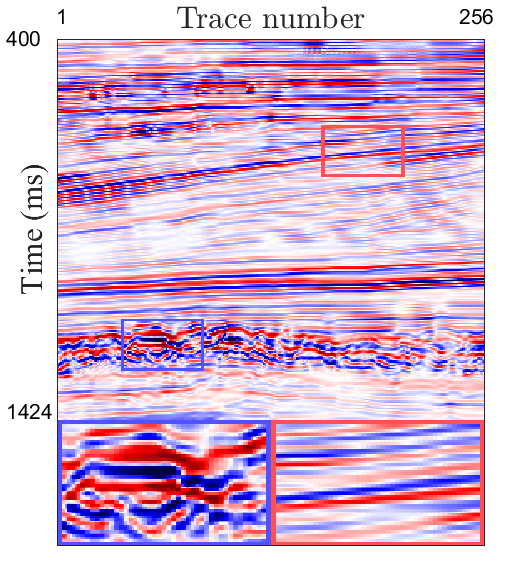}&
			\includegraphics[width=0.12\textwidth]{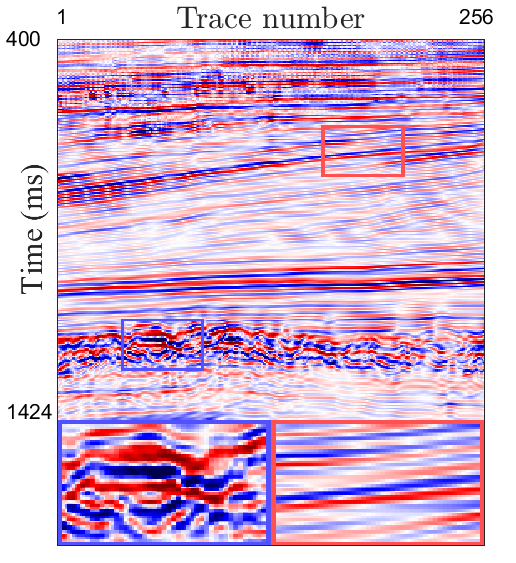}&
			\includegraphics[width=0.12\textwidth]{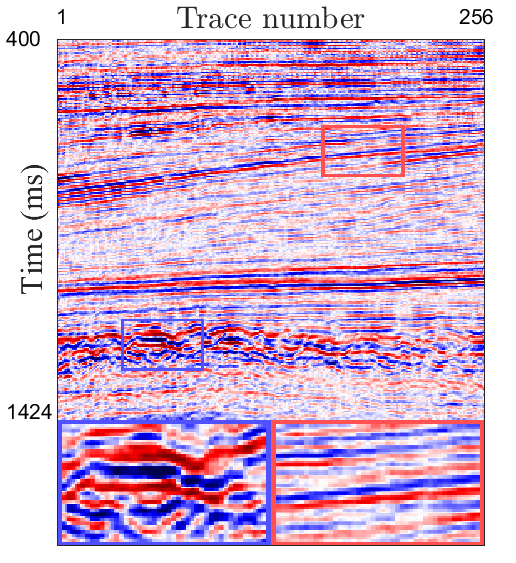}&
			\includegraphics[width=0.12\textwidth]{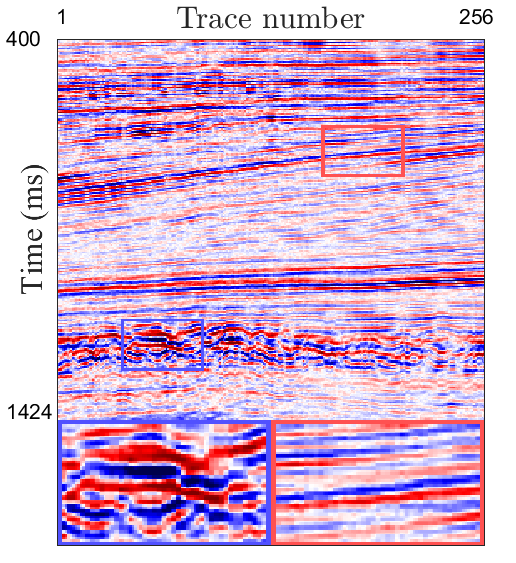}&
			\includegraphics[width=0.12\textwidth]{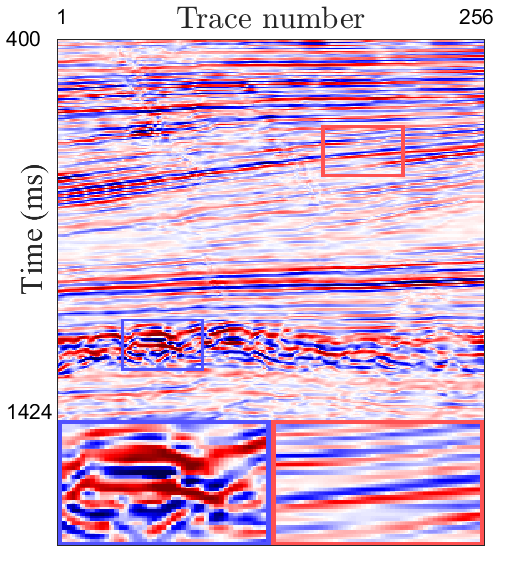}&
			\includegraphics[width=0.12\textwidth]{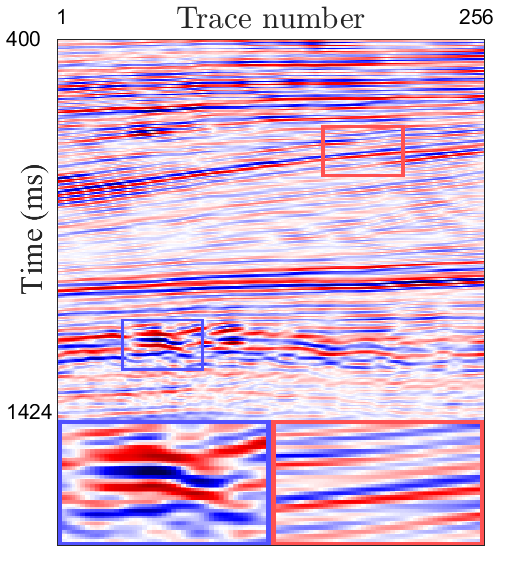}&
			\includegraphics[width=0.12\textwidth]{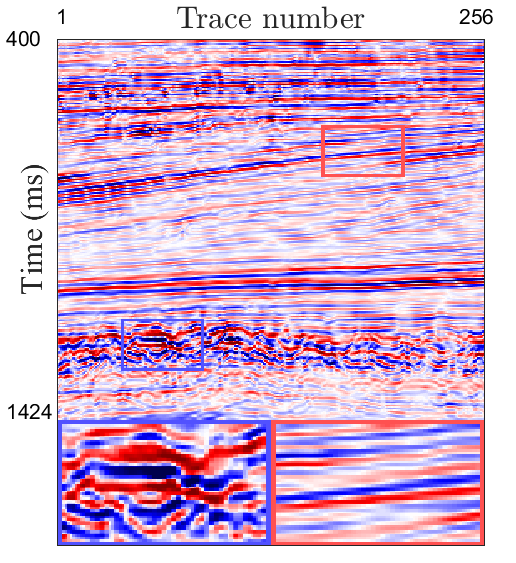}\\
			&
			\includegraphics[width=0.12\textwidth]{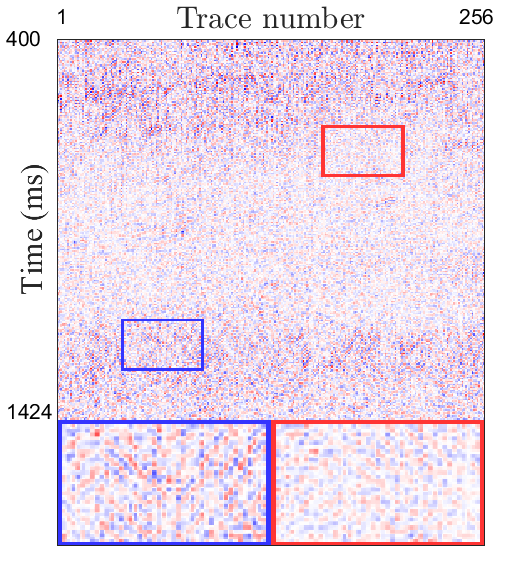}&
			\includegraphics[width=0.12\textwidth]{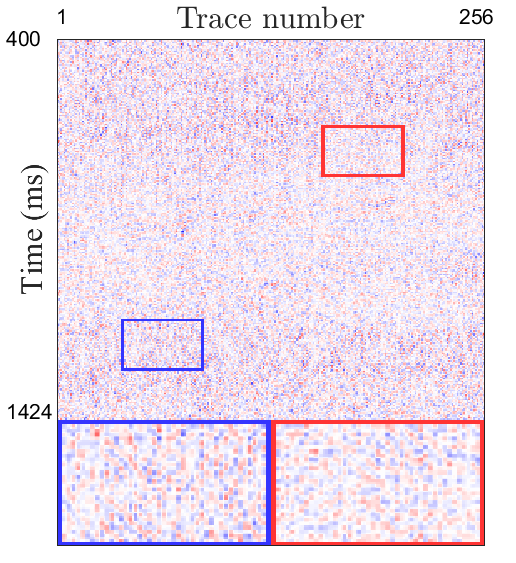}&
			\includegraphics[width=0.12\textwidth]{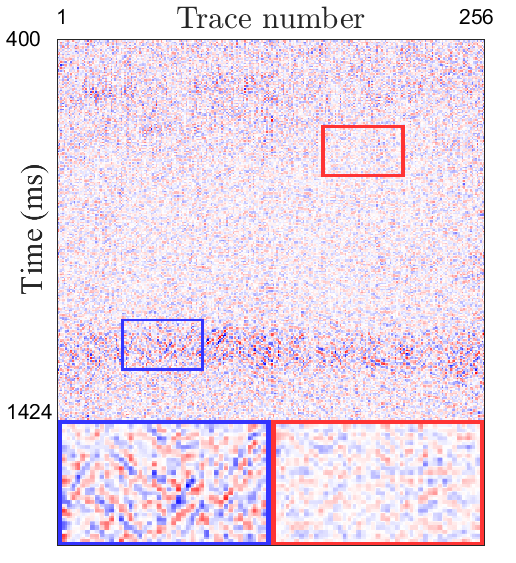}&
			\includegraphics[width=0.12\textwidth]{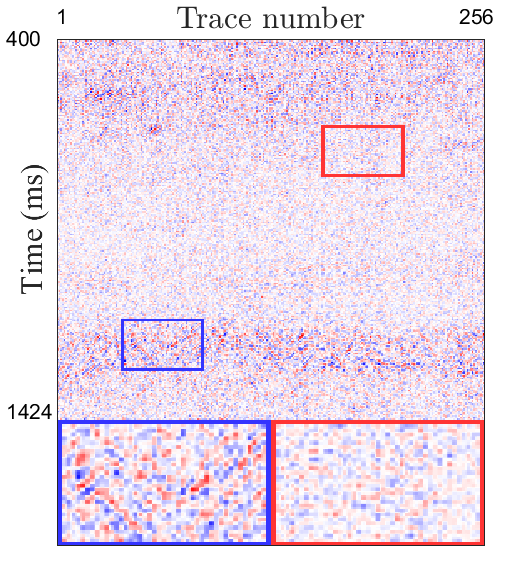}&
			\includegraphics[width=0.12\textwidth]{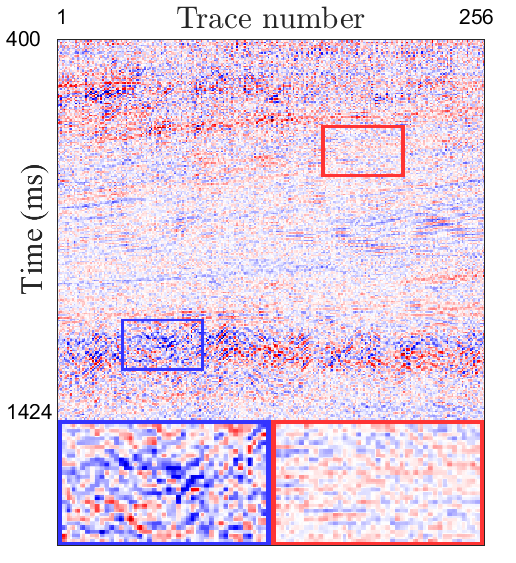}&
			\includegraphics[width=0.12\textwidth]{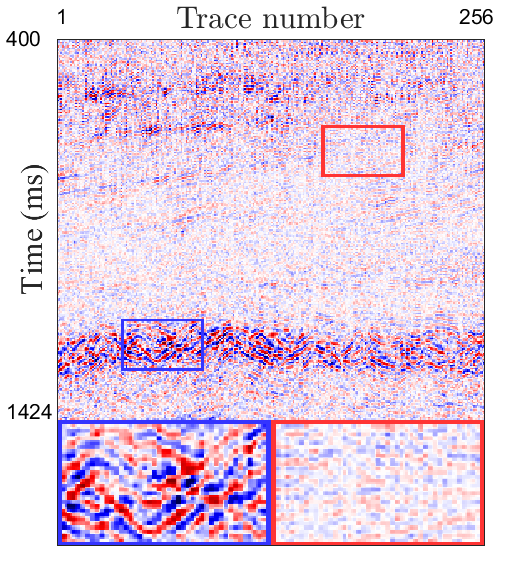}&
			\includegraphics[width=0.12\textwidth]{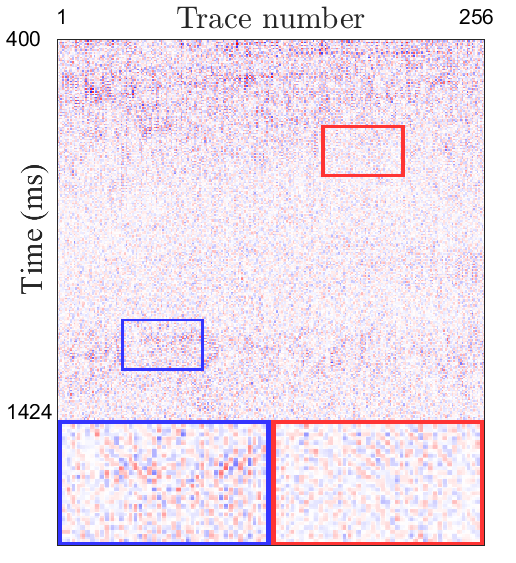}\\
			\includegraphics[width=0.12\textwidth]{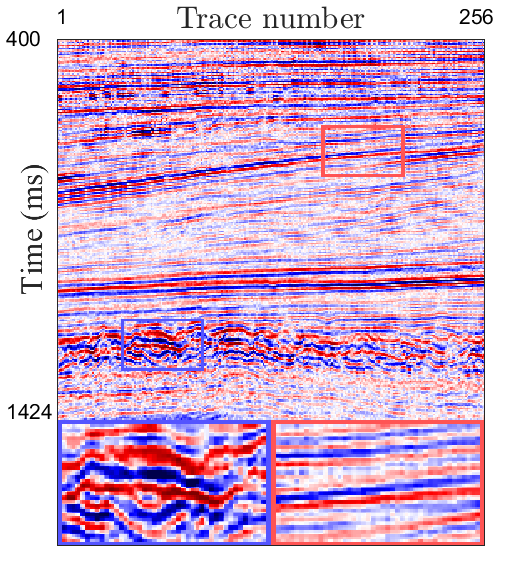}&
			\includegraphics[width=0.12\textwidth]{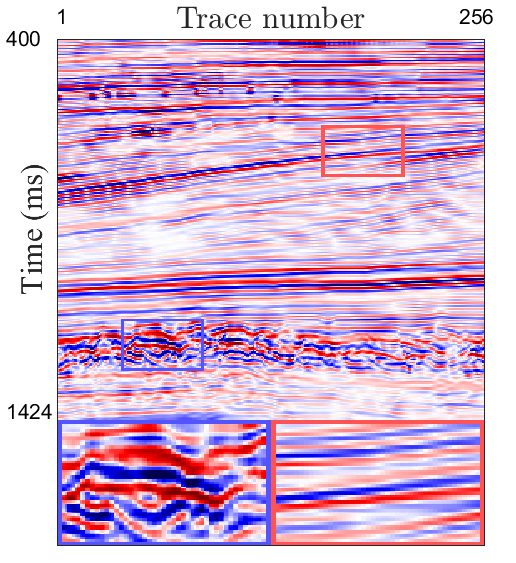}&
			\includegraphics[width=0.12\textwidth]{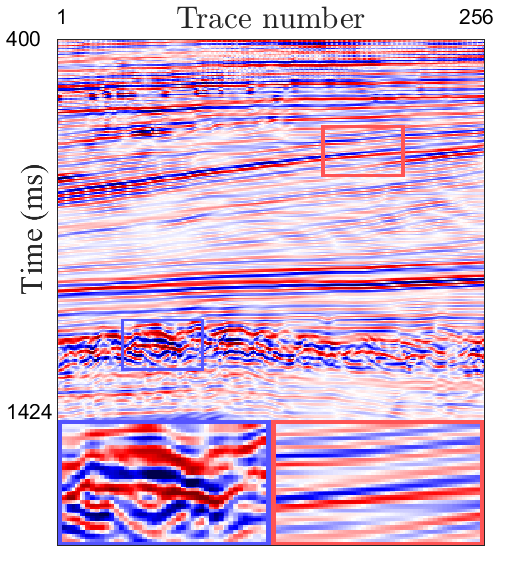}&
			\includegraphics[width=0.12\textwidth]{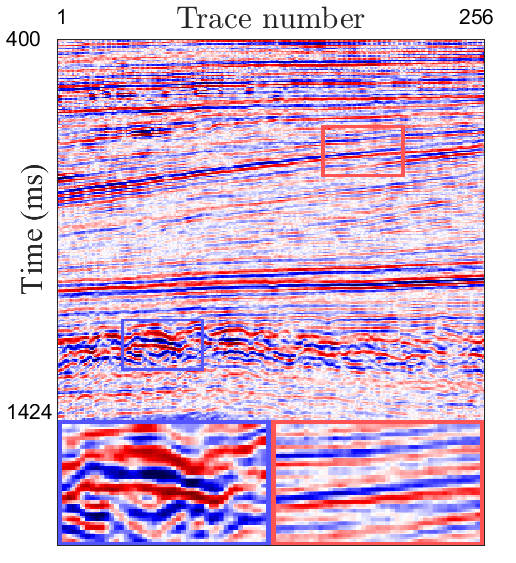}&
			\includegraphics[width=0.12\textwidth]{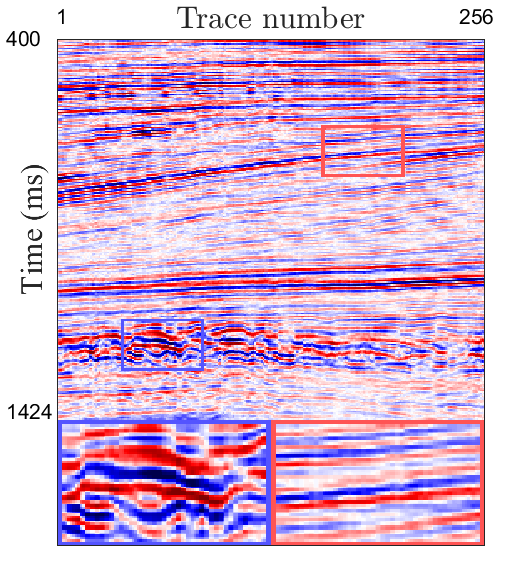}&
			\includegraphics[width=0.12\textwidth]{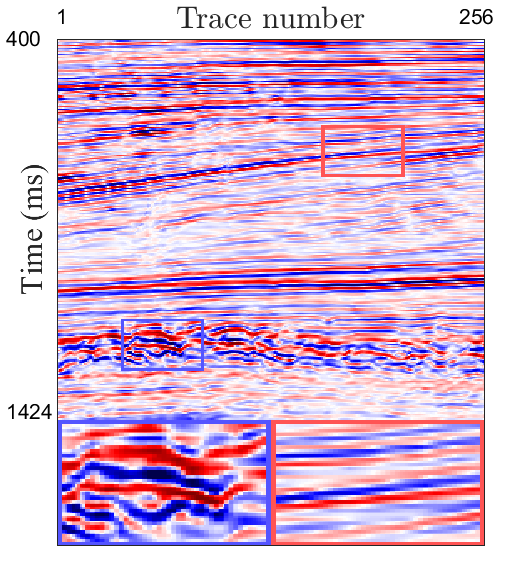}&
			\includegraphics[width=0.12\textwidth]{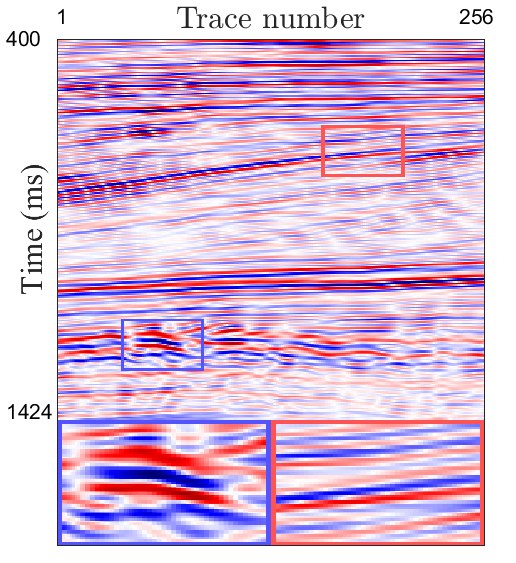}&
			\includegraphics[width=0.12\textwidth]{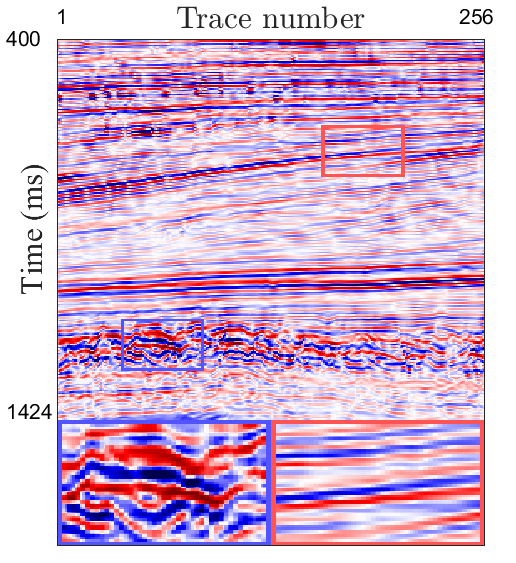}\\
			&
			\includegraphics[width=0.12\textwidth]{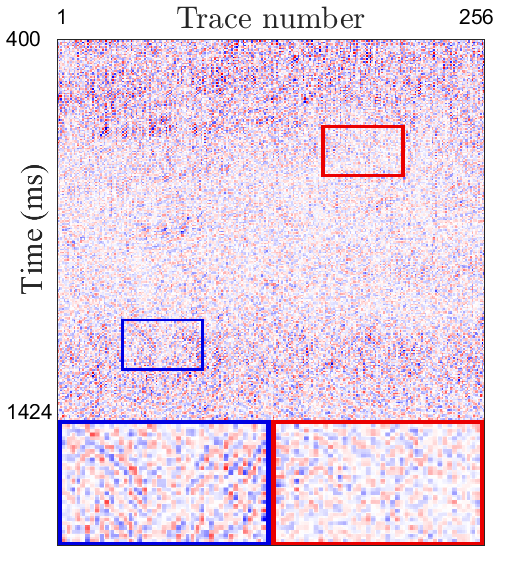}&
			\includegraphics[width=0.12\textwidth]{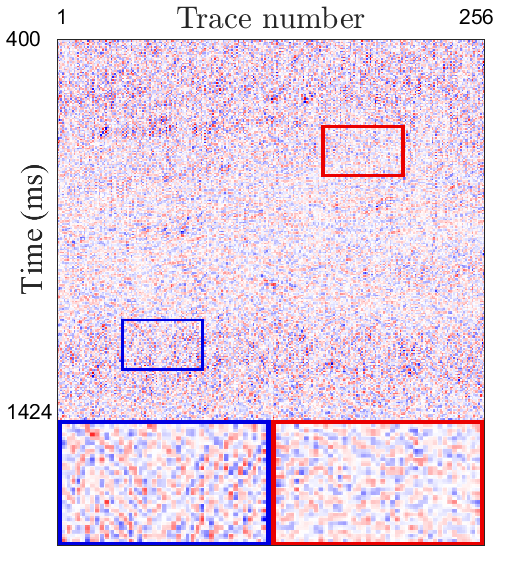}&
			\includegraphics[width=0.12\textwidth]{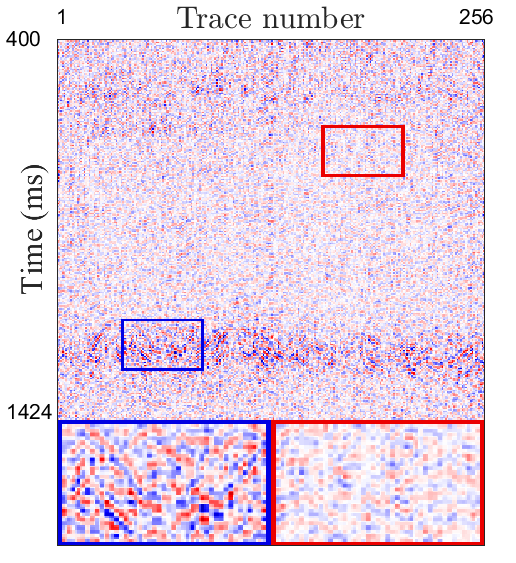}&
			\includegraphics[width=0.12\textwidth]{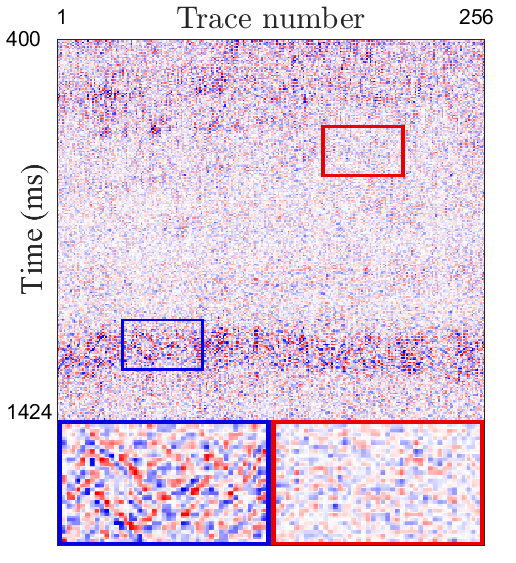}&
			\includegraphics[width=0.12\textwidth]{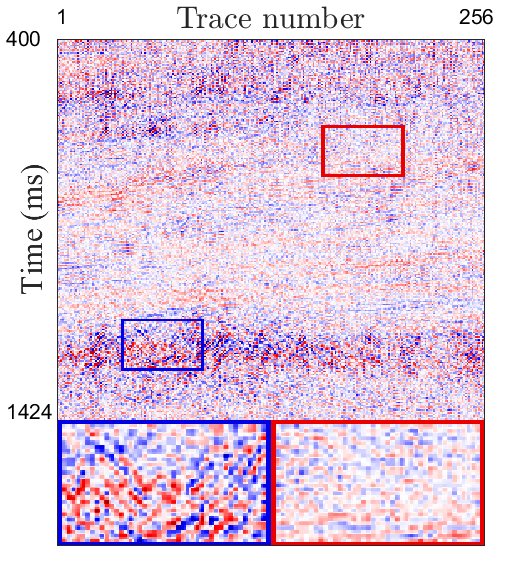}&
			\includegraphics[width=0.12\textwidth]{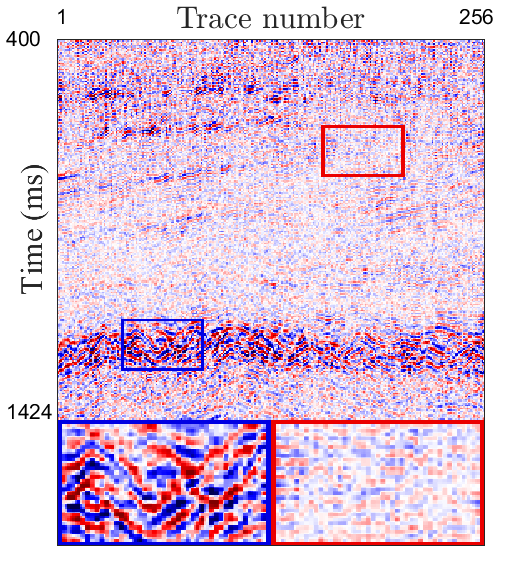}&
			\includegraphics[width=0.12\textwidth]{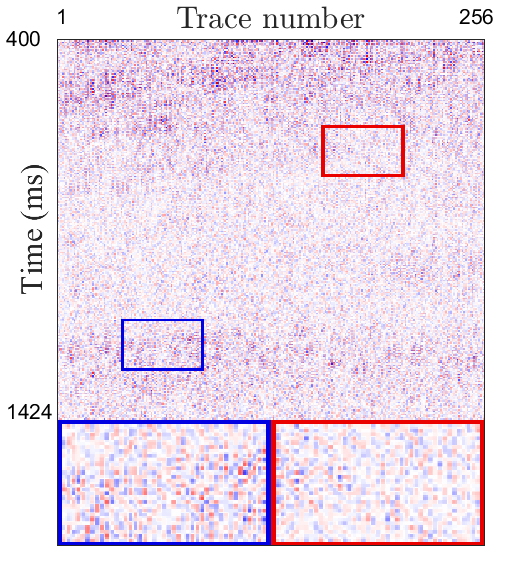}\\
			&LS 0.236&LS 0.225&LS 0.228&LS 0.183&LS 0.182&LS 0.260&LS 0.114\\
			&Time 5s&Time 571s&Time 5s&Time 59s&Times 907s&Time 786s&Time 227s\\
			Noisy&BM3D\cite{BM3D}&WNNM\cite{WNNM}&MSSA\cite{MSSA}&DDAE\cite{DDAE}&DIP\cite{DIP}&PATCHUNET\cite{GP_PATCHUNET}&S2S-WTV\\
			\vspace{-0.6cm}
		\end{tabular}
	\end{center}
	\caption{The first two 2-D slices of the noise attenuation results by different methods (the first and third rows) and the corresponding residual maps between the noisy data and denoising results (the second and fourth rows) on field noisy seismic data {\it F3}.\label{fig_results_real_2}}
	\vspace{-0.3cm}
\end{figure*}
\begin{figure*}[t]
	\scriptsize
	\setlength{\tabcolsep}{0.9pt}
	\begin{center}
		\begin{tabular}{cccccccc}
			\includegraphics[width=0.12\textwidth]{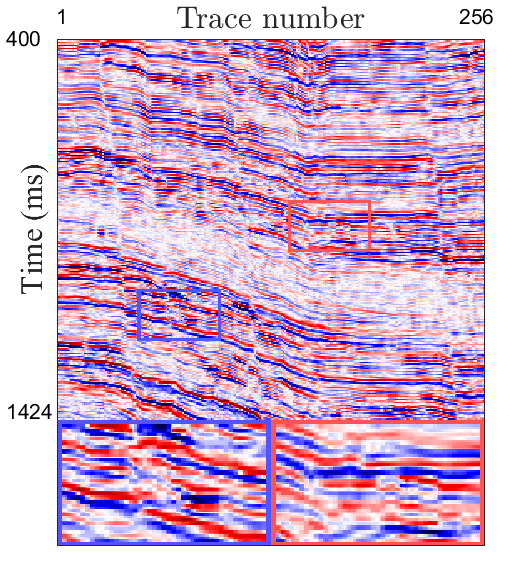}&
			\includegraphics[width=0.12\textwidth]{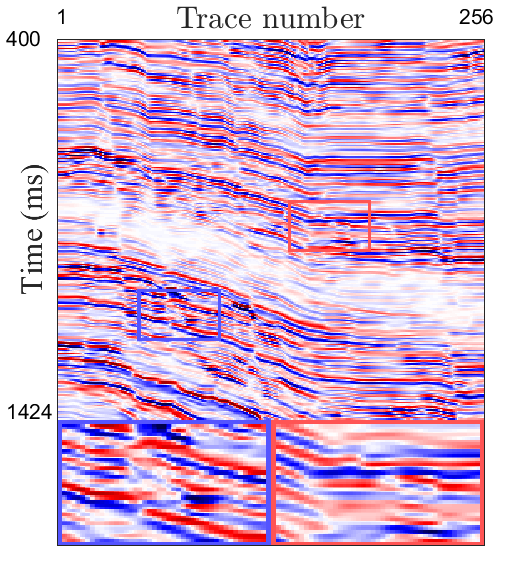}&
			\includegraphics[width=0.12\textwidth]{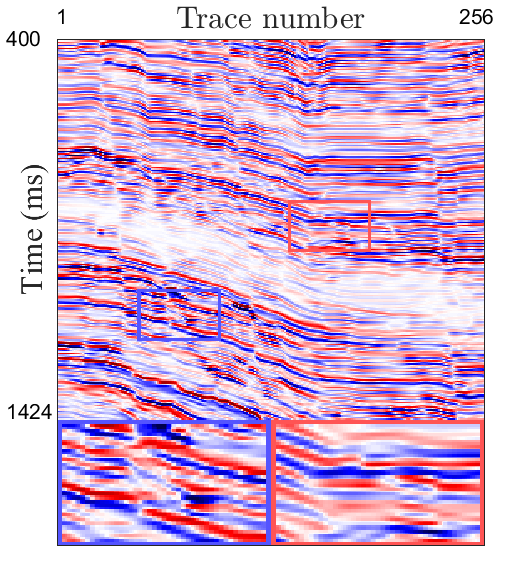}&
			\includegraphics[width=0.12\textwidth]{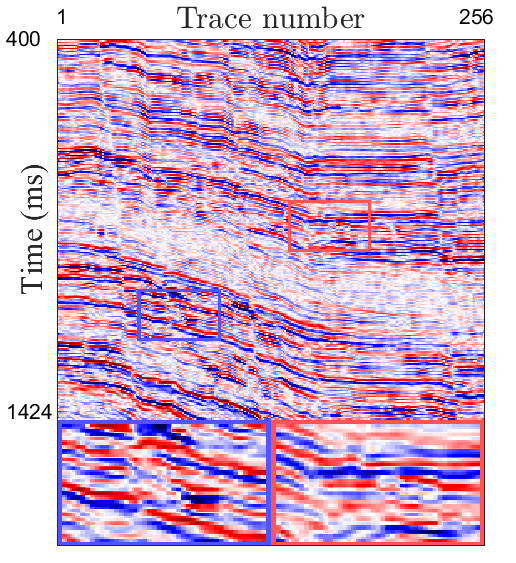}&
			\includegraphics[width=0.12\textwidth]{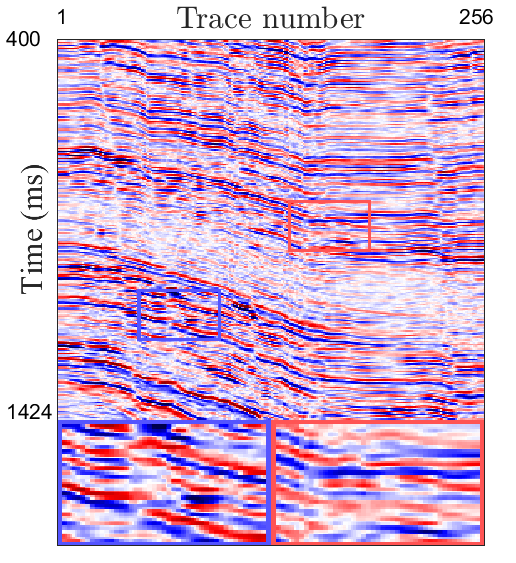}&
			\includegraphics[width=0.12\textwidth]{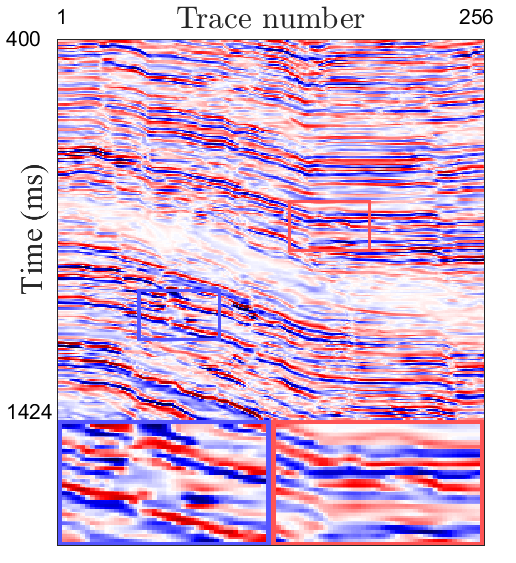}&
			\includegraphics[width=0.12\textwidth]{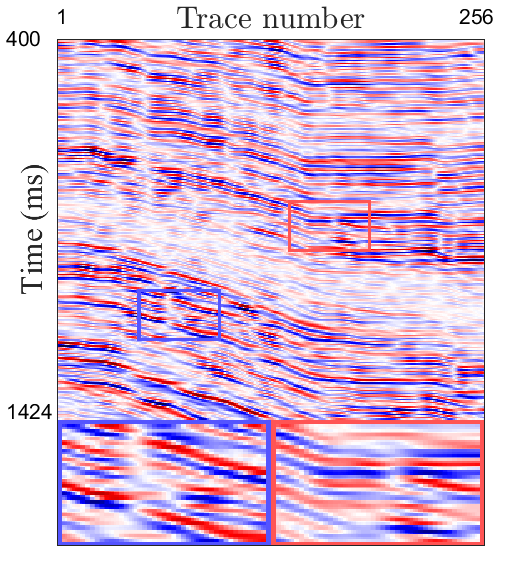}&
			\includegraphics[width=0.12\textwidth]{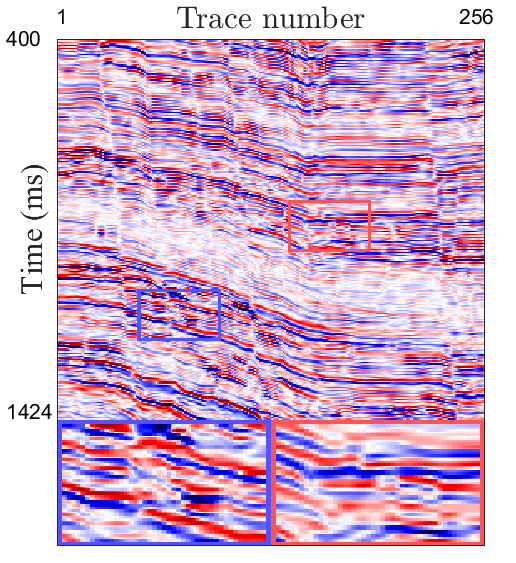}\\
			&
			\includegraphics[width=0.12\textwidth]{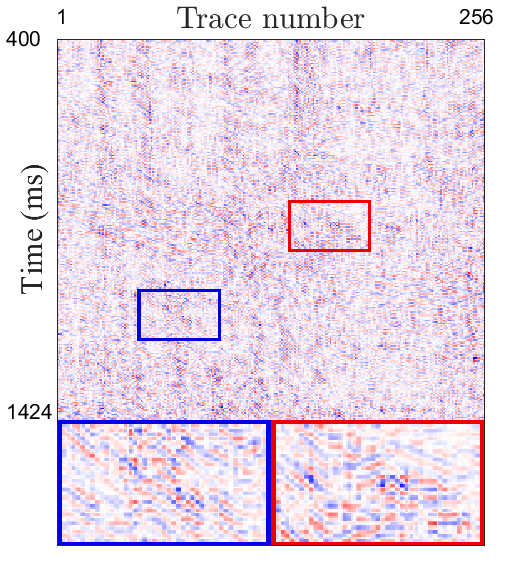}&
			\includegraphics[width=0.12\textwidth]{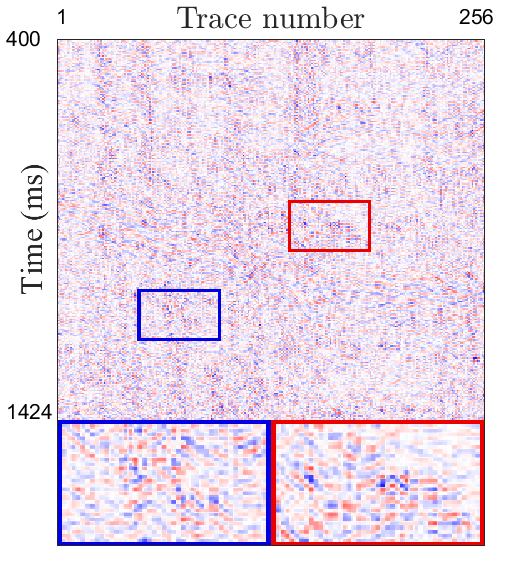}&
			\includegraphics[width=0.12\textwidth]{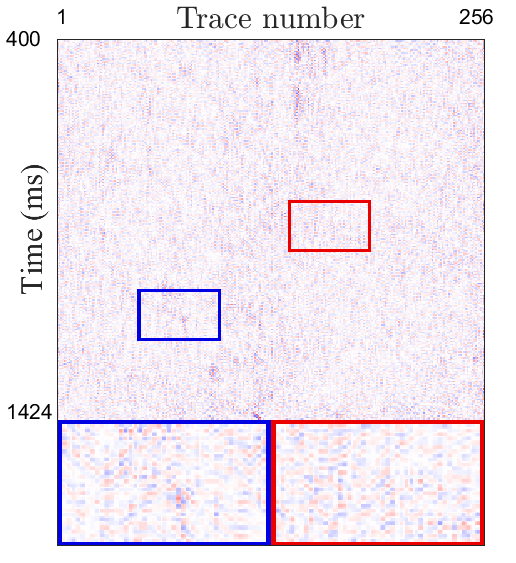}&
			\includegraphics[width=0.12\textwidth]{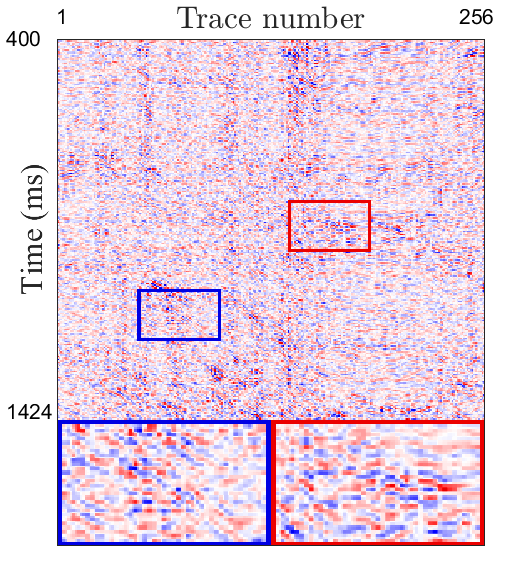}&
			\includegraphics[width=0.12\textwidth]{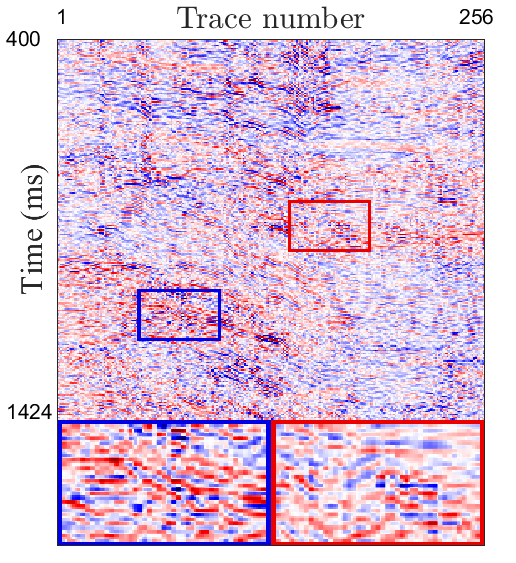}&
			\includegraphics[width=0.12\textwidth]{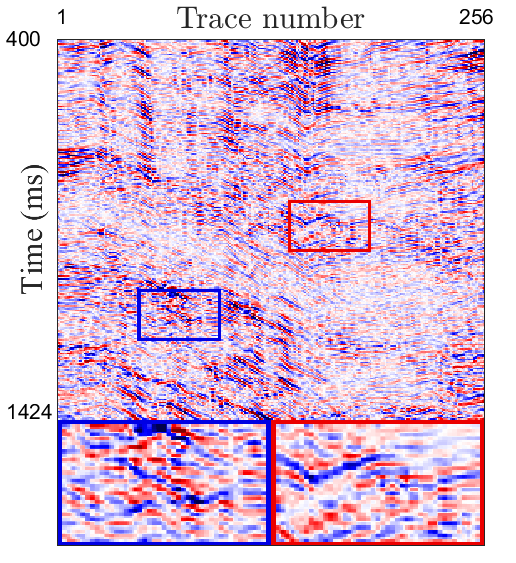}&
			\includegraphics[width=0.12\textwidth]{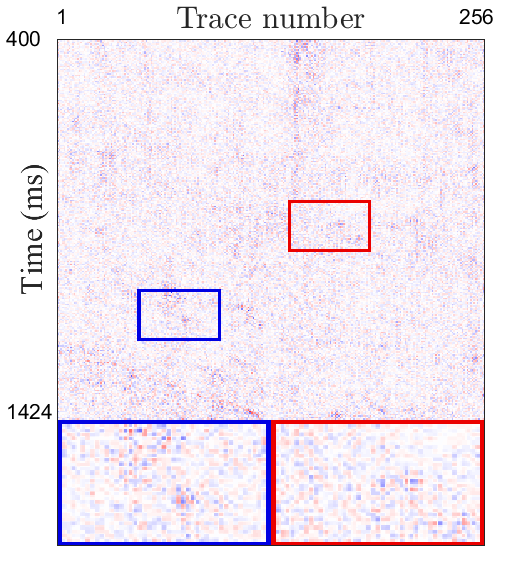}\\
			\includegraphics[width=0.12\textwidth]{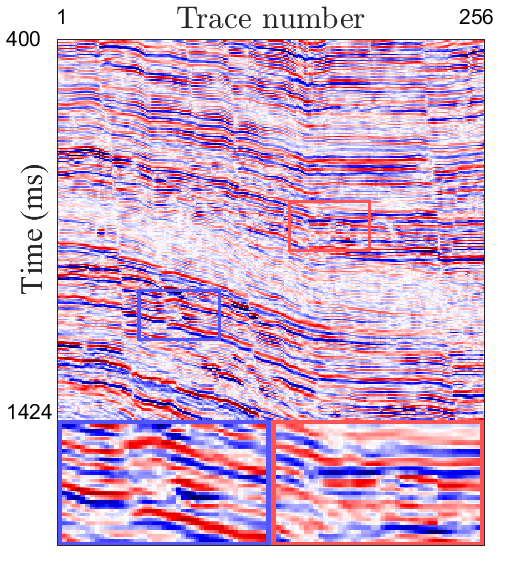}&
			\includegraphics[width=0.12\textwidth]{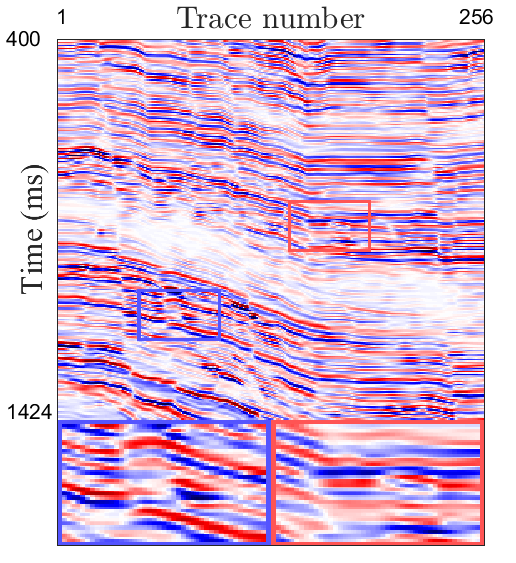}&
			\includegraphics[width=0.12\textwidth]{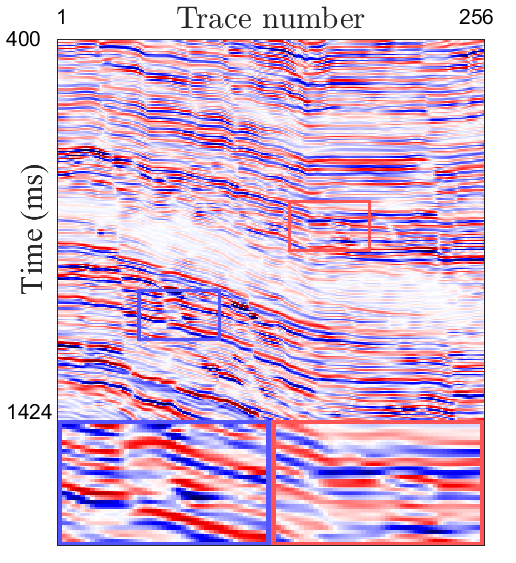}&
			\includegraphics[width=0.12\textwidth]{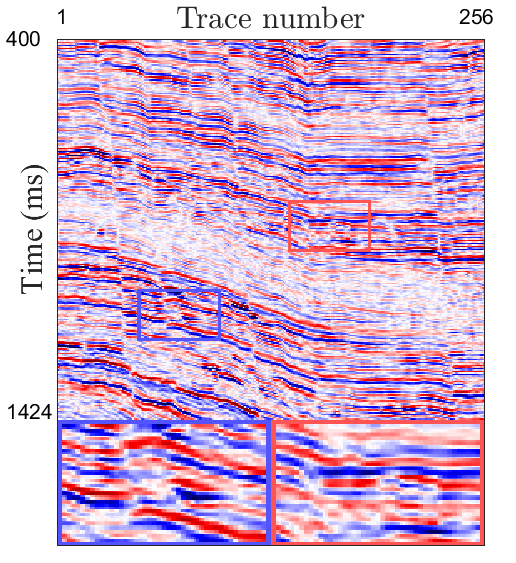}&
			\includegraphics[width=0.12\textwidth]{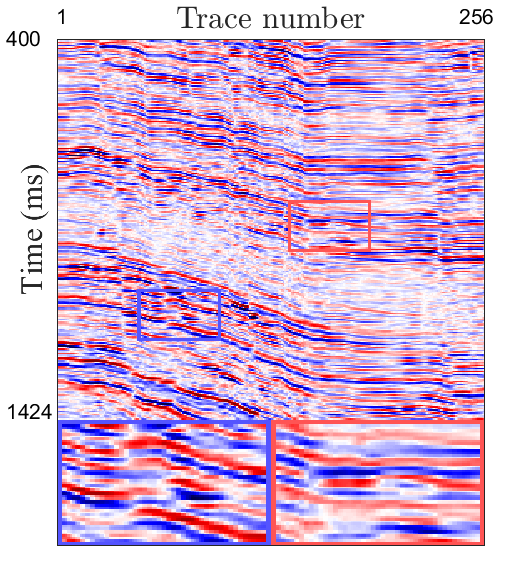}&
			\includegraphics[width=0.12\textwidth]{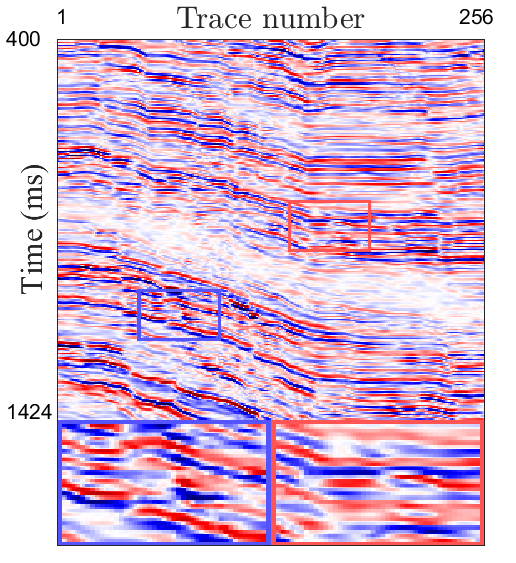}&
			\includegraphics[width=0.12\textwidth]{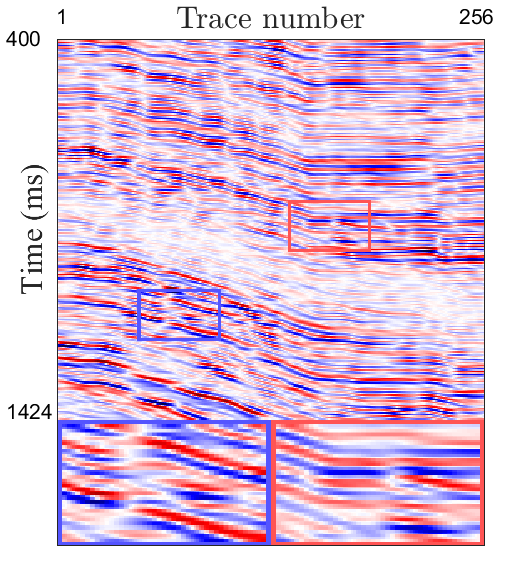}&
			\includegraphics[width=0.12\textwidth]{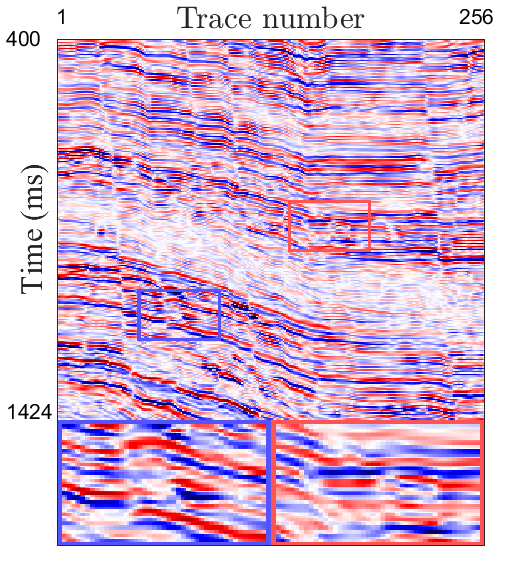}\\
			&
			\includegraphics[width=0.12\textwidth]{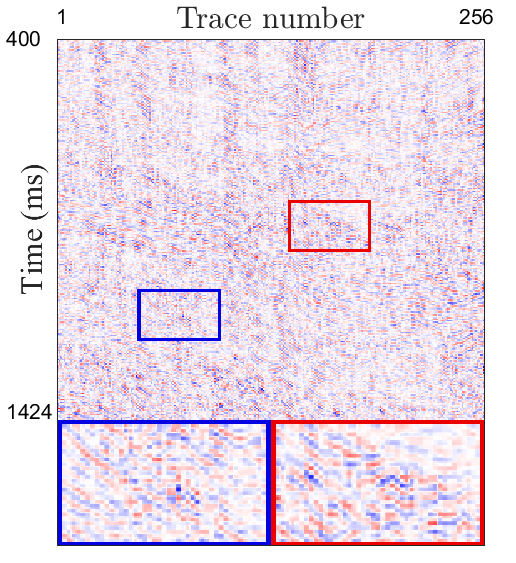}&
			\includegraphics[width=0.12\textwidth]{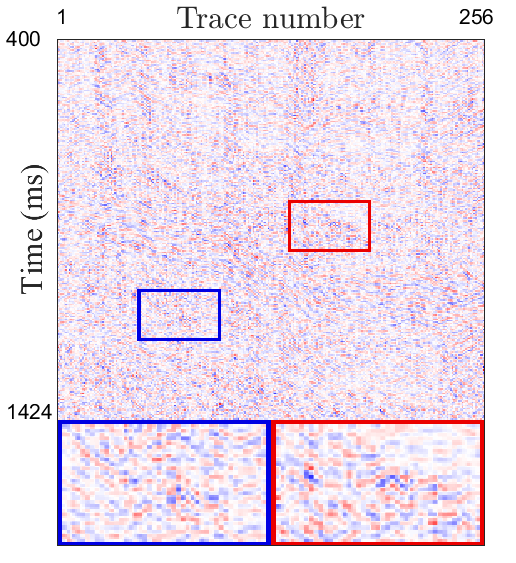}&
			\includegraphics[width=0.12\textwidth]{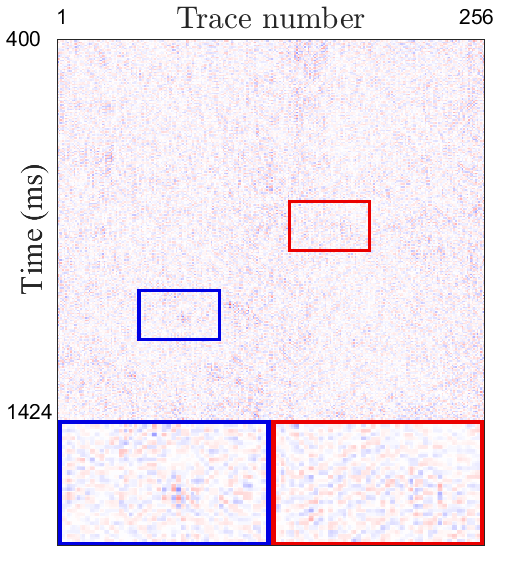}&
			\includegraphics[width=0.12\textwidth]{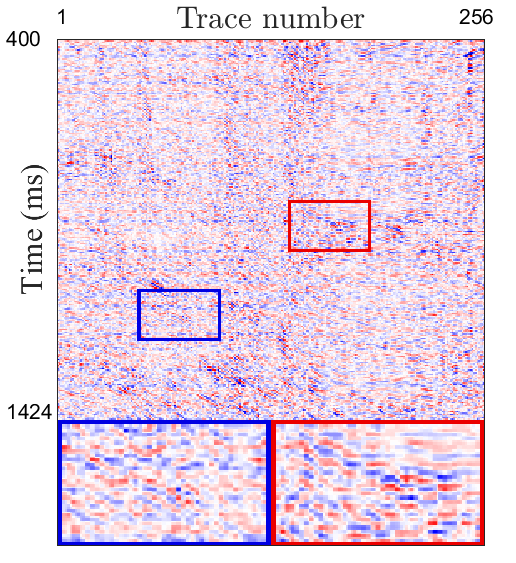}&
			\includegraphics[width=0.12\textwidth]{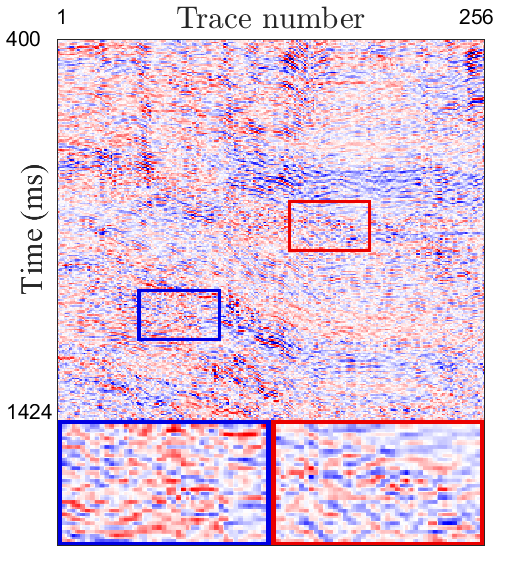}&
			\includegraphics[width=0.12\textwidth]{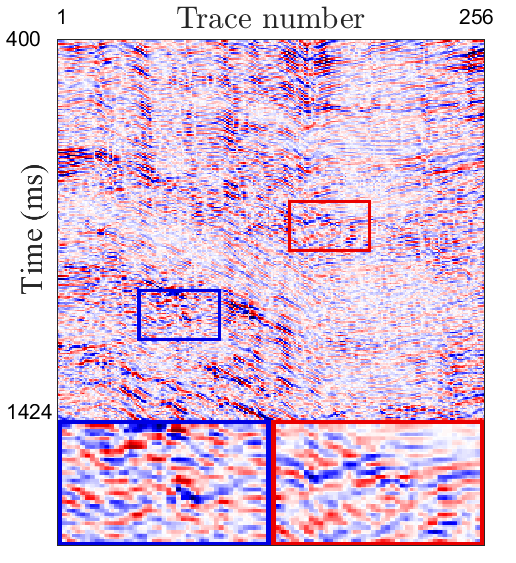}&
			\includegraphics[width=0.12\textwidth]{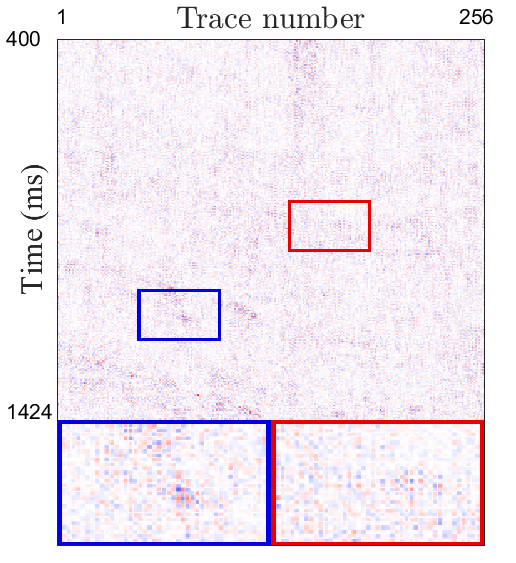}\\
			&LS 0.356&LS 0.354&LS 0.212&LS 0.290&LS 0.226&LS 0.289&LS 0.182\\
			&Time 5s&Time 575s&Time 5s&Time 68s&Times 889s&Time 794s&Time 224s\\
			Noisy&BM3D\cite{BM3D}&WNNM\cite{WNNM}&MSSA\cite{MSSA}&DDAE\cite{DDAE}&DIP\cite{DIP}&PATCHUNET\cite{GP_PATCHUNET}&S2S-WTV\\
			\vspace{-0.6cm}
		\end{tabular}
	\end{center}
	\caption{The first two 2-D slices of the noise attenuation results by different methods (the first and third rows) and the corresponding residual maps between the noisy data and denoising results (the second and fourth rows) on field noisy seismic data {\it Kerry}.\label{fig_results_real_3}}
\end{figure*}
\subsection{Experimental Results}
The quantitative and qualitative results on synthetic noisy seismic datasets are shown in Tables \ref{tab_denoising}-\ref{tab_denoising_2} and Figs. \ref{fig_results_1}-\ref{fig_results_3}. We can observe that for different datasets and noises, our S2S-WTV consistently obtains the best quantitative results over compared methods. The superiority of our method is mainly due to the tacit combination of self-supervised CNN and WTV regularizer. The deep CNN ensures the representation abilities of our method to capture the details of the geological structure, while the WTV brings better generalization abilities to cope with different types of noises. As shown in Figs. \ref{fig_results_1}-\ref{fig_results_3}, our S2S-WTV can commendably remove the random noise and preserve the details of the geological structure as much as possible. In contrast, other methods either can not totally remove random noise (e.g., MSSA) or may damage the signal part as shown in the residual maps (e.g., DIP and PATCHUNET). Here, the supervised method DDAE fails to generalize over our testing data because DDAE is learned based on another domain of noisy-clean pairs, and its generalization performance is relatively poorer for the out-of-distribution testing data. As compared, our S2S-WTV is a self-supervised method that does not depend on pairs of training data and thus has better generalization abilities over different domains of data. The WTV regularizer further enhances the generalization capabilities and robustness of our method w.r.t. noise. In total, our method shows the best performance among all methods for synthetic seismic data denoising.\par
The denoising results (along with the average LS value and running time) on field noisy seismic data are shown in Fig. \ref{fig_results_real_0}, \ref{fig_results_real_1}, \ref{fig_results_real_2}, and \ref{fig_results_real_3}. Due to the space limits, we only display the first two common mid point (CMP) gathers (for pre-stack seismic data) or the first two 2-D slices (for post-stack seismic data) of the denoising results. We can see that our S2S-WTV can well remove random field noise in the seismic data and also preserve the fine details of the geological structure. The promising performances of our method on field data are attributed to the organic combination of the self-supervised deep prior and the hand-crafted WTV regularizer, which can help capture the fine details of the signal and robustly attenuate random noise. Meanwhile, from the running time comparisons, we can observe that our method is more efficient than other self-supervised methods DIP\cite{DIP} and PATCHUNET\cite{GP_PATCHUNET} since we employ the fine-tuning strategy to deal with the field high-dimensional seismic data, which largely accelerates the self-supervised learning process. In summary, our method is more effective than the compared state-of-the-art methods for field seismic data noise attenuation according to the extensive experimental results.
\section{Discussions}\label{Sec_dis}
In this section, we discuss the influences of several building blocks in our S2S-WTV to help deeper understand the insight and philosophy of our method. 
\subsection{Influences of WTV Regularization}
The WTV regularization is a critical building block in our method since it ensures the effectiveness and generalization abilities for noise removal. To test its effectiveness, we compare S2S\cite{S2S}, S2S with TV regularization (termed as S2S-TV), and the proposed S2S-WTV. The S2S-TV is implemented by fixing the weight matrix $\bf W$ in our model (\ref{loss_WTV}) as $\bf 1$. The results are shown in Fig. \ref{fig_WTV}. We can see that S2S can not totally attenuate random noise, while S2S-TV produces over-smoothness, which damages the signal part as shown in the residual map. In contrast, S2S-WTV is more effective to remove noise and preserve the signal details as much as possible. This is because the WTV regularizer assigns different weights of the TV to different elements, and thus can help better preserve the details and edges of the geological structure. As compared. the simple TV regularization can not distinguish between smooth components and fine details/edges, and thus would produce over-smoothness. These results sufficiently validate the importance of WTV regularization in our method.
\subsection{Influences of Masking Strategies}
The trace-wise masking strategy is a crucial technique to enhance the effectiveness of our S2S-WTV for seismic data noise attenuation. To verify its effectiveness, we test our method with different masking strategies, i.e., element-wise masks, row-wise masks, and trace (column)-wise masks on the field noisy seismic data {\it X}. The results are shown in Fig. \ref{fig_mask}. We can observe that the trace-wise masking strategy shows advantageous performances over other methods. This is because the adjacent traces of seismic data share similar structures and thus it is easier to use the unmasked traces to predict the masked traces (as compared with using the unmasked rows/elements to predict the masked rows/elements), which improves the effectiveness of attenuating random noise among highly correlated traces.
\begin{table}[t]
	\caption{The quantitative results by our method with different convolutional operators on synthetic noisy seismic datasets (Gaussian noise with $\sigma = 0.1$).\label{tab_conv}}\vspace{-0.4cm}
	\begin{center}
		\scriptsize
		\setlength{\tabcolsep}{5pt}
		\begin{spacing}{1.2}
			\begin{tabular}{cccccc}
				\toprule
				Dataset&Metric&Observed&Conv.&Partial Conv.&MGRConv\\
				\midrule
				\multirow{3}*{\it Dataset (1)}&PSNR&20.04&33.91&\underline{34.20}&\bf34.67\\
				~&SSIM&0.851&0.985&\underline{0.988}&\bf0.990\\
				~&LS&\--\--&0.115&\underline{0.110}&\bf0.080\\
				\midrule
				\multirow{3}*{\it Dataset (2)}&PSNR&20.01&32.08&\underline{32.36}&\bf33.23\\
				~&SSIM&0.812&0.980&\underline{0.982}&\bf0.987\\
				~&LS&\--\--&0.115&\underline{0.109}&\bf0.093\\
				\bottomrule
			\end{tabular}
		\end{spacing}
		\vspace{-0.3cm}
	\end{center}
\end{table}   
\subsection{Influences of Convolutional Operator}
In our method, we have employed the MGRConv \cite{MGRConv} in the encoding block of the CNN for seismic data denoising. To test its influence, we change the convolutional operator to standard convolution and partial convolution\cite{S2S} and report the corresponding results; see Table \ref{tab_conv}. The results show that MGRConv has better performances under the proposed S2S-WTV framework, which is consistent with the results in existing literature\cite{MGRConv}. In future research, it is interesting to explore more powerful convolutional technique or more effective CNN structure to further enhance the performance of S2S-WTV for seismic data denoising.
\subsection{Influences of Hyperparameters}
We next test the influences of different hyperparameters in our method. These hyperparameters include the mask rate of the trace-wise masks, the dropout rate in the decoding blocks of the CNN, the hyperparameters $\gamma$ amd $\mu$, the sampling number at the inference stage (i.e., $P$), and the iteration number of the ADMM-based algorithm. We change each of these hyperparameters and fix others to test their respect effects. The results are shown in Figs. \ref{fig_hyper} and \ref{fig_iter}. According to Fig. \ref{fig_hyper}, we can observe that our method is robust w.r.t. these hyperparameters since our method can obtain satisfactory PSNR values for a wide range of hyperparameter values. From Fig. \ref{fig_iter}, we can observe that our S2S-WTV is stable w.r.t. iterations, while classical DIP\cite{DIP} suffers from overfitting. Meanwhile, our method is more effective than S2S\cite{S2S}. The promising performances of our S2S-WTV is due to the tacit combination of self-supervised CNN and the WTV regularizer, which takes the advantage of both the representation abilities of CNN and the generalization abilities of hand-crafted regularizer. It is interesting to explore other insightful properties and techniques under such an organic combination framework in our future research.
 \begin{figure*}[t]
	\scriptsize
	\setlength{\tabcolsep}{0.9pt}
	\begin{center}
		\begin{tabular}{c}
			\includegraphics[width=1\textwidth]{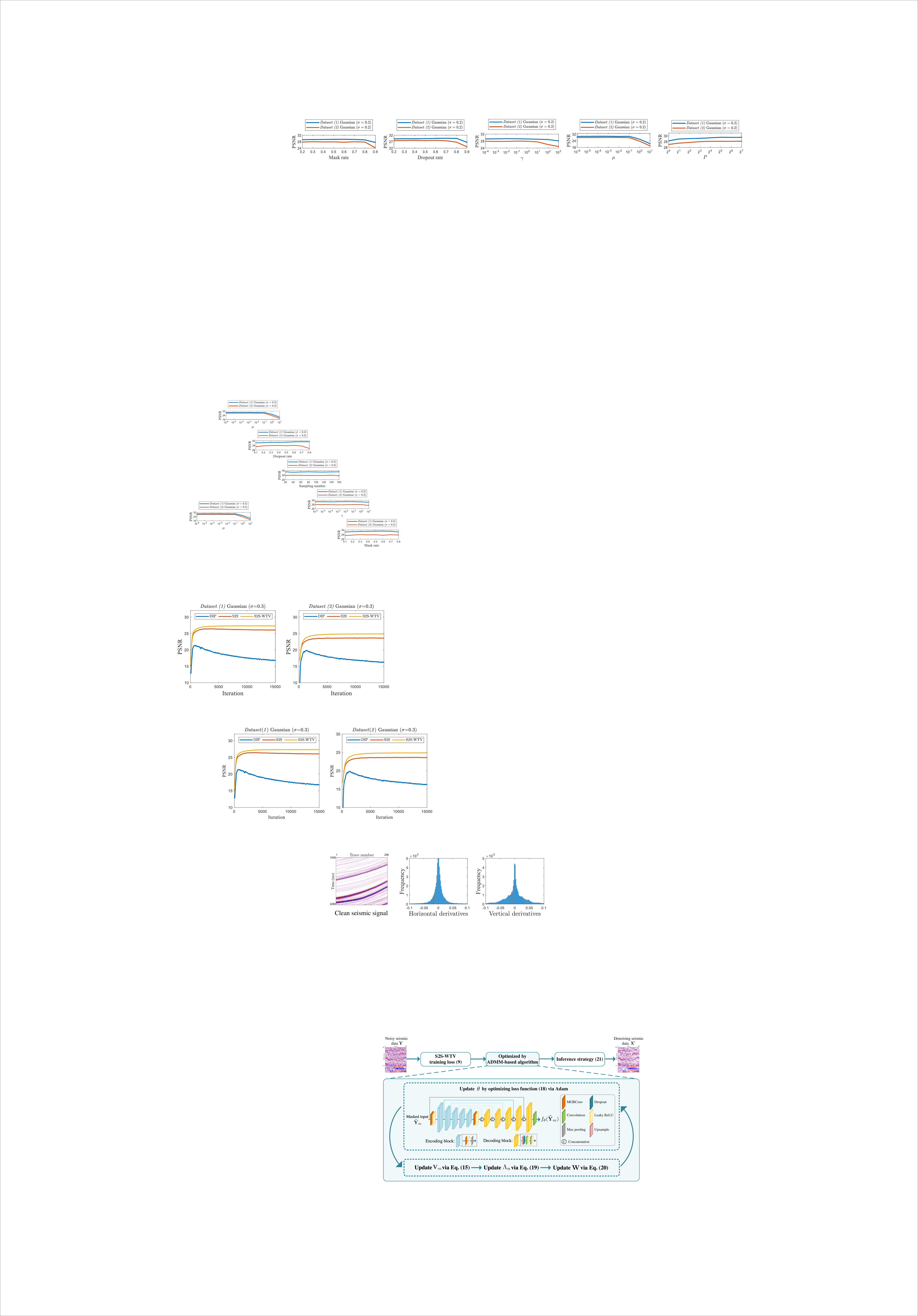}
			\vspace{-0.3cm}
		\end{tabular}
	\end{center}
	\caption{The PSNR value w.r.t. different values of hyperparameters (the mask rate of the trace-wise masks, the dropout rate of the CNN, the hyperparameters $\gamma$ and $\mu$, and the sampling number at the inference stage $P$).\label{fig_hyper}}
	\vspace{-0.2cm}
\end{figure*}
 \begin{figure}[t]
	\scriptsize
	\setlength{\tabcolsep}{0.9pt}
	\begin{center}
		\begin{tabular}{c}
			\includegraphics[width=0.43\textwidth]{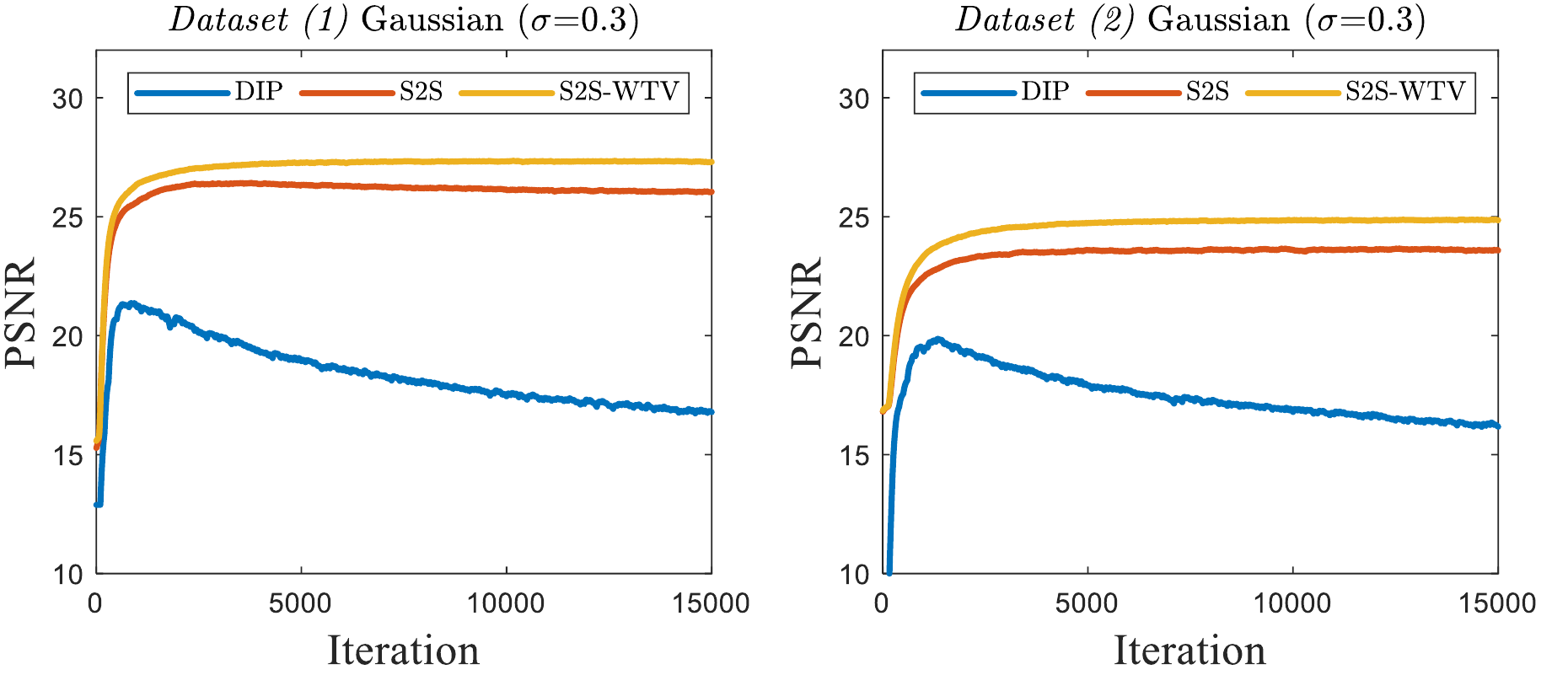}
			\vspace{-0.3cm}
		\end{tabular}
	\end{center}
	\caption{The PSNR value w.r.t. the iteration number by using DIP\cite{DIP}, S2S\cite{S2S}, and the proposed S2S-WTV on synthetic noisy seismic data. Our method is more effective and stable.\label{fig_iter}}
	\vspace{-0.4cm}
\end{figure}
\section{Conclusions}\label{Sec_con}
In this work, we propose a self-supervised seismic data noise attenuation method, named S2S-WTV, which can effectively and stably attenuate random noise in seismic data by solely using the observed noisy data without additional training data. Our method elegantly integrates the S2S learning and hand-crafted WTV regularizer to achieve both high representation abilities and generalization abilities. Thus, our method can commendably characterize the fine details of geological structures and stably handle different types of seismic data and noises. We elaborately design a trace-wise masking strategy and a fine-tuning procedure to make the self-supervised learning paradigm more effective and efficient for seismic data denoising. Finally, we introduce an ADMM-based algorithm to address the S2S-WTV optimization model. Vast experiments on synthetic and field noisy data demonstrate the superiority of our method over state-of-the-art traditional and deep learning seismic data denoising methods. 
\bibliographystyle{ieeetran}
\bibliography{ref}
\end{document}